\pdfoutput=1

\documentclass[twocolumn,a4paper,9pt]{extarticle}

\usepackage[a4paper,textwidth=504pt,textheight=682pt,centering]{geometry}
\usepackage{newtxtext,newtxmath}
\usepackage[T1]{fontenc}
\usepackage{natbib}
\bibpunct[, ]{(}{)}{;}{a}{}{,}

\InputIfFileExists{cil-preamble.tex}{}{\input{../cil-preamble}}
\newcommand{\CILTITLE}{CLASSO: a practical primal path-following
algorithm for the constrained LASSO}

\newcommand{\CILKEYWORDS}{Constrained lasso; solution path; active-set
method; parametric quadratic programming; Karush--Kuhn--Tucker
certificate; formal verification}

\newcommand{\REPOFOOTNOTE}{\footnote{Release commit \texttt{ff9681a}, tagged \texttt{classo-paper-2026-10},
available from the authors on reasonable request pending public
release.}}
\newcommand{\REPOWHERE}{the released code}

\newcommand{\CILBIBLIOGRAPHY}{%
  \bibliographystyle{tfcad}%
  \small
  \bibliography{../cil}}

\title{\bfseries \CILTITLE}

\author{%
Michael Hobson, Luke Butcher and Lindley Lentati\\[3pt]
\normalsize Cambridge Inference Ltd, 95 Hills Road, Cambridge CB2 1AW, UK\\[1pt]
\normalsize\texttt{\{mph,\,lmb,\,ltl\}@cambridgeinference.com}}

\date{}

\begin{document}

\makeatletter
\twocolumn[%
  \begin{@twocolumnfalse}
  \maketitle
  \begin{center}
  \begin{minipage}{0.88\textwidth}
  \begin{abstract}
  \noindent
  We present CLASSO, a path-following method for the LASSO subject to
general linear equality and inequality constraints. The local
direction is obtained from a quadratic programme on an augmented
active set, which determines both the coefficients that leave zero
and the weakly active inequalities that remain binding. For a
positive-definite Hessian and a nonempty feasible set, this
programme has a unique primal minimiser, even when the binding
constraint rows are dependent, and gives the local derivative of
the solution path. The construction recovers the LASSO-modified
LARS direction without constraints, and over the full cone gives
the segment leaving the origin. We prove that the primal path is
piecewise affine with finitely many kinks, without a constraint
qualification or unique multipliers. A finite path does not
imply a finite traversal: re-certifying at each kink can fail to
terminate. CLASSO instead follows the
minimum-norm certificate along a lexicographic rate; we prove that
this rule reaches $\lambda=0$ in exact arithmetic within an
explicit bound and traverses the whole path. These results are
machine-checked in Lean~4, and the floating-point implementation
agrees with an exact rational one on all $210$ small test
problems. We assess the implementation on a $39$-case
synthetic suite and six constrained versions of the diabetes
benchmark. On the $31$ nontrivial synthetic cases with $n\ge p$,
sampled relative objective differences from numerical reference
solves are below $10^{-9}$ in magnitude; unnormalised endpoint
differences reach $1.5\times10^{-7}$ in magnitude. For $n<p$, matched differences range
from $-3.6\times10^{-7}$ to $2\times10^{-9}$ across five ridge
strengths. Across the synthetic and diabetes benchmarks, checks
of all $580$ recorded segments found no missed interior event and
all checked KKT residuals within the stated tolerances.

  \smallskip
  \noindent\textbf{Keywords:} \CILKEYWORDS
  \end{abstract}
  \end{minipage}
  \end{center}
  \bigskip
  \end{@twocolumnfalse}%
]
\makeatother

\section{Introduction}\label{sec:intro}

The LASSO \citep{Tibshirani1996} estimates the coefficients of a
linear regression model by minimising a least-squares loss together
with an $L_1$ penalty, which encourages sparsity by setting some
coefficients to zero. In many applications, however, the
coefficients must also satisfy constraints arising from the
interpretation of the model or from prior information. These may
include non-negativity of concentrations or mixture weights,
sum-to-zero conditions in compositional regression, or more general
linear relationships between the coefficients. One is thus led to
the constrained LASSO, in which the usual objective is minimised
subject to linear equality and inequality constraints.

Provided the Hessian is positive definite and the feasible set is
nonempty, the constrained problem has a unique solution at each
value of the regularisation parameter $\lambda$, and the solution
path is continuous and piecewise affine. One may approximate this
path by solving a quadratic programme (QP) on a discrete
$\lambda$-grid, but the resulting resolution depends on the grid
spacing, and the kink locations are not determined directly.
Alternatively, a path-following method computes the successive
affine segments and the values of $\lambda$ at which their
directions change. For the unconstrained LASSO, this is the
approach adopted by the LASSO-modified LARS algorithm of
\citet{Efron2004}.

A path-following method for the constrained LASSO was proposed by
\citet{Gaines2018}, who obtain the local direction from a
saddle-point linear system on the active coefficients and binding
constraints. The invertibility of this system requires the
binding-constraint matrix restricted to the active coefficients to
have full row rank. This condition, stated by \citet{Gaines2018}
and standard in saddle-point linear algebra \citep{Benzi2005},
fails frequently in the examples considered here. In particular,
a binding non-negativity constraint is supported on a coefficient
that is zero, so its restriction to the non-zero coefficients is
identically zero.

The difficulty is not resolved merely by replacing the inverse
with a rank-aware linear solve. The equations may remain
consistent, but imposing every binding row as an equality can
prevent a coefficient from leaving zero when the corresponding
inequality should become slack. Thus, before computing the
direction by linear algebra, one must determine which boundary
coordinates move and which weakly active inequalities remain
binding. Our comparison with the Gaines construction concerns
this local binding-block calculation; it does not reproduce the
complete algorithm of \citet{Gaines2018}, including its
conditional rule for admitting boundary coordinates.

In this paper we present CLASSO, which determines these choices
by solving a directional QP on an augmented active set. The QP
has a unique primal minimiser even when the binding rows are
linearly dependent, although its multipliers need not be unique.
We show that, at an exact KKT point, this minimiser gives the
local direction of the constrained-LASSO path. Over the full cone
the same QP gives the segment leaving the origin, so no separate
entry construction is needed. We also give a lexicographic
initialisation at the largest value of $\lambda$, and the
two-sided subgradient test required for coordinates held at zero
within the augmented active set. In the unconstrained setting the
construction recovers the LASSO-modified LARS direction.

An important practical consideration is that errors in the local
direction affect the predicted event times and hence the state
used on the next segment. We therefore use the directional QP
to identify a candidate release decision, refine this decision by
KKT sign tests, and recover the direction by solving the linear
system for the resulting state. At each recorded point the
coefficients and the certificate are recomputed rather than
propagated, so that error does not accumulate. These operations are exact
linear-algebraic constructions in the mathematical algorithm;
their numerical implementation remains subject to rounding,
conditioning and the tolerances described below.

The local results must be distinguished from a guarantee that a
traversal reaches the end of the path. We prove that the primal
path has finitely many affine pieces, without requiring a
constraint qualification, strict complementarity or unique
multipliers. Finitely many pieces do not by themselves make a
traversal finite: the multipliers are generally not unique, and a
rule that recovers a fresh certificate at each kink can record
infinitely many events along a single primal segment, as we show
by counterexamples. CLASSO instead follows a canonical
certificate, the minimum-norm valid multiplier, along a
lexicographic rate, and records a further event when that
certificate changes face. We prove that this rule reaches
$\lambda=0$ in exact arithmetic within an explicit, if
exponential, number of steps, and that, started from the
lexicographic initialisation, it traverses the whole path. These
results and the counterexamples are machine-checked in Lean~4.
That the floating-point implementation follows the exact rule is
assessed numerically: its propagated certificate is checked against
the canonical one at every step, and its output against an exact
rational implementation.

We assess the numerical implementation on a $39$-case synthetic
suite and six constrained versions of the diabetes benchmark.
For the synthetic problems with $n\ge p$, the computed objective
exceeds a numerical reference solve by at most $3.9\times10^{-10}$
relative at any sampled $\lambda$ along the path. At $\lambda=0$, signed unnormalised differences
reach $-1.5\times10^{-7}$, with CLASSO returning the lower computed
objective. The six
$n<p$ problems are solved after ridge augmentation, and comparison
with numerical reference solves of the same ridge-augmented objective
gives matched differences ranging from $-3.6\times10^{-7}$ to
$2\times10^{-9}$ over five prescribed ridge strengths.
The effect of the ridge on the original problem is reported
separately. We also verify the KKT conditions throughout the $580$
recorded affine segments, rather than only at sampled points. That
verification found no missed interior event.
On the diabetes examples, CLASSO uses $30$--$84$ times
fewer LP and QP solves than a $1000$-point sweep, with
wall-clock speedups of $4$--$20$; a separate study
examines scaling up to $p=5000$.

The paper is organised as follows. Section~\ref{sec:related}
reviews related methods, and Section~\ref{sec:problem} introduces
the problem and notation. Sections~\ref{sec:gaines_prior}
and~\ref{sec:new_approach} describe the previous linear-system
construction and the CLASSO algorithm, respectively, the latter
comprising the directional QP, the certificate rule, the
initialisation and a summary in pseudocode. The theoretical
results and their formal verification are presented in
Section~\ref{sec:theory}, and the numerical implementation in
Section~\ref{sec:implementation}. Sections~\ref{sec:synthetic}
and~\ref{sec:diabetes_app} report the experiments, and
Section~\ref{sec:conc} gives our conclusions. Appendix~\ref{sec:safeguards}
describes the finite-precision failure mechanisms motivating the
implementation, while Appendix~\ref{sec:branch_uniqueness}
proves the local continuation and safe-hold propositions.

\section{Related work}\label{sec:related}

The piecewise-affine structure of the constrained-LASSO path is
closely related to classical parametric quadratic programming
\citep{Best1996,Tondel2003}. Within a region where the relevant
active constraints remain fixed, the optimality conditions give
an affine dependence on the parameter; changes of region give the
breakpoints. This observation also underlies explicit
model-predictive control. Our contribution is a directional-QP
construction for the constrained LASSO and its numerical
implementation, rather than the piecewise-affine structure itself.

A recent connection with portfolio optimisation is provided by
\citet{Schmelzer2026}, who relate the constrained LASSO to the
Critical Line Algorithm of \citet{Markowitz1956}. Under
$\bm{\Sigma}=\bm{X}^{\rm t}\bm{X}$ and
$\bm{\mu}=\bm{X}^{\rm t}\bm{y}$, the objectives agree up to an
additive constant, with the parameters related by
$c(\lambda)=\lVert\bm{\beta}(\lambda)\rVert_1$. Breakpoints
correspond one-to-one on intervals where this map is strictly
decreasing. Their four event families are a free coordinate
reaching a bound, an inequality becoming tight, a blocked
coordinate releasing, and a binding inequality's multiplier
reaching zero. These correspond to our events \textbf{E1},
\textbf{E3}, \textbf{E2} and \textbf{E4}. Their coordinate-release
test monitors both margins of $[-\lambda,\lambda]$; our
\textbf{E5} denotes the second margin when it is approached by a
coordinate held within the augmented active set. Our \textbf{E6},
a change of face of the certificate, has no counterpart there: it
changes the multipliers' rate but not the primal path.

The assumptions in the two treatments differ.
\citet{Schmelzer2026} permit $\bm{H}\succeq\bm{0}$ provided the
reduced Hessian is positive definite on each visited free block,
which can include unregularised problems with $n<p$. Here we
assume $\bm{H}\succ\bm{0}$, using ridge augmentation when needed.
Their curve correspondence assumes general position and LICQ
at the visited active sets, whereas our structural theorem and
local continuation result do not require a constraint
qualification. The distinction is relevant when binding rows are
dependent, as occurs frequently under the sign constraints
examined in Section~\ref{sec:gaines_comparison}.

For unconstrained regression, the LASSO-modified LARS algorithm
of \citet{Efron2004} traces the full path by monitoring coefficient
entries and departures. Related methods for the generalised
LASSO \citep{Tibshirani2011,Tibshirani2012}, implemented in
\texttt{genlasso} \citep{Arnold2016}, use the dual formulation
of a penalty $\lVert\bm{D}\bm{\beta}\rVert_1$. Their formulation
differs from one in which general equality and inequality
constraints are imposed directly on $\bm{\beta}$.

Two earlier path-following constructions address the constrained
LASSO directly. \citet{He2011} uses the decomposition
$\bm{\beta}=\bm{\beta}_+-\bm{\beta}_-$, with
$\bm{\beta}_\pm\ge\bm{0}$, thereby doubling the number of
coefficient variables; the treatment is given for $n\ge p$.
\citet{Gaines2018} works in the native $p$-dimensional parameter
space, permits ridge augmentation for $n<p$, and obtains the local
direction from a saddle-point KKT system. CLASSO also works with
the native coefficients, but determines the direction through a
QP on the augmented active set, whose primal well-posedness does
not require independent binding rows.

Several methods address related formulations or compute solutions
on a grid. The equality-constrained solver of \citet{Tran2024}
combines active-set screening with an ADMM/Newton inner solver at
prescribed values of $\lambda$. MM-DUST \citep{Chen2025} is a
majorisation--minimisation dual stagewise method for the
generalised LASSO, which includes constrained-LASSO problems and
extends beyond Gaussian regression, but advances in small fixed
steps rather than directly between exact breakpoints.
The package \texttt{ConstrainedLasso.jl} \citep{Gaines2018}
provides a constrained path-following implementation using
\texttt{Convex.jl}, while \texttt{c-lasso}
\citep{Simpson2021} supports equality constraints and several
losses, including square, Huber and classification losses.
Its documented constraints have the form $\bm C\bm\beta=\bm0$;
this should be distinguished from general affine equalities.
It does not address inequality constraints.

The \emph{comlasso} algorithm of \citet{Jeon2020} exploits
block-disjoint sum-to-zero constraints in compositional
regression, using a separate dual-feasibility check for each
block. Its developed formulation does not include inequalities
or general couplings between blocks, although the authors
discuss a possible extension using linear programming.
EPSODE \citep{Zhou2014} follows regularisation paths through an
ordinary differential equation, allowing equality and inequality
violations to be penalised through
$\rho\|\bm{V}\bm{\beta}-\bm{d}\|_1$ and
$\rho\|\bm{W}\bm{\beta}-\bm{e}\|_+$. Under the exact-penalty conditions of that work, the hard
constrained solution is attained at a sufficiently large finite
$\rho$. The path parameter nevertheless differs from the
$\lambda$ considered here, for which the linear constraints
are imposed throughout. The penalised-and-constrained regression method of
\citet{James2013} operates on a tuning-parameter grid;
\citet{Hu2015} gives a further related constrained
variable-selection construction.

For a fixed $\lambda$, proximal and operator-splitting methods
provide an alternative to path-following. Examples include
ISTA and FISTA \citep{Beck2009}, ADMM \citep{Boyd2011}, and
the methods reviewed by \citet{Parikh2014}. Suitable splittings
separate the loss, penalty and constraints into tractable
subproblems, making these approaches attractive when only a
small number of fits is required or the coefficient dimension
makes dense factorisations expensive. Although such solvers can
supply optimality residuals and dual information at the requested
values of $\lambda$, they do not in themselves identify the
complete breakpoint sequence.

The appropriate comparison therefore depends on the required
output. A grid of independently solved problems may suffice
for model selection or an approximate display of the path,
whereas a path-following method also determines its affine
segments and active-set transitions. CLASSO is intended for
the latter task at moderate dimensions. Its cost depends on the
number of iterations actually performed, which need not equal
the number of primal kinks in degenerate cases;
Section~\ref{sec:termination} bounds it.
Table~\ref{tab:related_work_summary} summarises the formulations
and outputs of the most closely related methods.

\begin{table*}
\centering
\caption{Scope summary of constrained-LASSO path-following
methods most directly comparable to CLASSO. ``eq'' / ``ineq''
denote support for linear equality / inequality constraints;
\emph{hom.} denotes general homogeneous equalities, as documented for
\texttt{c-lasso}, rather than arbitrary non-zero right-hand sides;
``general linear'' indicates that arbitrary linear rows are
admitted (vs constraints of a structured form, e.g. block
sum-to-zero); ``exact path'' denotes a piecewise-affine kink
sequence rather than a grid approximation; ``$n<p$'' denotes
support for the underdetermined regime (with a small ridge
augmentation where applicable); ``dual cert.'' indicates that
KKT multipliers are recovered at every kink. A non-checkmark
entry such as ``$D$-form'', ``sum-zero'', ``block'' or ``penalty''
indicates that the corresponding constraint type is supported only
in a restricted, structured, or penalty-form sense rather than as
a hard linear constraint at every $\lambda$.}
\label{tab:related_work_summary}
\smallskip
\CILFITWIDTH{%
\begin{tabular}{lcccccc}
\toprule
Method & eq & ineq & gen.\ lin. & exact path & $n<p$ & dual cert. \\
\midrule
LARS / LASSO-modified LARS \citep{Efron2004} & --- & --- & --- & \checkmark & \checkmark & \checkmark \\
genlasso \citep{Tibshirani2011,Arnold2016}   & $D$-form & --- & structured & \checkmark & ridge & \checkmark \\
\citet{He2011}                                & \checkmark & \checkmark & \checkmark & \checkmark & --- & \checkmark \\
\citet{Gaines2018}                            & \checkmark & \checkmark & \checkmark & \checkmark$^\ast$ & ridge & \checkmark \\
comlasso \citep{Jeon2020}                     & sum-zero & --- & block & \checkmark & \checkmark & \checkmark \\
c-lasso \citep{Simpson2021}                   & hom. & --- & hom. & \checkmark & \checkmark & \checkmark \\
EPSODE \citep{Zhou2014}                       & penalty & penalty & \checkmark & ODE & \checkmark & --- \\
\citet{Tran2024}                              & \checkmark & --- & \checkmark & grid & \checkmark & --- \\
\textbf{CLASSO (this paper)}                  & \checkmark & \checkmark & \checkmark & \checkmark$^\dagger$ & ridge & \checkmark \\
\bottomrule
\end{tabular}}

\medskip
\footnotesize

$^\ast$The explicit inverse in the Gaines direction step requires
the rank condition discussed in
Section~\ref{sec:gaines_prior}. The comparison in
Section~\ref{sec:gaines_comparison} concerns that local
linear-system construction, rather than a reproduction of the
complete Gaines algorithm.

$^\dagger$CLASSO constructs affine segments rather than a grid
approximation; its traversal is proved to terminate in exact
arithmetic (Section~\ref{sec:termination}). For $n<p$, it
traces the ridge-augmented problem. The entries in this table
describe formulations and outputs, not guarantees of numerical
accuracy or termination for every admissible input.

\end{table*}

\section{Problem formulation}\label{sec:problem}

We consider the constrained optimisation problem
\begin{equation}
\min_{\bm{\beta}} \tfrac{1}{2}\lVert \bm{y}-\bm{X}\bm{\beta} \rVert_2^2
+ \lambda \lVert \bm{\beta} \rVert_1\quad\mbox{subject
to}\quad \bm{A\beta} \preceq  \bm{b}, 
\label{eq:CLASSOprobdef}
\end{equation}
where $\bm{y}$ is an $n$-vector, $\bm{X}$ an $n\times p$ design
matrix, $\bm{\beta}$ a $p$-vector of coefficients, and
$\lambda\ge0$ the regularisation parameter. The constraint
matrix $\bm{A}$ has size $c\times p$ and $\bm{b}$ is a
$c$-vector. We use $\preceq$ to indicate that each row is
specified either as an equality or as a less-than-or-equal
inequality; $\le$ denotes inequalities throughout.

Unless otherwise stated, $\bm{X}$ is assumed to have full
column rank, which requires $n\ge p$, and the feasible set is
nonempty. For $n<p$, we follow the ridge-augmentation approach
discussed in the literature
\citep{Tibshirani2011,Tibshirani2012,Hu2015,Arnold2016,Gaines2018}
and add $\tfrac12\epsilon\lVert\bm{\beta}\rVert_2^2$ to the
objective, with $\epsilon=10^{-4}\max_i H_{ii}$ unless stated
otherwise. This makes the Hessian positive definite, but also
changes the optimisation problem. Consequently, we distinguish
the accuracy of the computed ridge-augmented path from its
difference from the unregularised solution. The same ridge is
used in initialisation, direction calculation, certification,
event detection and the matched reference solves.

The augmented problem may again be written in the form
(\ref{eq:CLASSOprobdef}) by appending $p$ zero observations
to $\bm{y}$ and the rows of $\sqrt{\epsilon}\bm{I}_p$ to
$\bm{X}$. In subsequent expressions, $\bm{X}$ and the
corresponding Hessian are understood to include this
augmentation where applicable.

The Karush--Kuhn--Tucker (KKT) conditions
\citep{Karush1939,Kuhn1951} extend the method of Lagrange
multipliers to inequality constraints. For the convex problem
considered here, stationarity, primal and dual feasibility, and
complementary slackness characterise the optimum. The linear
constraints permit the required multipliers to be obtained without
an additional constraint qualification, as discussed in
Section~\ref{sec:theory}.

For (\ref{eq:CLASSOprobdef}), the corresponding Lagrangian is 
\begin{equation}
f_\lambda(\bm{\beta},\bm{\mu})
= \tfrac{1}{2}\lVert\bm{y}-\bm{X}\bm{\beta}(\lambda)\rVert_2^2
+ \lambda \lVert \bm{\beta}(\lambda) \rVert_1 + \bm{\mu}^{\rm
t}(\lambda)
[\bm{A\beta}(\lambda) - \bm{b}],
\label{eq:KKTfdef}
\end{equation}
where $\bm{\mu}(\lambda)$ is a $c$-vector of KKT multipliers and we
have made explicit which variables on the RHS depend on $\lambda$. The multipliers\footnote{In this paper, we adopt
(a slightly modified version of) the notation of \cite{Gaines2018}
whereby for any vector $\bm{v}$ and set of indices $\mathcal{S}$, we
denote by $\bm{v}_\mathcal{S}$ the sub-vector of size $|\mathcal{S}|$
containing the elements of $\bm{v}$ corresponding to the indices in
$\mathcal{S}$. Similarly, for a matrix $\bm{M}$ and another index set
$\mathcal{T}$, we denote by $\bm{M}_{\mathcal{S}\,\mathcal{T}}$ the
sub-matrix containing the rows and columns corresponding to the
indices in $\mathcal{S}$ and $\mathcal{T}$, respectively. We use a
colon, :, when all indices along one of the dimensions are included,
i.e.\ $\bm{M}_{\mathcal{S}\,:}$ contains the rows corresponding to
$\mathcal{S}$ but all of the columns of $\bm{M}$. With some
overloading of the notation, we also denote an element of the vector
$\bm{v}_\mathcal{S}$ or the matrix $\bm{M}_{\mathcal{S}\,\mathcal{T}}$
by $v_\mathcal{S}$ or $M_{\mathcal{S}\,\mathcal{T}}$,
respectively. Further, we denote the complement of the set
$\mathcal{S}$ by $\bar{\mathcal{S}}$.}  $\bm{\mu}_e(\lambda)$
corresponding to equality (=) constraints are free in sign,
whereas the remaining multipliers corresponding to inequality $(\le)$
constraints satisfy $\bm{\mu}_{\bar e}(\lambda)\ge \bm{0}$.  Since the
$L_1$ penalty term in (\ref{eq:KKTfdef}) is convex and
non-differentiable when any of the $\beta$-values is zero, one must
consider the {\it sub}gradient \citep[see, for
example,][]{Rockafellar1970,Bubeck2015} of (\ref{eq:KKTfdef}) with
respect to $\bm{\beta}$, yielding
\begin{equation}
\partial_{\bm{\beta}}f_\lambda(\bm{\beta},\bm{\mu}) = \bm{H\beta}(\lambda)
- \bm{h} + \lambda\bm{g}(\lambda) + \bm{A}^{\rm t}\bm{\mu}(\lambda),
\label{eq:KKTfderiv}
\end{equation}
where $\bm{H}\equiv\bm{X}^{\rm t}\bm{X}$ and
$\bm{h}\equiv\bm{X}^{\rm t}\bm{y}$. Thus $\bm{H}$ is the
Hessian of the quadratic loss, and
$\hat{\bm{\beta}}_{\rm OLS}=\bm{H}^{-1}\bm{h}$ is its
unconstrained minimiser.

The variables $\bm{g}(\lambda)$ in
(\ref{eq:KKTfderiv}) are the subgradients
$\partial_{\bm{\beta}}\lVert \bm{\beta}\rVert_1$ of the $L_1$ penalty
term, with elements
\begin{equation}
\begin{array}{ll}
-1 \le g_p(\lambda) \le 1 & \mbox{if $\beta_p(\lambda) = 0$,} \\
g_p(\lambda) = \mbox{sign}[\beta_p(\lambda)] &
\mbox{otherwise}.
\end{array}
\label{eq:subgrads}
\end{equation}
According to the KKT theorem, the optimal point
$\bm{\beta}(\lambda)$ of (\ref{eq:CLASSOprobdef}) is characterised by
the vanishing of the RHS of (\ref{eq:KKTfderiv}), namely
\begin{equation}
\bm{H\beta}(\lambda)
- \bm{h} + \lambda\bm{g}(\lambda) + \bm{A}^{\rm t}\bm{\mu}(\lambda) = \bm{0},
\label{eq:KKTcond1}
\end{equation}
together with primal feasibility, (\ref{eq:subgrads}), and
dual feasibility and complementary slackness for the inequality
rows. In particular, a slack inequality has zero multiplier,
whereas a binding inequality may have either a positive or a
zero multiplier. The latter case, referred to as weak activity,
is important for determining the outgoing direction.
Differentiating (\ref{eq:KKTcond1}) along a smooth portion of the
path gives
\begin{equation}
\bm{H}\dot{\bm{\beta}}(\lambda) + [\lambda\bm{g}(\lambda)]\dot{} + \bm{A}^{\rm
t}\dot{\bm{\mu}}(\lambda) = \bm{0},
\label{eq:dot}
\end{equation}
where a dot denotes ${\rm d}/{\rm d}\lambda$. In what follows, we will
omit the explicit dependence on $\lambda$ of the variables
$\bm{\beta}$, $\bm{g}$ and $\bm{\mu}$ to avoid notational clutter, but
this dependency should be borne in mind.

\section{Prior path-following approach of Gaines et al.}\label{sec:gaines_prior}

The construction of \citet{Gaines2018} follows the evolution of
the active coefficient set $a$ and the binding constraint set
$z$. The former contains the non-zero coefficients, while the
latter contains all equality rows and the inequality rows that
are saturated. Thus $\bm{\beta}_{\bar a}=\bm{0}$,
$\bm{\mu}_{\bar z}=\bm{0}$, and the elements of $\bm{g}_a$
are fixed by the signs of $\bm{\beta}_a$. The binding
constraints reduce to
\begin{equation}
\bm{A}_{za}\,\bm{\beta}_a = \bm{b}_z.
\label{eq:abcog}
\end{equation}
Restricting (\ref{eq:KKTcond1}) to the active set yields a 
condition relating $\bm{\beta}_a$ and $\bm{\mu}_z$, given by
\begin{equation}
\bm{H}_{aa}\bm{\beta}_a
- \bm{h}_a + \lambda\bm{g}_a + \bm{A}_{za}^{\rm t}\bm{\mu}_z = \bm{0},
\label{eq:KKTcond1a}
\end{equation}
where $\bm{g}_a$ has components $\pm 1$. Conversely, restricting
(\ref{eq:KKTcond1}) to the inactive set yields a condition 
that defines $\bm{g}_{\bar{a}}$, namely
\begin{equation}
\bm{H}_{\bar{a}a}\bm{\beta}_a
- \bm{h}_{\bar{a}} + \lambda\bm{g}_{\bar{a}} + \bm{A}_{z\bar{a}}^{\rm
  t}\bm{\mu}_z = \bm{0}.
\label{eq:KKTcond1abar}
\end{equation}
These equations determine a primal--dual certificate at any
specified $\lambda$. One could therefore solve for such a
certificate separately at each point of a grid,\footnote{A
standard alternative is to solve the original constrained
problem (\ref{eq:CLASSOprobdef}) as a $2p$-dimensional QP at
each requested $\lambda$.} but the affine structure between
changes of state allows the intervening solutions to be obtained
by propagation.

For a path segment along which the active and binding sets remain
unchanged, as do the signs of $\bm{g}_{a}$, \citet{Gaines2018} consider
(\ref{eq:dot}) restricted to the active and binding sets to
obtain a condition relating $\dot{\bm{\beta}}_a$ and $\dot{\bm{\mu}}_z$ given
by
\begin{equation}
\bm{H}_{aa}\dot{\bm{\beta}}_a + \bm{g}_a
+ \bm{A}^{\rm t}_{za}\dot{\bm{\mu}}_z = \bm{0},
\label{eq:dotcond1}
\end{equation}
where $\bm{g}_a$ has components $\pm 1$. When combined with the
derivative with respect to $\lambda$ of the binding constraints
(\ref{eq:abcog}), this gives
\begin{equation}
\left(
\begin{array}{l}
\dot{\bm{\beta}}_{a} \\
\dot{\bm{\mu}}_z
\end{array}
\right)
=
-\left(\begin{array}{ll}
\bm{H}_{aa} & \bm{A}_{za}^{\rm t} \\
 \bm{A}_{za} & \bm{0}
\end{array}\right)^{-1}
\left(
\begin{array}{l}
\bm{g}_{a} \\
\bm{0}
\end{array}
\right).
\label{eq:linearpath}
\end{equation}
Since the RHS of (\ref{eq:linearpath}) is constant, and
recalling also that $\bm{\beta}_{\bar{a}} = \bm{0}$ and
$\bm{\mu}_{\bar{z}} = \bm{0}$, such a path segment is therefore {\it
linear} in the space $(\bm{\beta},\bm{\mu})$. Thus, the full optimal solution path in the space $(\bm{\beta},\bm{\mu})$ is
\emph{piecewise-linear}, with `kinks' potentially occurring only
where the active set $a$ or binding set $z$ changes.
Moreover, along each
affine segment, (\ref{eq:dot}) restricted to the inactive
and binding sets yields
\begin{equation}
(\lambda \bm{g}_{\bar{a}})\dot{} =
-\bm{H}_{\bar{a}a} \dot{\bm{\beta}}_{a} - \bm{A}^{\rm
t}_{z\bar{a}}\dot{\bm{\mu}}_z,
\label{eq:sgevol}
\end{equation}
where the RHS is constant, and so
the quantity $\lambda \bm{g}_{\bar{a}}$ also changes linearly. Similarly,
the inequality residual
$\bm{r}_{\bar{z}} \equiv \bm{A}_{\bar{z}a}\bm{\beta}_a-\bm{b}_{\bar{z}}$
changes linearly, since
\begin{equation}
\dot{\bm{r}}_{\bar{z}} = \bm{A}_{\bar{z}a}\dot{\bm{\beta}}_a.
\label{eq:ineqres}
\end{equation}

\citet{Gaines2018} use (\ref{eq:linearpath})--(\ref{eq:ineqres})
to monitor changes to $a$ and $z$, which can potentially result in
kinks in the solution path, as $\lambda$ is decreased from
$\lambda_{\rm max}$ towards $\lambda = 0$.  In particular, if
$(\bm{\beta}^{(k)},\bm{\mu}^{(k)},\bm{g}^{(k)})$ is the solution at kink $k$, where
$\lambda = \lambda^{(k)}$, then
the next kink $k+1$ is identified by the smallest $\Delta > 0$ that
satisfies at least one of the following conditions for {\it any} index
in the indicated sets:
\begin{subequations}
\begin{eqnarray}
\beta^{(k)}_a - \dot{\beta}_a\Delta & = &  0, \\ 
\lambda^{(k)}g^{(k)}_{\bar{a}} - (\lambda g_{\bar{a}})\dot{}\,\Delta &
= & \pm (\lambda^{(k)}-\Delta),\\
\mu^{(k)}_z - \dot{\mu}_z\Delta & = &  0, \\
r^{(k)}_{\bar{z}} - \dot{r}_{\bar{z}}\Delta & = & 0.
\end{eqnarray}
\end{subequations}

In the absence of constraints this construction recovers the
LASSO-modified LARS method \citep{Efron2004}. Under constraints,
however, two issues require attention. First, an inactive
coefficient may already lie on a subgradient boundary at the
current kink. As discussed by \citet{Gaines2018}, continued
propagation would take its subgradient outside $[-1,1]$
if
\begin{equation}
g^{(k)}_{\bar{a}} (\lambda g_{\bar{a}})\dot{} < 1. 
\label{eq:sglimit}
\end{equation}
Such coordinates are moved into the active set and the direction
is recalculated before propagation. Second, the explicit inverse
in (\ref{eq:linearpath}) requires an appropriate rank condition.
For a positive-definite $\bm{H}_{aa}$, the relevant condition is
full row rank of $\bm{A}_{za}$. Although this condition is
already given by \citet{Gaines2018} and follows from standard
saddle-point theory \citep{Benzi2005}, it is useful to record the
equivalence and nullity explicitly, since they allow the
singular cases to be identified in the examples below.

\begin{proposition}[Invertibility criterion for the Gaines matrix]
\label{prop:gaines_invertibility}
Let
\[
\bm G=
\begin{pmatrix}
\bm H_{aa} & \bm A_{za}^{\rm t}\\
\bm A_{za} & \bm 0
\end{pmatrix},
\]
where $\bm H_{aa}$ is symmetric positive definite. Then the following
are equivalent:
\begin{enumerate}
\item $\bm G$ is invertible;
\item $\bm A_{za}$ has full row rank;
\item the Schur complement
$-\bm A_{za}\bm H_{aa}^{-1}\bm A_{za}^{\rm t}$ is invertible.
\end{enumerate}
Moreover,
$\dim\ker(\bm G)=|z|-\operatorname{rank}(\bm A_{za}).$
\end{proposition}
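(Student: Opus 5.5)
The plan is to compute $\ker(\bm G)$ directly and read all three equivalences off the nullity formula. Write $\bm B\equiv\bm A_{za}$, an $|z|\times|a|$ matrix, and suppose $(\bm u,\bm v)\in\ker(\bm G)$. Then
\[
\bm H_{aa}\bm u+\bm B^{\rm t}\bm v=\bm 0,
\qquad
\bm B\bm u=\bm 0 .
\]
Premultiply the first equation by $\bm u^{\rm t}$ and use $\bm u^{\rm t}\bm B^{\rm t}\bm v=(\bm B\bm u)^{\rm t}\bm v=0$. This gives $\bm u^{\rm t}\bm H_{aa}\bm u=0$, so positive definiteness forces $\bm u=\bm 0$. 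The first equation then reduces to $\bm B^{\rm t}\bm v=\bm 0$.

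Conversely, any $\bm v\in\ker(\bm B^{\rm t})$ gives $(\bm 0,\bm v)\in\ker(\bm G)$. Hence $\ker(\bm G)=\{\bm 0\}\times\ker(\bm B^{\rm t})$. By rank--nullity applied to $\bm B^{\rm t}$, which maps $\mathbb R^{|z|}\to\mathbb R^{|a|}$,
\[
\dim\ker(\bm G)=|z|-\operatorname{rank}(\bm B^{\rm t})=|z|-\operatorname{rank}(\bm B).
\]
This is the nullity statement.

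The equivalence (1)$\Leftrightarrow$(2) follows at once, because the square matrix $\bm G$ is invertible exactly when its kernel is trivial, that is, when $\operatorname{rank}(\bm B)=|z|$.

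For (2)$\Leftrightarrow$(3), set $\bm S=\bm B\bm H_{aa}^{-1}\bm B^{\rm t}$; invertibility of $-\bm S$ is the same as invertibility of $\bm S$. Since $\bm H_{aa}^{-1}$ is symmetric positive definite, $\bm v^{\rm t}\bm S\bm v=\bm w^{\rm t}\bm H_{aa}^{-1}\bm w$ with $\bm w=\bm B^{\rm t}\bm v$. Therefore $\bm S\bm v=\bm 0$ implies $\bm v^{\rm t}\bm S\bm v=0$, hence $\bm B^{\rm t}\bm v=\bm 0$. The reverse inclusion is immediate, so $\ker(\bm S)=\ker(\bm B^{\rm t})$, and $\bm S$ is invertible if and only if $\bm B$ has full row rank.

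An alternative route for (1)$\Leftrightarrow$(3) is the block factorisation
\[
\bm G=\begin{pmatrix}\bm I&\bm 0\\ \bm B\bm H_{aa}^{-1}&\bm I\end{pmatrix}
\begin{pmatrix}\bm H_{aa}&\bm 0\\ \bm 0&-\bm S\end{pmatrix}
\begin{pmatrix}\bm I&\bm H_{aa}^{-1}\bm B^{\rm t}\\ \bm 0&\bm I\end{pmatrix}.
\]
This also gives $\operatorname{rank}(\bm G)=|a|+\operatorname{rank}(\bm S)$, consistent with the nullity formula. The kernel argument alone is enough, however.

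No step here is a serious obstacle. The only point needing care is the degenerate case $|z|=0$, where the convention is that $\bm G=\bm H_{aa}$ and every statement holds trivially. One should also note that $\operatorname{rank}\bm B=\operatorname{rank}\bm B^{\rm t}$ is what converts the kernel of $\bm B^{\rm t}$ into the row-rank condition on $\bm A_{za}$.
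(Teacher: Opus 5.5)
Your proof is correct, and it is organised differently from the paper's. The paper centres the argument on the Schur complement. It proves (1)$\Leftrightarrow$(3) by the standard fact that, with $\bm H_{aa}$ invertible, $\bm G$ is invertible if and only if $\bm S$ is; this is the block factorisation you list as an alternative. It then gets (3)$\Leftrightarrow$(2) from the identity $\operatorname{rank}(\bm A_{za}\bm H_{aa}^{-1}\bm A_{za}^{\rm t})=\operatorname{rank}(\bm A_{za})$. Finally, it computes the nullity by eliminating $\bm u=-\bm H_{aa}^{-1}\bm A_{za}^{\rm t}\bm v$, which puts $\ker\bm G$ in bijection with $\ker(\bm A_{za}\bm H_{aa}^{-1}\bm A_{za}^{\rm t})$. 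You centre it on $\ker\bm A_{za}^{\rm t}$ instead. The energy identity $\bm u^{\rm t}\bm H_{aa}\bm u=-(\bm A_{za}\bm u)^{\rm t}\bm v=0$ eliminates $\bm u$ without inverting $\bm H_{aa}$ and gives the explicit description $\ker\bm G=\{\bm 0\}\times\ker\bm A_{za}^{\rm t}$, from which (1)$\Leftrightarrow$(2) and the nullity follow at once. Your argument that $\ker\bm S=\ker\bm A_{za}^{\rm t}$ is exactly the reasoning behind the rank identity, which the paper states with only the remark that $\bm H_{aa}^{-1}$ is positive definite. The paper's version is shorter because it relies on two standard facts. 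Yours is self-contained, and as a by-product it yields the kernel structure $\{\bm 0\}\times\ker\bm A_{z\alpha}^{\rm t}$ that the paper later uses in Section~\ref{sec:gaines_comparison} when discussing the non-uniqueness of the multiplier rates.
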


\begin{proof}
Since $\bm H_{aa}$ is invertible, the block matrix $\bm G$ is
invertible if and only if its Schur complement
$
\bm S=-\bm A_{za}\bm H_{aa}^{-1}\bm A_{za}^{\rm t}
$
is invertible. Because $\bm H_{aa}^{-1}$ is positive definite,
\[
\operatorname{rank}(\bm S)
=\operatorname{rank}(\bm A_{za}\bm H_{aa}^{-1}\bm A_{za}^{\rm t})
=\operatorname{rank}(\bm A_{za}),
\]
and hence $\bm S$ is invertible if and only if
$\operatorname{rank}(\bm A_{za})=|z|$, i.e.\ iff $\bm A_{za}$ has full
row rank.

To compute the nullity, let $\bm{w} \equiv (\bm u\,\,\,\bm v)^{\rm t}$ satisfy
$
\bm G\bm{w}={\bm 0}.
$
Then
\[
\bm H_{aa}\bm u+\bm A_{za}^{\rm t}\bm v=\bm 0,\qquad
\bm A_{za}\bm u=\bm 0.
\]
Eliminating $\bm u=-\bm H_{aa}^{-1}\bm A_{za}^{\rm t}\bm v$ yields
\[
\bm A_{za}\bm H_{aa}^{-1}\bm A_{za}^{\rm t}\bm v=\bm 0.
\]
Thus $\bm v$ lies in the nullspace of
$\bm A_{za}\bm H_{aa}^{-1}\bm A_{za}^{\rm t}$, whose nullity is
$|z|-\operatorname{rank}(\bm A_{za})$. Since $\bm u$ is then uniquely
determined by $\bm v$, the same nullity holds for $\bm G$.
\end{proof}

\begin{corollary}[Two diagnostic causes of Gaines singularity]
\label{cor:gaines_failure_modes}
Under the assumptions of Proposition~\ref{prop:gaines_invertibility},
the Gaines matrix $\bm G$ is singular if and only if the rows of
$\bm A_{za}$ are linearly dependent. Two cases are worth
distinguishing diagnostically:
\begin{enumerate}
\item[\textit{(i)}] \emph{dimension obstruction}: $|z|>|a|$, in which
row dependence is automatic and no property of the data can avoid it;
\item[\textit{(ii)}] \emph{genuine dependence}: $|z|\le|a|$, yet the
rows of $\bm A_{za}$ are still linearly dependent.
\end{enumerate}
Conversely, if $|z|\le |a|$ and $\bm A_{za}$ has full row rank, then
$\bm G$ is invertible.
\end{corollary}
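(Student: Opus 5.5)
The corollary follows almost directly from Proposition~\ref{prop:gaines_invertibility}, and the proof is essentially bookkeeping. First, the equivalence of (1) and (2) in that proposition gives the main claim at once. $\bm G$ is singular if and only if $\bm A_{za}$ fails to have full row rank. Since $\bm A_{za}$ has $|z|$ rows, this means $\operatorname{rank}(\bm A_{za})<|z|$, which is precisely linear dependence of its rows. Equivalently, I would cite the nullity formula $\dim\ker(\bm G)=|z|-\operatorname{rank}(\bm A_{za})$, which is positive exactly when the rows are dependent.

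Second, I will show that the two diagnostic cases are exhaustive and correctly described.
\begin{itemize}
\item Every instance of singularity has either $|z|>|a|$ or $|z|\le|a|$, so the cases (i) and (ii) cover it.
\item In case (i), $\bm A_{za}$ is a $|z|\times|a|$ matrix. Hence $\operatorname{rank}(\bm A_{za})\le\min(|z|,|a|)=|a|<|z|$, and the rows are dependent whatever the entries are. This justifies the phrase that no property of the data can avoid it: singularity is forced by the index-set sizes alone.
\item Case (ii) is simply the remaining possibility, namely dependence without the dimension obstruction. Nothing further needs proving there, since the label is definitional.
\end{itemize}

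Finally, the converse follows from the implication (2)$\Rightarrow$(1) of Proposition~\ref{prop:gaines_invertibility}. If $|z|\le|a|$ and $\bm A_{za}$ has full row rank, then $\bm G$ is invertible. The hypothesis $|z|\le|a|$ is in fact implied by full row rank, because $|z|=\operatorname{rank}(\bm A_{za})\le|a|$. I would state it only for symmetry with the cases.

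There is no real obstacle. The only care needed is to be explicit that the rank is at most the number of columns, $|a|$, since that bound is what drives case (i).
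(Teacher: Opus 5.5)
Your proposal is correct and follows essentially the same route as the paper's proof: the main equivalence and the converse both come from Proposition~\ref{prop:gaines_invertibility}, case \textit{(i)} comes from the bound $\operatorname{rank}(\bm A_{za})\le|a|<|z|$, and case \textit{(ii)} is the complementary case. Your additional remark that full row rank already forces $|z|\le|a|$ is accurate and does not change the argument.
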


\begin{proof}

The equivalence follows from
Proposition~\ref{prop:gaines_invertibility}(2). If $|z|>|a|$,
the matrix $\bm{A}_{za}$ has more rows than columns and cannot
have full row rank. The complementary case is dependence when
$|z|\le|a|$. The converse follows from the same proposition.
\end{proof}

\begin{remark}
The two diagnostic cases are mutually exclusive. The first can
be detected from the set cardinalities, whereas the second
requires a rank calculation. Both arise from row dependence.
For example, with non-negativity constraints
$\bm{C}=-\bm{I}_p$ and $\bm{d}=\bm{0}_p$, each binding
row is supported on a zero coefficient, so its restriction to
the active columns vanishes. Hence $\bm{A}_{za}=\bm{0}$
and the nullity is $|z|$ whenever any coefficient remains at
its bound. The diabetes examples in
Section~\ref{sec:gaines_comparison} illustrate this situation.
\end{remark}

\begin{remark}
Boundary-set variables do not enter the rank condition in
Proposition~\ref{prop:gaines_invertibility} directly. Their role in
the Gaines algorithm is different: they require a separate workaround
in which such variables are moved into the active set and the linear
system is recomputed. This may alter $\bm A_{za}$ and hence can
indirectly move the method into or out of the singular regime, but it
is not itself a singularity mechanism.
\end{remark}

\section{The CLASSO path-following algorithm}\label{sec:new_approach}

We now obtain the local direction by solving a QP whose
constraints distinguish the possible behaviour of boundary
coefficients and binding inequalities. The primal solution
$\bm{\beta}(\lambda)$ is propagated along the path, while the
dual variables provide a local certificate and determine the
admissible step. Suppose that
$(\bm{\beta}^{(k)},\bm{\mu}^{(k)},\bm{g}^{(k)})$ is such a
certificate at $\lambda^{(k)}$. On an outgoing affine segment,
a decrease $\Delta>0$ in $\lambda$ gives
\begin{subequations}
\label{eq:deltas}
\begin{eqnarray}
\lambda & = & \lambda^{(k)}-\Delta,\label{eq:dlambda}\\
\bm{\beta} & = & \bm{\beta}^{(k)} - \dot{\bm{\beta}}\Delta,\label{eq:dbeta}\\
\bm{\mu} & = & \bm{\mu}^{(k)} - \dot{\bm{\mu}}\Delta,\label{eq:dmu}\\
\lambda\bm{g} & = & \lambda^{(k)}\bm{g}^{(k)} - (\lambda\bm{g})\dot{}\,\Delta,\label{eq:dlg}
\end{eqnarray}
\end{subequations}
where $\dot{\bm{\beta}}$, $\dot{\bm{\mu}}$ and $(\lambda\bm{g})\dot{}$
are constant vectors.

Define the \emph{augmented active set} $\alpha=a\cup b$,
where $b\subset\bar a$ contains the zero coefficients whose
subgradients lie on the boundary of $[-1,1]$. Thus
$\beta_j=0$ and $|g_j|=1$ for $j\in b$. These coefficients
may begin to move on the outgoing segment, although reaching
the subgradient boundary does not require them to do so.
Consequently, for sufficiently small $\Delta$ we may impose
$\dot{\bm{\beta}}_{\bar\alpha}=\bm{0}$, but must allow the
direction calculation to determine
$\dot{\bm{\beta}}_b$.

For the components in $b$, we first ensure consistency with the
subgradient rule (\ref{eq:subgrads}) for the $L_1$ norm. If $j\in b$, then
$\beta_j^{(k)}=0$ and $g_j^{(k)}=\pm1$, and (\ref{eq:dbeta}) gives
$\beta_j=-\dot\beta_j\Delta$. Thus, 
if $g_j^{(k)}=+1$ $(-1)$, then $\beta_j$ may move only into $\beta_j\ge 0$ $(\le 0)$, and hence $\dot\beta_j\le0$ $(\ge 0)$.
Thus, for all $j\in b$, the allowable direction is characterised by
$g_j^{(k)}\dot\beta_j\le0$, or, in vector form,
\begin{equation}
\bm g_b^{(k)}\odot \dot{\bm\beta}_b \le \bm 0,
\label{eq:sign_compatibility}
\end{equation}
where $\odot$ denotes the elementwise Hadamard product.
Equivalently, if $\bm{D}_b$ denotes the
diagonal matrix whose diagonal entries are the components of
$\bm{g}^{(k)}_b$, each of which is equal to $\pm 1$, then (\ref{eq:sign_compatibility})
may be
written as $\bm{D}_b\dot{\bm{\beta}}_b\le \bm{0}$.

The same local geometry expressed by (\ref{eq:sign_compatibility}) on the
primal side may also be expressed on the dual (subgradient) side. For
$j\in b$, write the local evolution (\ref{eq:dlg}) in the form
\begin{equation}
\lambda g_j=\lambda^{(k)}g_j^{(k)}-q_j\Delta.
\label{eq:gj_local}
\end{equation}
Since $\lambda=\lambda^{(k)}-\Delta$, the requirement that
$g_j\in[-1,1]$ along the segment implies that $q_j\ge1$ $(\le -1)$ when
$g_j^{(k)}=+1$ $(-1)$. Hence, one may write
\begin{equation}
q_j=g_j^{(k)}(1+\rho_j),
\qquad
\rho_j\ge0.
\label{eq:qj_rho}
\end{equation}
Collecting these relations componentwise gives
\begin{equation}
\lambda\bm{g}_b
=
\lambda^{(k)}\bm{g}^{(k)}_b
-
\left(\bm{g}^{(k)}_b+\bm{D}_b\bm{\rho}_b\right)\Delta,
\qquad
\bm{\rho}_b\ge\bm{0}.
\label{eq:lgb}
\end{equation}
The case $\rho_j=0$ corresponds to the subgradient remaining at the
boundary value $\pm1$, whereas $\rho_j>0$ corresponds to it moving
into the interior of the interval $[-1,1]$. Thus
(\ref{eq:sign_compatibility}) and (\ref{eq:lgb}) are complementary
primal and dual descriptions of the same local behaviour on the set
$b$. It is also worth noting that, for the original active set $a$, the
subgradient remains fixed (with elements $\pm 1$) along the current
segment, and so
\begin{equation}
\lambda\bm{g}_a
=
\lambda^{(k)}\bm{g}^{(k)}_a-\bm{g}^{(k)}_a\Delta.
\label{eq:lga}
\end{equation}

The linear constraints require a further partition of the
binding set. Write $z=e\cup u$, where $e$ contains the
equality rows and $u$ the binding inequalities. The strongly
active set $s\subseteq u$ consists of rows with
$\bm{\mu}^{(k)}_s>\bm{0}$, while the weakly active set
$w=u\setminus s$ has $\bm{\mu}^{(k)}_w=\bm{0}$.
Equality rows and strongly active inequalities remain binding
on a sufficiently short outgoing segment. A weakly active
inequality, however, may become slack and need only remain
feasible.

The rates $\dot{\bm{\mu}}_e$ and
$\dot{\bm{\mu}}_s$ are therefore unrestricted in sign.
For a weakly active row,
$\bm{\mu}_w=-\Delta\dot{\bm{\mu}}_w\ge\bm{0}$
requires $\dot{\bm{\mu}}_w\le\bm{0}$. The remaining
inequalities are slack at the current point and have zero
multipliers. Thus $z=e\cup s\cup w$, with $\bar z$
denoting the slack inequalities.

Substituting (\ref{eq:dbeta}) into the constraints in
(\ref{eq:CLASSOprobdef}), and noting that
$\bm{A}_{e:}\bm{\beta}^{(k)}=\bm{b}_e$ and
$\bm{A}_{u:}\bm{\beta}^{(k)}=\bm{b}_u$ at the kink, one obtains the
first-order feasibility conditions
\begin{equation}
\bm{A}_{e\alpha}\dot{\bm{\beta}}_\alpha = \bm{0},
\qquad
\bm{A}_{s\alpha}\dot{\bm{\beta}}_\alpha = \bm{0},
\qquad
-\bm{A}_{w\alpha}\dot{\bm{\beta}}_\alpha \le \bm{0}.
\label{eq:dotconst}
\end{equation}

In particular, one cannot impose all binding inequalities as
equalities in the directional problem: the rows in $w$ must
retain their inequality character so that they can become slack.

We now derive the local path direction. Restricting the KKT
optimality condition (\ref{eq:KKTcond1}) to the set $\alpha$ gives
\begin{equation}
\bm{H}_{\alpha\alpha}\bm{\beta}_\alpha
-\bm{h}_\alpha
+\lambda\bm{g}_\alpha
+\bm{A}^{\rm t}_{e\alpha}\bm{\mu}_e
+\bm{A}^{\rm t}_{s\alpha}\bm{\mu}_s
+\bm{A}^{\rm t}_{w\alpha}\bm{\mu}_w
=
\bm{0}.
\label{eq:kkt_alpha_new}
\end{equation}
Substituting (\ref{eq:dbeta})--(\ref{eq:dmu}) together with
(\ref{eq:lgb})--(\ref{eq:lga}) into (\ref{eq:kkt_alpha_new}),
cancelling the zeroth-order terms using
(\ref{eq:kkt_alpha_new}) at the kink, and dividing by $-\Delta$,
yields
\begin{equation}
\bm{H}_{\alpha\alpha}\dot{\bm{\beta}}_\alpha
+\bm{g}^{(k)}_\alpha
+\bm{D}_b^{\rm t}\bm{\rho}_b
+\bm{A}^{\rm t}_{e\alpha}\dot{\bm{\mu}}_e
+\bm{A}^{\rm t}_{s\alpha}\dot{\bm{\mu}}_s
+\bm{A}^{\rm t}_{w\alpha}\dot{\bm{\mu}}_w
=
\bm{0}.
\label{eq:dotalpha}
\end{equation}
On comparing (\ref{eq:dotalpha}) with the KKT optimality conditions,
one sees that it is precisely the stationarity condition for the
$|\alpha|$-dimensional QP
\begin{equation}
\min_{\dot{\bm{\beta}}_\alpha}
\tfrac{1}{2}\dot{\bm{\beta}}^{\rm t}_\alpha\bm{H}_{\alpha\alpha}
\dot{\bm{\beta}}_\alpha
+\left(\bm{g}^{(k)}_\alpha\right)^{\rm t}\dot{\bm{\beta}}_\alpha
\label{eq:dotprobdef}
\end{equation}
subject to
\begin{equation}
\bm{A}_{e\alpha}\dot{\bm{\beta}}_\alpha=\bm{0},
\quad
\bm{A}_{s\alpha}\dot{\bm{\beta}}_\alpha=\bm{0},
\quad
-\bm{A}_{w\alpha}\dot{\bm{\beta}}_\alpha\le \bm{0},
\quad
\bm{g}^{(k)}_b\odot \dot{\bm{\beta}}_b \le \bm{0}.
\label{eq:dotprobdef_constraints}
\end{equation}
The corresponding KKT multipliers are $\dot{\bm{\mu}}_e$,
$\dot{\bm{\mu}}_s$, $-\dot{\bm{\mu}}_w$ and $\bm{\rho}_b$,
respectively, since the weakly active constraints are written in the
form $-\bm{A}_{w\alpha}\dot{\bm{\beta}}_\alpha\le \bm{0}$ (one could,
more conventionally, introduce the multipliers $\bm{\xi}_w \equiv
-\dot{\bm{\mu}}_w \geq \bm{0}$ for the weakly active constraints, but
we prefer the existing notation). Since $\bm{H}_{\alpha\alpha}$ is
positive definite, the minimiser $\dot{\bm{\beta}}_\alpha$ is
unique. In particular, this formulation does not require inversion of
the matrix in (\ref{eq:linearpath}), and so does not rely on the rows
of $\bm{A}_{z\alpha}$ being linearly independent. The remaining
non-trivial quantities are $(\lambda\bm{g}_{\bar{\alpha}})\dot{}$,
which may be determined by considering the equivalent of the
first-order stationarity condition (\ref{eq:dotalpha}), but restricted
instead to the set $\bar{\alpha}$. This yields
\begin{equation}
(\lambda\bm{g}_{\bar{\alpha}})\dot{} =
-\bm{H}_{\bar{\alpha}\alpha}\dot{\bm{\beta}}_\alpha
-\bm{A}^{\rm t}_{e\bar{\alpha}}\dot{\bm{\mu}}_e
-\bm{A}^{\rm t}_{s\bar{\alpha}}\dot{\bm{\mu}}_s
-\bm{A}^{\rm t}_{w\bar{\alpha}}\dot{\bm{\mu}}_w.
\label{eq:sgevol_new}
\end{equation}

Uniqueness of the primal direction does not imply uniqueness of
the directional multipliers. When the active directional rows
are dependent, different multiplier solutions may give
different rates for $\lambda\bm{g}_{\bar\alpha}$ in
(\ref{eq:sgevol_new}), even though
$\dot{\bm{\beta}}_\alpha$ is unchanged. These rates affect
the subsequent certificate events and hence require a specified
selection rule. We follow the certificate itself. At a point
$(\bm\beta,\lambda)$ the valid multipliers form the set
\begin{equation}
\mathcal{M}(\bm\beta,\lambda)=\left\{\bm\mu:\;
\begin{array}{l}
\lambda g_j=\lambda\,\mathrm{sign}\,\beta_j\ (j\in a),\\
|\lambda g_j|\le\lambda\ (j\in\bar a),\\
\mu_i\ge0,\ \ \mu_i(\bm{A}_{i:}\bm\beta-b_i)=0\ \ (i\in\mathcal{I})
\end{array}\right\},
\label{eq:valid_multipliers}
\end{equation}
where $\lambda\bm g=\bm h-\bm H\bm\beta-\bm A^{\rm t}\bm\mu$ is
determined by $\bm\mu$ through stationarity and $\mathcal{I}$ indexes
the inequality rows. It is a polyhedron in $\bm\mu$ alone, nonempty
at a solution, and so has a unique point of least Euclidean norm,
\begin{equation}
\bm\mu^\star(\bm\beta,\lambda)=\mathop{\rm argmin}\bigl\{\lVert\bm\mu\rVert:
\bm\mu\in\mathcal{M}(\bm\beta,\lambda)\bigr\},
\label{eq:canonical_certificate}
\end{equation}
which we call the \emph{canonical certificate}; $\bm g$ is recovered
from stationarity and does not enter the norm.

Fix the outgoing direction $\dot{\bm\beta}$ and the outgoing pattern of
signs, binding rows and held coordinates. Along the segment
(\ref{eq:deltas}), the set
$\mathcal{M}(\bm\beta^{(k)}-\Delta\dot{\bm\beta},\lambda^{(k)}-\Delta)$
restricted to that pattern is
$\{\bm\mu:\bm r_i^{\rm t}\bm\mu\le c_i+\Delta\,\dot c_i\}$, with fixed
normals $\bm r_i$ and affinely moving bounds, the equalities being
written as pairs of inequalities. Write
$\mathcal{T}=\{i:\bm r_i^{\rm t}\bm\mu^\star=c_i\}$ for the rows tight at
$\bm\mu^\star=\bm\mu^\star(\bm\beta^{(k)},\lambda^{(k)})$, and parametrise
the certificate as $\bm\mu^\star+\Delta\bm w$, so that
$\bm w=-\dot{\bm\mu}$ in (\ref{eq:dmu}). The rates that keep it
feasible to first order form
$\mathcal{W}=\{\bm w:\bm r_i^{\rm t}\bm w\le\dot c_i,\ i\in\mathcal{T}\}$,
and the \emph{lexicographic rate} is
\begin{equation}
\bm w^\star=\mathop{\rm argmin}\Bigl\{\lVert\bm w\rVert:\bm w\in\mathcal{W},\;
\bm\mu^{\star\rm t}\bm w=\min_{\bm w'\in\mathcal{W}}\bm\mu^{\star\rm t}\bm w'\Bigr\}:
\label{eq:lex_rate}
\end{equation}
first the fastest decrease of $\tfrac12\lVert\bm\mu\rVert^2$, whose
derivative is $\bm\mu^{\star\rm t}\bm w$, then the least norm. The
inner minimum is finite because $-\bm\mu^\star$ lies in the cone of
the tight normals, and $\bm w^\star$ is unique by strict convexity.
Let $\mathcal{T}^\star=\{i\in\mathcal{T}:\bm r_i^{\rm t}\bm w^\star=\dot c_i\}$
be the rows that the rate keeps tight. The propagated certificate
remains the minimum-norm point while it remains feasible and
\begin{equation}
-(\bm\mu^\star+\Delta\bm w^\star)\in
\mathrm{cone}\{\bm r_i:i\in\mathcal{T}^\star\},
\label{eq:cert_horizon}
\end{equation}
which is the optimality condition of (\ref{eq:canonical_certificate})
on those rows. The \emph{certificate horizon} is
\begin{equation}
\Delta_{\rm cert}=\sup\bigl\{\Delta\ge0:
-(\bm\mu^\star+\Delta\bm w^\star)\in
\mathrm{cone}\{\bm r_i:i\in\mathcal{T}^\star\}\bigr\}\in(0,\infty];
\end{equation}
it is infinite, for example, when $\bm\mu^\star$ and $\bm w^\star$
both vanish, and then no \textbf{E6} is scheduled, the step being
bounded by the other events and by $\lambda^{(k)}$. Feasibility of
$\bm\mu^\star+\Delta\bm w^\star$ is the subject of events
\textbf{E2}, \textbf{E4} and \textbf{E5} below, and a finite horizon
defines \textbf{E6}. Following this certificate, rather than propagating a
rate of our own choosing and recovering a new certificate at the next
kink, is what makes the traversal terminate
(Section~\ref{sec:termination}).

The quantities (\ref{eq:deltas}) are propagated along the current
linear segment and the next kink $k+1$ is identified by the smallest
$\Delta > 0$ for which any of the following conditions is met for any
index in the indicated set:
\begin{subequations}
\label{eq:kink_events}
\begin{align}
\textbf{E1}\,(\emph{active leaving}): \quad &
\beta_a = 0,
\label{eq:event_E1}\\
\textbf{E2}\, (\emph{subgradient saturation}): \quad &
g_{\bar{\alpha}} = \pm 1,
\label{eq:event_E2}\\
\textbf{E3}\, (\emph{inequality binding}): \quad &
\bm{A}_{\bar{z}\,:}\,\bm{\beta} = \bm{b}_{\bar{z}},
\label{eq:event_E3}\\
\textbf{E4}\, (\emph{multiplier release}): \quad &
\mu_s = 0,
\label{eq:event_E4}\\
\textbf{E5}\, (\emph{held-coordinate sign crossing}): \quad &
(\lambda g_j) = -s_j\,\lambda \quad (j\in b \mbox{ held}),
\label{eq:event_E5}\\
\textbf{E6}\, (\emph{certificate face change}): \quad &
\mbox{horizon of the certificate}.
\label{eq:event_E6}
\end{align}
\end{subequations}
Events \textbf{E1}--\textbf{E4} correspond, respectively, to
a coefficient leaving the support, an inactive subgradient
reaching its boundary, a slack inequality becoming binding,
and a strongly active multiplier reaching zero. A weakly
active row with
$\bm{A}_{w\alpha}\dot{\bm{\beta}}_\alpha>\bm{0}$
becomes slack immediately and requires no subsequent release
event.
Event \textbf{E5} applies to a boundary coordinate $j$ held
at zero by the directional QP. If its initial sign is
$s_j=\pm1$ and the rate of $\lambda g_j$ is $r_j$, then
$\lambda g_j=s_j\lambda^{(k)}-\Delta r_j$. Apart from a
persistent equality, the equation for the original boundary
has only the root $\Delta=0$, whereas the opposite boundary
is reached at
$\Delta=2s_j\lambda^{(k)}/(s_j+r_j)$ whenever this lies
strictly between zero and $\lambda^{(k)}$. Both margins
must therefore be monitored for a held coordinate.
Event \textbf{E6} occurs at the certificate horizon, the end of
the step over which (\ref{eq:cert_horizon}) holds, so that the
lexicographic rate keeps the canonical certificate on its current
face. It
has no primal counterpart: the certificate's rate changes, but $\dot{\bm\beta}$ need not. More generally, a
point at which no coefficient leaves the support and no
inequality becomes binding (neither \textbf{E1} nor \textbf{E3}),
and across which the primal direction is unchanged, is not a kink
of $\bm\beta(\lambda)$ and is not recorded.

The \textbf{E5} crossing is a consequence of the augmented-set
representation. An ordinary LARS implementation treats the
corresponding crossing as an entry, as illustrated by the
comparison with \texttt{lars\_path} in
Section~\ref{sec:benchmarking}. In the parametric-QP
description of \citet{Schmelzer2026}, \textbf{E2} and
\textbf{E5} are the two margins of a single release test.
On our stress suite, \textbf{E5} occurs $11$ times in
$6$ of the $39$ cases, with each occurrence corresponding
to a change of primal direction. These coordinates are held
by their subgradient conditions alone. A coordinate with
a one-sided sign bound cannot enter on the forbidden side;
the redundant subgradient--bound-multiplier decomposition
is treated separately in Section~\ref{sec:one_sided}.

It is worth noting that this path-following approach is easily adapted
for travelling in the direction of increasing $\lambda$ along the
optimal path by making the replacement $-\Delta \to \Delta$ in
(\ref{eq:deltas}) (again with $\Delta > 0$) and changing the signs of
the constraints in
(\ref{eq:dotprobdef_constraints}). Thus, the algorithm can be
tailored to travel in either direction along the optimal path, as the
user prefers. We will assume henceforth, however, that path-following
proceeds in the usual direction of decreasing $\lambda$.

For the unconstrained LASSO, the directional construction
recovers the LASSO-modified LARS direction
\citep{Efron2004}. In the usual single-entry case,
$z=\varnothing$ and $\alpha$ contains the current support
together with the entering coordinate. When the sign-cone
constraint is inactive at the outgoing solution, stationarity
reduces to
$\dot{\bm{\beta}}_\alpha
=-\bm{H}_{\alpha\alpha}^{-1}\bm{g}_\alpha$,
with the leaving event \textbf{E1} providing the usual LASSO
modification. The numerical correspondence is examined in
Section~\ref{sec:benchmarking}.

At the origin the active set is empty. The directional QP over
the full cone (\ref{eq:dotprobdef})--(\ref{eq:dotprobdef_constraints})
still has a unique minimiser, which gives the direction in which
coefficients enter, including several at once; no separate
construction is needed to leave the origin. The initialised run
of Section~\ref{sec:termination} starts in exactly this way.

\subsection{Initialisation}\label{sec:init}

Initialisation requires a primal--dual certificate
$(\bm{\beta}^{(0)},\bm{\mu}^{(0)}_u,
\bm{\mu}^{(0)}_e,\bm{g}^{(0)})$ at the largest value
$\lambda^{(0)}$ from which the path is to be followed.
The quadratic loss must be included when selecting this
point if the minimum-$L_1$ feasible set is not a singleton.

Following \citet{Gaines2018}, one first notes that as
$\lambda\to\infty$ the original problem
(\ref{eq:CLASSOprobdef}) is dominated by the $L_1$ penalty term, and
so approaches
\begin{equation}
\min_{\bm{\beta}}\lVert \bm{\beta} \rVert_1
\qquad\mbox{subject to}\qquad
\bm{A\beta}\preceq \bm{b}.
\label{eq:largelambdaprob_new}
\end{equation}
As before, this may be written as a standard $2p$-dimensional linear
program by decomposing
$\bm{\beta}=\bm{\beta}_+-\bm{\beta}_-$ with
$\bm{\beta}_+\ge \bm{0}$ and $\bm{\beta}_-\ge \bm{0}$, so that
$\lVert\bm{\beta}\rVert_1=\bm{e}^{\rm t}(\bm{\beta}_++\bm{\beta}_-)$.

The LP determines the feasible points with smallest $L_1$
norm. If several such points exist, the quadratic loss
selects their large-$\lambda$ limit. We therefore use a
lexicographic initialisation: first minimise the $L_1$ norm,
and then minimise the loss over the resulting solution set.
The second stage is performed in all cases; when the LP
minimiser is unique, it returns that same point.

Accordingly, let $\mathcal{S}^\ast$ denote the solution set of
(\ref{eq:largelambdaprob_new}). The initial primal point
$\bm{\beta}^{(0)}$ is then defined as the solution of
\begin{equation}
\min_{\bm{\beta}}
\;\tfrac{1}{2}\lVert \bm{y}-\bm{X}\bm{\beta}\rVert_2^2
\qquad\mbox{subject to}\qquad
\bm{\beta}\in \mathcal{S}^\ast.
\label{eq:lexicographic_init}
\end{equation}
In practice, $\mathcal{S}^\ast$ is represented by the
first-stage constraints and its optimal objective value,
using the same split variables as the LP. Strict convexity
of the loss in $\bm{\beta}$ ensures a unique initial
coefficient vector, irrespective of any redundancy in that
representation.

Having obtained $\bm{\beta}^{(0)}$, one next determines the smallest
value of $\lambda$ for which $\bm{\beta}^{(0)}$ satisfies the KKT
conditions of the full constrained LASSO problem. Let $a$ denote the
support of $\bm{\beta}^{(0)}$, let $u$ denote the active inequality
constraints at $\bm{\beta}^{(0)}$, and let $e$ denote the equality
constraints. Inactive inequalities do not appear below because their
multipliers vanish by complementary slackness. Restricting the KKT
stationarity condition (\ref{eq:KKTcond1}) to the active and inactive
coefficient sets then gives
\begin{eqnarray}
\bm{h}_a-\bm{H}_{aa}\bm{\beta}^{(0)}_a
-\bm{A}_{ua}^{\rm t}\bm{\mu}_u
-\bm{A}_{ea}^{\rm t}\bm{\mu}_e\!\!\!
&=&\!\!\!
\lambda\,\mathrm{sign}(\bm{\beta}^{(0)}_a),\phantom{AAAa}
\label{eq:initcert_active_derivation}
\\
-\lambda \le
\bm{h}_{\bar a}-\bm{H}_{\bar a a}\bm{\beta}^{(0)}_a
-\bm{A}_{u\bar a}^{\rm t}\bm{\mu}_u
-\bm{A}_{e\bar a}^{\rm t}\bm{\mu}_e
\!\!\!&\le&\!\!\!
\lambda,
\label{eq:initcert_inactive_derivation}
\end{eqnarray}
together with the dual-feasibility condition
\begin{equation}
\bm{\mu}_u\ge \bm{0}.
\label{eq:initcert_nonneg_derivation}
\end{equation}
Equation (\ref{eq:initcert_active_derivation}) enforces the KKT
stationarity condition on the active support, where the subgradients
are fixed by the signs of the non-zero coefficients. Equation
(\ref{eq:initcert_inactive_derivation}) enforces subgradient
consistency on the inactive set, namely that
$-1\le g_{\bar a}\le 1$. Thus the smallest admissible initial value
$\lambda^{(0)}$ is obtained by solving the initial certification
problem
\begin{equation}
\min_{\bm{\mu}_u,\bm{\mu}_e,\lambda}\;\lambda
\label{eq:initcert_obj}
\end{equation}
subject to
(\ref{eq:initcert_active_derivation})--(\ref{eq:initcert_nonneg_derivation}).
This initial certification problem determines the smallest $\lambda$
at which the lexicographically selected point is optimal for the full
constrained LASSO. It is distinct from the dual of the minimum-$L_1$
LP.

At this value $\lambda^{(0)}$, one then chooses a canonical dual
certificate $(\bm{\mu}^{(0)}_u,\bm{\mu}^{(0)}_e)$. 
For reproducibility, we minimise the multiplier norm over
the certificates at this value of $\lambda$, whether or not
the certificate is already unique:
\begin{equation}
\tfrac{1}{2}\lVert\bm{\mu}_u\rVert_2^2+
\tfrac{1}{2}\lVert\bm{\mu}_e\rVert_2^2
\label{eq:init_min_norm_dual}
\end{equation}
subject to
(\ref{eq:initcert_active_derivation})--(\ref{eq:initcert_nonneg_derivation})
with $\lambda=\lambda^{(0)}$. If the initial certification problem has
a unique solution, this second step simply returns it; otherwise it
provides a consistent and reproducible tie-break.

Provided $\lambda^{(0)}>0$, the corresponding inactive subgradients
follow from (\ref{eq:initcert_inactive_derivation}) as
\begin{equation}
\bm{g}^{(0)}_{\bar a}
=
\frac{1}{\lambda^{(0)}}
\left(
\bm{h}_{\bar a}
-\bm{H}_{\bar a a}\bm{\beta}^{(0)}_a
-\bm{A}_{u\bar a}^{\rm t}\bm{\mu}^{(0)}_u
-\bm{A}_{e\bar a}^{\rm t}\bm{\mu}^{(0)}_e
\right).
\label{eq:init_subgrad_recovery}
\end{equation}
The initial boundary set is then
\begin{equation}
b=\{j\in \bar a: |g_j^{(0)}|=1\},
\label{eq:init_bset}
\end{equation}
from which the augmented active set $\alpha=a\cup b$ may be
constructed and the path-following algorithm started. If
$\lambda^{(0)}=0$, then the path is trivial and no path-following step
is required.

If the resulting initial point satisfies $a=\varnothing$, the path
leaves it along the minimiser of the full-cone directional QP, as at
any other point.

\subsection{The CLASSO algorithm in pseudocode}\label{sec:algo_summary}

The ingredients introduced in this section
(the directional QP \eqref{eq:dotprobdef}--\eqref{eq:dotprobdef_constraints},
the canonical certificate and its lexicographic rate, the
events \textbf{E1}--\textbf{E6} of
\eqref{eq:kink_events}, and the initialisation of
Section~\ref{sec:init}) compose the CLASSO algorithm, summarised
in Algorithm~\ref{alg:classo}.

\begin{algorithm}
\caption{CLASSO: constrained-LASSO path-following}\label{alg:classo}
\begin{algorithmic}[1]
\State \textbf{Input:} $\bm X,\bm y$; constraints $(\bm A,\bm b)$;
$0\le\lambda_{\min}$; numerical tolerances and iteration limit;
optional ridge $\epsilon$.
\State \textbf{Output:} recorded points with their certificate flags,
the associated states, a traversal status and a certificate status.
\State Form $\bm H,\bm h$, including ridge augmentation if required.
\State Initialise $(\lambda,\bm\beta)$ by Section~\ref{sec:init}, with
its canonical certificate $(\bm\mu,\bm g)$ and state
$S=(a,b,e,s,w)$; record the point.
\While{$\lambda>\lambda_{\min}$ and the iteration limit is not reached}
\State Solve and refine the directional QP
\eqref{eq:dotprobdef}--\eqref{eq:dotprobdef_constraints} for
the primal rate $\dot{\bm\beta}$ and outgoing state $S_{\rm out}$
(Section~\ref{sec:release_refinement}).
\If{no \textbf{E1} or \textbf{E3} occurred at the last recorded point
and $\dot{\bm\beta}$ is the direction that arrived there}
\State Remove that point, which is not a kink.
\EndIf
\State Take the lexicographic rate $\dot{\bm\mu}$ of the canonical
certificate, and $\bm q=(\lambda\bm g)\dot{}$ (Section~\ref{sec:follow}).
\State $(\lambda',\bm\beta',\bm\mu',\bm g',S',\Delta,c)
\gets \text{\textsc{Advance}}
(\lambda,\bm\beta,\bm\mu,\bm g,S_{\rm out},
\dot{\bm\beta},\dot{\bm\mu},\bm q,\lambda_{\min})$,
where $c$ is the acceptance flag of the canonical certificate
at~$\lambda'$.
\If{$\Delta$ is below the minimum step}
\State Set the traversal status to \emph{step exhausted} and exit the loop.
\EndIf
\If{$c$ records a failure}
\State Increment the certificate-failure count;
the traversal is not interrupted.
\EndIf
\State Record the new point with $c$; replace
$(\lambda,\bm\beta,\bm\mu,\bm g,S)$ by
$(\lambda',\bm\beta',\bm\mu',\bm g',S')$.
\EndWhile
\If{the loop body set no traversal status}
\State Set it to \emph{reached $\lambda_{\min}$} if
$\lambda\le\lambda_{\min}$, and to \emph{iteration limit} otherwise.
\EndIf
\If{the certificate-failure count is nonzero}
\State Mark the path untrustworthy and warn.
\EndIf
\State \Return the recorded points, their flags, and both statuses.
\end{algorithmic}
\end{algorithm}

For \textsc{Advance}, let $\Delta_{\rm event}$ be the
smallest strictly positive admissible root of
\textbf{E1}--\textbf{E6}, with $\Delta_{\rm event}=+\infty$ if
there is no such root. Set
$\Delta=\min\{\Delta_{\rm event},\lambda-\lambda_{\min}\}$,
propagate the coefficients by (\ref{eq:deltas}) and the
certificate along its lexicographic rate, and update the state
using the event conditions. At the new point, compute its
canonical certificate as in Section~\ref{sec:exact_certificate};
in exact arithmetic it is the propagated one. Return the point
reached, its updated state, the step $\Delta$ taken, and the
certificate with its acceptance flag. A point whose certificate
is not accepted is still returned, flagged: certification
governs whether the output may be believed, not whether the
traversal proceeds. If the step reaches $\lambda_{\min}$, the
endpoint is certified and recorded in the same way, and the
loop's own condition then ends the traversal. At $\lambda=0$,
stationarity is checked with $\lambda\bm g=\bm0$, without
dividing by $\lambda$. If $\lambda^{(0)}=0$, the initialisation
already returns the terminal solution.

The pseudocode summarises the mathematical operations; the
numerical acceptance tests and fallbacks are described in
Section~\ref{sec:implementation} and
Appendix~\ref{sec:safeguards}. In particular, the implementation
reports two statuses separately. The traversal status records
whether a run reaches $\lambda_{\min}$ or stops at the
iteration limit or on a vanishing step. The certificate status
records how many recorded points failed certification; if any
did, the path is marked untrustworthy and a warning is issued,
and the output is then a recorded prefix with certificate flags
rather than a certified path.

For a fixed state, the principal dense factorisation costs
are $O(|\alpha|^3)$ for the directional calculation and
$O(|z_{\mathrm{act}}|^2|\alpha|)$ for multiplier recovery,
where $|z_{\mathrm{act}}|$ is the number of active constraint
rows. The total cost depends on the number of iterations
performed, including any refinement or recovery steps.
Corollary~\ref{cor:complexity} bounds the number of affine
pieces of the primal path, and Section~\ref{sec:termination}
bounds the number of iterations of the algorithm in exact
arithmetic; both bounds are exponential in the worst case.
Empirically, the recorded point count
is approximately proportional to $p$ for the configurations
examined in Section~\ref{sec:scaling}.

\section{Well-posedness and path finiteness}\label{sec:theory}

We first establish that the directional QP is well posed and
gives the local direction of the primal path. We then prove
that this path is continuous and piecewise affine with
finitely many kinks. Throughout, $\bm{H}\succ\bm{0}$,
either directly or after the ridge augmentation of
Section~\ref{sec:problem}, and the feasible set is assumed
nonempty. No constraint qualification, strict complementarity
or uniqueness of multipliers is required.

These statements concern the mathematical path and its local
continuation. Whether a particular event and certificate
selection rule traverses the path in finitely many steps is
a separate question, considered below.

\begin{proposition}[Well-posedness of the CLASSO directional QP]
\label{prop:directional_qp_wellposed}
Fix any admissible state $(a,b,e,s,w)$ and let $\alpha=a\cup b$.
The directional problem
(\ref{eq:dotprobdef})--(\ref{eq:dotprobdef_constraints}) has
nonempty feasible set and a unique primal minimiser. In particular,
no rank condition on $\bm{A}_{z\alpha}$ is required.
\end{proposition}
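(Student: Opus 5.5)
The plan has three parts: show the feasible set is nonempty, show the objective is strongly convex and hence coercive, and then invoke the standard existence and uniqueness result for strictly convex quadratics over a nonempty closed convex set.

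\emph{Feasibility.} All constraints in (\ref{eq:dotprobdef_constraints}) are homogeneous: equalities with zero right-hand side and inequalities of the form $\bm M\dot{\bm\beta}_\alpha\le\bm 0$. The zero vector $\dot{\bm\beta}_\alpha=\bm 0$ therefore satisfies every constraint, whatever the state. So the feasible set $\mathcal{F}$ is nonempty. It is also a polyhedral convex cone, hence closed and convex. This step uses no property of $\bm A$ and, in particular, no rank condition on $\bm A_{z\alpha}$. It also does not use that the state is admissible beyond the fact that the index sets $e,s,w,b$ are well defined.

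\emph{Strong convexity.} Since $\bm H\succ\bm 0$, either directly or after ridge augmentation, every principal submatrix is positive definite. In particular $\bm H_{\alpha\alpha}\succ\bm 0$, with smallest eigenvalue $\sigma>0$. The objective $\phi(\bm v)=\tfrac12\bm v^{\rm t}\bm H_{\alpha\alpha}\bm v+(\bm g^{(k)}_\alpha)^{\rm t}\bm v$ is then $\sigma$-strongly convex, and it satisfies $\phi(\bm v)\ge\tfrac{\sigma}{2}\lVert\bm v\rVert^2-\lVert\bm g^{(k)}_\alpha\rVert\,\lVert\bm v\rVert$, so it is coercive.

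\emph{Existence and uniqueness.} Existence follows by restricting to the set $\mathcal F\cap\{\bm v:\phi(\bm v)\le\phi(\bm 0)=0\}$. This set is nonempty, closed and, by coercivity, bounded, so the continuous function $\phi$ attains its minimum there (Weierstrass). Uniqueness follows from strict convexity. If $\bm v_1\neq\bm v_2$ were both minimisers, the midpoint would lie in the convex set $\mathcal F$ and would satisfy $\phi\bigl(\tfrac12(\bm v_1+\bm v_2)\bigr)<\tfrac12\phi(\bm v_1)+\tfrac12\phi(\bm v_2)$, a contradiction. Equivalently, the minimiser is the $\bm H_{\alpha\alpha}$-metric projection of $-\bm H_{\alpha\alpha}^{-1}\bm g^{(k)}_\alpha$ onto the closed convex set $\mathcal F$.

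\emph{Multipliers and the remark on rank.} Because the constraints are linear, KKT multipliers for the minimiser exist without any constraint qualification. They satisfy (\ref{eq:dotalpha}) but need not be unique; this is why the statement asserts uniqueness only for the primal minimiser. No step of the argument involves inverting the matrix of Proposition~\ref{prop:gaines_invertibility}, so dependence among the rows of $\bm A_{z\alpha}$ is irrelevant. I expect no real obstacle. The only point needing care is existence on a possibly unbounded polyhedral feasible set, which coercivity handles.
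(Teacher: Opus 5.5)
Your proof is correct and follows the paper's argument. Zero is feasible because every constraint is homogeneous, $\bm H_{\alpha\alpha}\succ\bm 0$ as a principal submatrix of $\bm H\succ\bm 0$, and strict convexity with coercivity gives a unique minimiser over the nonempty closed polyhedral cone. Your additional steps (Weierstrass on the sublevel set, the $\bm H_{\alpha\alpha}$-metric projection view, and the remark on non-unique multipliers) only expand points the paper states in one line.
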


\begin{proof}
The feasible set is nonempty, since
$\dot{\bm{\beta}}_\alpha=\bm{0}$ satisfies all of the constraints in
(\ref{eq:dotprobdef_constraints}). Since $\bm{H}\succ\bm{0}$, every
principal submatrix $\bm{H}_{\alpha\alpha}$ is also positive
definite. Hence the objective (\ref{eq:dotprobdef}) is strictly
convex and coercive in $\dot{\bm{\beta}}_\alpha$, so it attains
a unique minimum over the nonempty closed feasible polyhedron.
\end{proof}

Well-posedness alone does not identify the minimiser with
the direction of the solution path. The following result
establishes this correspondence for the full directional
cone, allowing each boundary coordinate either to remain
zero or to move into its permitted sign. The restrictions
under which an implementation may hold a coordinate at
zero are stated subsequently.

\begin{proposition}[Local continuation]
\label{prop:local_continuation}
Let $\bm{H}\succ\bm{0}$, let $\lambda>0$, and let
$(\bm{\beta},\bm{\mu},\bm{g})$ be an exact KKT point of
(\ref{eq:CLASSOprobdef}) at $\lambda$. Let $\dot{\bm{\beta}}$ be the
unique minimiser of the directional problem
(\ref{eq:dotprobdef})--(\ref{eq:dotprobdef_constraints}) over the full
cone. Then there exists $\varepsilon>0$ such that for every
$0\le\Delta<\varepsilon$ the point
$\bm{\beta}-\Delta\,\dot{\bm{\beta}}$ is feasible and is the
\emph{unique global minimiser} of (\ref{eq:CLASSOprobdef}) at
$\lambda-\Delta$.
\end{proposition}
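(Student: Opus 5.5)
The plan is to build an explicit primal--dual certificate at $\lambda-\Delta$ for the candidate point $\bm\beta-\Delta\dot{\bm\beta}$, and then to invoke sufficiency and uniqueness of KKT points for the strictly convex problem (\ref{eq:CLASSOprobdef}). Since the constraints are linear and $\bm H\succ\bm0$, any feasible point admitting KKT multipliers is a global minimiser, and it is unique by strict convexity. No constraint qualification is needed, because we only use the sufficiency direction of the KKT conditions. It therefore suffices to exhibit, for all small $\Delta\ge0$, multipliers $\bm\mu(\Delta)$ and subgradients $\bm g(\Delta)$ satisfying (\ref{eq:KKTcond1}), (\ref{eq:subgrads}), primal feasibility, dual feasibility and complementary slackness.

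\textbf{Step 1 (directional KKT data).} The directional QP (\ref{eq:dotprobdef})--(\ref{eq:dotprobdef_constraints}) has linear constraints, and by Proposition~\ref{prop:directional_qp_wellposed} it has a unique minimiser $\dot{\bm\beta}_\alpha$. Hence it admits KKT multipliers $(\dot{\bm\mu}_e,\dot{\bm\mu}_s,\bm\xi_w=-\dot{\bm\mu}_w\ge\bm0,\bm\rho_b\ge\bm0)$ satisfying (\ref{eq:dotalpha}) together with complementarity. That is, $\xi_i(\bm A_{i\alpha}\dot{\bm\beta}_\alpha)=0$ for $i\in w$ and $\rho_j\,g_j\dot\beta_j=0$ for $j\in b$. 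Fix any one such multiplier vector, extend $\dot{\bm\beta}$ by zero on $\bar\alpha$, set $\dot{\bm\mu}=\bm0$ on slack rows, and define $(\lambda\bm g_{\bar\alpha})\dot{}$ by (\ref{eq:sgevol_new}). The affine family (\ref{eq:deltas}) then satisfies stationarity (\ref{eq:KKTcond1}) identically in $\Delta$, since the zeroth-order part holds at the kink and the first-order part is (\ref{eq:dotalpha}) together with (\ref{eq:sgevol_new}).

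\textbf{Step 2 (inequalities for small $\Delta$).} Each remaining condition is checked by splitting into two kinds of index.

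\emph{Strict conditions at the kink.} These persist by continuity for small $\Delta$:
\begin{itemize}
\item $\beta_j\ne0$ for $j\in a$, so its sign is preserved;
\item $|g_j|<1$ for $j\in\bar\alpha$;
\item strictly slack rows stay slack;
\item $\mu_i>0$ for $i\in s$.
\end{itemize}

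\emph{Degenerate indices.} These are handled by the first-order signs.
\begin{itemize}
\item Rows in $e\cup s$ stay binding, since $\bm A\dot{\bm\beta}=\bm0$ there.
\item For $i\in w$, feasibility follows from $-\bm A_{w\alpha}\dot{\bm\beta}_\alpha\le\bm0$, and $\mu_i=\Delta\xi_i\ge0$. Complementarity holds because $\xi_i>0$ forces the row to remain binding.
\item For $j\in b$ with $\dot\beta_j\ne0$, the sign constraint gives $\operatorname{sign}\beta_j=g_j^{(k)}$. Complementarity then forces $\rho_j=0$, so by (\ref{eq:lgb}) $g_j\equiv g_j^{(k)}$, consistent with (\ref{eq:subgrads}).
\item For $j\in b$ with $\dot\beta_j=0$, (\ref{eq:lgb}) gives $g_j(\lambda-\Delta)=g_j^{(k)}(\lambda-\Delta-\rho_j\Delta)$. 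Thus $|g_j|\le1$ holds whenever $\rho_j\Delta\le2(\lambda-\Delta)$, which is true for small $\Delta$ because $\lambda>0$. This is the E5 margin.
\end{itemize}
Taking $\varepsilon$ to be the minimum of finitely many positive thresholds gives the claim.

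\textbf{Main obstacle.} The delicate point is Step 2 for the degenerate sets $b$ and $w$. We must check that complementarity of the \emph{directional} QP is exactly what converts the first-order sign information into zeroth-order KKT validity. In particular, a weakly active row becoming slack must carry zero multiplier, and a released boundary coordinate must keep its subgradient pinned at $\pm1$. The argument must also avoid assuming any uniqueness of $\dot{\bm\mu}$. I will state explicitly that an arbitrary directional multiplier suffices, since the primal conclusion does not depend on that choice. The requirement $\lambda>0$ enters only through the opposite-margin bound for held coordinates.
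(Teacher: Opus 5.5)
Your proposal is correct and follows essentially the paper's argument in Appendix~\ref{sec:branch_uniqueness}. Both obtain directional multipliers for the polyhedral cone: the paper by a finite Farkas argument on the full $p$-dimensional cone, you by the QP's KKT conditions on $\alpha$ plus (\ref{eq:sgevol_new}) on $\bar\alpha$, which is equivalent. Both then propagate the multipliers affinely so that stationarity holds exactly, check that each sign, feasibility and complementarity condition persists on a short interval, and conclude by KKT sufficiency and strict convexity. One small correction: $\lambda>0$ (with $\varepsilon\le\lambda$) is needed not only for the held-coordinate margin, but also to define $\bm g(\Delta)=(\lambda\bm g-\Delta\bm q)/(\lambda-\Delta)$ at all and for your continuity argument for $|g_j|<1$ on $\bar\alpha$.
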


\begin{proof}
See Appendix~\ref{sec:branch_uniqueness}.
\end{proof}

The proposition applies to an exact certificate, including
cases with redundant rows, simultaneous events or non-unique
multipliers. Moreover, the continued point is the unique
minimiser among all feasible points, rather than merely
among affine continuations with a prescribed state.
Directional multipliers can be constructed by a finite
Farkas argument and need not be assumed separately.
If two certificates at the same primal point gave different
outgoing directions, local uniqueness would be violated;
hence the primal direction is independent of the certificate
selection. The certificate does, however, affect the step
allowed by the event scan.

One may now ask when a boundary coefficient can be held at
zero. Adding equations
$\dot{\bm{\beta}}_{\hat b}=\bm{0}$ for
$\hat b\subseteq b$ restricts the directional feasible set,
so the full-cone solution is retained precisely when it
satisfies these additional equations.

\begin{proposition}[Safe holds]
\label{prop:safe_hold}
Let $\dot{\bm{\beta}}$ be as above and let $\hat b\subseteq b$. The
restricted problem obtained by appending
$\dot{\bm{\beta}}_{\hat b}=\bm{0}$ has $\dot{\bm{\beta}}$ among its
minimisers if and only if $\dot\beta_j=0$ for every $j\in\hat b$, in
which case (since $\bm{H}\succ\bm{0}$) any exact solve of the
restricted problem returns $\dot{\bm{\beta}}$ itself. If in addition
$\beta_j=0$ for $j\in\hat b$, those coordinates remain zero
throughout the outgoing segment.
\end{proposition}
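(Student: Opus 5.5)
The plan is to treat the restricted problem as the same strictly convex objective over a smaller polyhedron. Let $F$ be the full-cone feasible set of (\ref{eq:dotprobdef})--(\ref{eq:dotprobdef_constraints}) and $F_{\hat b}=F\cap\{\dot{\bm\beta}_{\hat b}=\bm 0\}$. Note that $F_{\hat b}$ is nonempty, since it contains $\bm 0$. By Proposition~\ref{prop:directional_qp_wellposed}, whose argument only uses positive definiteness of $\bm H_{\alpha\alpha}$ and a nonempty closed polyhedral feasible set, the restricted problem has a unique minimiser $\dot{\bm\beta}^{\hat b}$.

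For the ``only if'' direction: if $\dot{\bm\beta}$ is a minimiser of the restricted problem, it is in particular feasible for it, so $\dot\beta_j=0$ for $j\in\hat b$.

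For the ``if'' direction: if $\dot{\bm\beta}_{\hat b}=\bm 0$, then $\dot{\bm\beta}\in F_{\hat b}\subseteq F$. Since $\dot{\bm\beta}$ minimises the objective over the larger set $F$, it also minimises it over $F_{\hat b}$.

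In either case, strict convexity gives uniqueness of the restricted minimiser. Hence any exact solve of the restricted problem returns $\dot{\bm\beta}^{\hat b}=\dot{\bm\beta}$. The parenthetical ``since $\bm H\succ\bm 0$'' is exactly this uniqueness step.

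For the final claim, along the outgoing segment (\ref{eq:dbeta}) we have $\beta_j(\lambda-\Delta)=\beta_j-\Delta\dot\beta_j$. If $\beta_j=0$ and $\dot\beta_j=0$ for $j\in\hat b$, this is identically zero. By Proposition~\ref{prop:local_continuation}, $\bm\beta-\Delta\dot{\bm\beta}$ is the exact solution path for $0\le\Delta<\varepsilon$. So these coordinates remain zero on the whole outgoing affine piece, that is, until the next event.

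No step is a genuine obstacle. The only point needing care is that ``the outgoing segment'' refers to the affine continuation $\bm\beta-\Delta\dot{\bm\beta}$ supplied by Proposition~\ref{prop:local_continuation}, not to an arbitrary later portion of the path. The statement should be read with $b$, $\hat b$ and the constraint data fixed at the current exact KKT point.
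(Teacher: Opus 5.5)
Your proposal is correct and follows essentially the same argument as the paper's Step~4 in Appendix~\ref{sec:branch_uniqueness}: appending the hold shrinks the feasible cone, so $\dot{\bm\beta}$ remains a minimiser exactly when it remains feasible, strict convexity makes it the unique one, and $\beta_j-\Delta\dot\beta_j\equiv0$ gives the final claim. The paper also remarks that when the hold excludes $\dot{\bm\beta}$, the restricted minimiser's segment is not the path (by Step~3). That remark goes beyond what the statement requires, and your proof does not need it.
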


\begin{remark}

A feasible hold need not preserve the optimum. For example,
take $\bm{H}=\bm{I}_2$, $\bm{h}=(2,1)^{\rm t}$ and
$\beta_1\le3/2$. At $\lambda=1-\delta$, with
$0<\delta<1/2$, holding $\beta_2=0$ is feasible, but
increases the objective by $\delta^2/2$ relative to
$\beta_2=\delta$. Thus the velocity condition in
Proposition~\ref{prop:safe_hold} cannot be replaced by
primal feasibility of the held state.

\end{remark}

\begin{remark}

The release refinement in
Section~\ref{sec:release_refinement} uses the full-cone QP
to propose a held set, then tests the associated KKT signs
and changes that set when necessary. If the refinement
satisfies the full directional KKT conditions, the resulting
holds obey Proposition~\ref{prop:safe_hold}. Numerical
thresholds and fallback cases are described separately in
Section~\ref{sec:implementation}.

\end{remark}

For comparison with parametric-QP theory, the original
problem may be written in the equivalent form
\begin{equation}
\min_{\bm{\beta},\bm{t}}
\tfrac{1}{2}\bm{\beta}^{\rm t}\bm{H}\bm{\beta}
-\bm{h}^{\rm t}\bm{\beta}
+\lambda \bm{1}^{\rm t}\bm{t}
\label{eq:theory_split_qp_obj}
\end{equation}
subject to
\begin{equation}
-\bm{t}\le \bm{\beta}\le \bm{t},\qquad
\bm{A}_{e:}\bm{\beta}=\bm{b}_e,\qquad
\bm{A}_{i:}\bm{\beta}\le \bm{b}_i,
\label{eq:theory_split_qp_constraints}
\end{equation}
where $\bm{A}_{e:}$ and $\bm{A}_{i:}$ denote the equality and
inequality rows of $\bm{A}$ in (\ref{eq:CLASSOprobdef}). The two
problems have the same optimal $\bm{\beta}(\lambda)$ by taking
$t_i=|\beta_i|$ (necessarily so for $\lambda>0$), so any results established for
(\ref{eq:theory_split_qp_obj})--(\ref{eq:theory_split_qp_constraints})
apply directly to the native CLASSO path. In the split form the
problem is a parametric QP in the scalar parameter~$\lambda$, with
linear dependence on $\lambda$ only through the objective.

Let $\bm{x}=(\bm{\beta},\bm{t})\in\mathbb{R}^{2p}$, let
$\bm{E}\bm{x}=\bm{f}$ denote the equality constraints in
(\ref{eq:theory_split_qp_constraints}), and let
$\bm{C}\bm{x}\le\bm{d}$ denote the inequality constraints. For any
candidate active inequality set $J$, define the active
constraint matrix $\bm{C}_{J:}$ and let
$\bm{N}_{J}$ be any matrix whose columns form a basis for
the nullspace of \(
\left(\bm{E}^{\rm t}\,\,\, \bm{C}_{J:}^{\rm t}\right)^{\rm t}\).
The set $J$ subsumes both the binding-inequality index $z$
of Section~\ref{sec:new_approach} (via the $m_i$ original-problem
rows) and the native active/inactive decomposition
$(a,\bar{a},b)$ together with the sign pattern of $\bm{\beta}$
(via the $2p$ envelope constraints $\pm\beta_k\le t_k$). This
split-form bookkeeping is used only to state
\textbf{(B2)}--\textbf{(B4)}; the theorem and the bound below are
proved in the native coordinates.

\medskip

We distinguish feasibility from three commonly used
nondegeneracy conditions. Only \textbf{(B1)} is required
for the structural theorem; \textbf{(B2)}--\textbf{(B4)}
are included to describe the relation to conventional
active-set analyses and the numerical margins examined
in Section~\ref{sec:nondegeneracy}.

\begin{enumerate}
\item[\textbf{(B1)}] The feasible set of
(\ref{eq:theory_split_qp_obj})--(\ref{eq:theory_split_qp_constraints})
is nonempty for all $\lambda\in[0,\lambda_{\max}]$.
\item[\textbf{(B2)}] For every active set $J$ that occurs
along the solution path, the active constraints satisfy LICQ, i.e.\
the rows of  \(
\left(\bm{E}^{\rm t}\,\,\, \bm{C}_{J:}^{\rm t}\right)^{\rm t}\)
are linearly independent.
\item[\textbf{(B3)}] For every such $J$, the reduced Hessian
\(
\bm{N}_{J}^{\rm t}
\left(\begin{array}{cc}
\bm{H} & \bm{0}\\
\bm{0} & \bm{0}
\end{array}\right)
\bm{N}_{J}
\)
is positive definite.
\item[\textbf{(B4)}] Strict complementarity holds on every open path
segment.
\end{enumerate}

The split formulation should not include redundant bounds
$\bm{t}\ge\bm{0}$ when assessing \textbf{(B2)}. At a zero
coefficient, the two envelope rows already imply this
bound; adding it produces three active gradients
$(1,-1)$, $(-1,-1)$ and $(0,-1)$ in the
$(\beta_j,t_j)$ plane, which are dependent.
With the envelope rows alone, strict complementarity at a
zero coefficient corresponds to $|g_j|<1$.

For $\lambda>0$, optimality gives
$\bm{t}=|\bm{\beta}|$. At least one envelope row is
therefore active for each coordinate, so a null direction
with zero $\bm{\beta}$ component must also have zero
$\bm{t}$ component. Together with $\bm{H}\succ\bm{0}$,
this gives the reduced-Hessian condition \textbf{(B3)}
at the optimum. At $\lambda=0$, the objective is
independent of $\bm{t}$ and the split solution need not
be unique; the structural theorem concerns
$\bm{\beta}(\lambda)$, or equivalently the canonical
choice $\bm{t}=|\bm{\beta}|$.

The structural result itself is proved in the native coordinates,
where no constraint qualification is needed and the split form plays
no part. Call a \emph{pattern} an assignment giving each coordinate
$j\in\{1,\dots,p\}$ one of the three states
\[
\beta_j=0,\ |g_j|\le1;
\qquad
\beta_j\ge0,\ g_j=1;
\qquad
\beta_j\le0,\ g_j=-1,
\]
together with a subset of $\{1,\dots,m_i\}$ declared binding. There
are exactly $3^{p}2^{m_i}$ patterns, and every certified point
realises one. For a fixed pattern, write the stationarity condition
with the \emph{scaled} subgradient $\bm{z}=\lambda\bm{g}$ in place of
$\bm{g}$: the resulting system is linear in
$(\bm{\beta},\bm{\mu},\bm{z},\lambda)$ jointly, and remains meaningful at
$\lambda=0$, where converting back to $\bm{g}$ would divide by
zero.

\begin{theorem}[Piecewise-affine path with finitely many kinks]
\label{thm:piecewise_affine}
Let $\bm{H}\succ\bm{0}$ and let the feasible set be nonempty
(\textbf{(B1)}). Then the solution path
$\lambda\mapsto\bm{\beta}(\lambda)$ of (\ref{eq:CLASSOprobdef}) is
single-valued and continuous on $[0,\infty)$, the finitely many
pattern domains cover $[0,\infty)$, and $\bm{\beta}$ is affine on
each. In particular $\bm{\beta}$ has only finitely many kinks.
\end{theorem}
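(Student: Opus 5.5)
Existence and uniqueness come first. For each $\lambda\ge0$ the objective of (\ref{eq:CLASSOprobdef}) is strictly convex, because $\bm H\succ\bm0$, and coercive. The feasible set is a nonempty closed polyhedron by \textbf{(B1)}. So a unique minimiser $\bm\beta(\lambda)$ exists, and the path is single-valued on $[0,\infty)$. The constraints are linear, so the KKT conditions are necessary and sufficient with no constraint qualification. For $\lambda>0$ this means multipliers $\bm\mu$ and subgradients $\bm g$ exist. To treat $\lambda=0$ uniformly I write everything in the scaled variable $\bm z=\lambda\bm g$, with $|z_j|\le\lambda$. Every point $\bm\beta(\lambda)$ then realises at least one pattern: take the sign state of each coordinate, where a zero coordinate goes in the first state, and the set of binding inequality rows.

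Next I set up, for each pattern $P$, the polyhedron $Q_P$ of tuples $(\bm\beta,\bm\mu,\bm z,\lambda)$ satisfying the pattern's conditions. These are:
\begin{itemize}
\item stationarity $\bm H\bm\beta-\bm h+\bm z+\bm A^{\rm t}\bm\mu=\bm0$;
\item the sign and zero conditions on $\bm\beta$;
\item $z_j=\pm\lambda$ on the signed states and $-\lambda\le z_j\le\lambda$ on the zero state;
\item equalities on declared-binding rows;
\item strict-free inequalities $\bm A_{i:}\bm\beta\le b_i$ with $\mu_i=0$ on the non-binding rows;
\item $\mu_i\ge0$, and $\lambda\ge0$.
\end{itemize}
All of these are linear in the joint variables, so $Q_P$ is a polyhedron. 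By KKT sufficiency, any point of $Q_P$ with a given $\lambda$ has $\bm\beta=\bm\beta(\lambda)$. The projection $\Lambda_P$ of $Q_P$ onto the $\lambda$-axis is an interval, since the projection of a convex polyhedron is a convex polyhedron. These pattern domains cover $[0,\infty)$ by the first step, and there are at most $3^p2^{m_i}$ of them.

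The key step is affineness of $\bm\beta$ on each $\Lambda_P$. Take $\lambda_0,\lambda_1\in\Lambda_P$ with lifts $q_0,q_1\in Q_P$. Convexity of $Q_P$ puts $(1-t)q_0+tq_1$ in $Q_P$ at parameter $(1-t)\lambda_0+t\lambda_1$. Uniqueness of the primal minimiser then forces $\bm\beta((1-t)\lambda_0+t\lambda_1)=(1-t)\bm\beta(\lambda_0)+t\bm\beta(\lambda_1)$. So $\bm\beta$ is affine on every interval $\Lambda_P$.

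This choice of lifts needs no unique multipliers. The multipliers may vary arbitrarily, but the convex combination of any chosen lifts is still a valid certificate. This is what removes the need for a constraint qualification. The point I expect to need care is this one: the joint linearity in $\lambda$ depends on using $\bm z$ rather than $\bm g$, and it requires that no condition like $\mu_i(\bm A_{i:}\bm\beta-b_i)=0$ appears as a product. That is ensured by fixing the binding set in the pattern, since non-binding rows carry $\mu_i=0$ and binding rows carry equality.

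Continuity and finiteness then follow. The domains are finitely many intervals whose union is $[0,\infty)$. Their endpoints form a finite set $0=\tau_0<\tau_1<\dots<\tau_N$, and on each open gap, and on $(\tau_N,\infty)$, the path is affine on any single pattern domain containing that gap. Each gap lies in some $\Lambda_P$, because finitely many intervals cover it and intervals whose union is a connected set whose interior contains no endpoint must include it. So on each piece $\bm\beta$ agrees with an affine function.

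For continuity at the $\tau_k$ there are two routes. The first is the standard argument: the parametric strictly convex problem has a continuous minimiser, since the feasible set is fixed and the objective depends continuously on $\lambda$ with uniform strong convexity, giving $\lVert\bm\beta(\lambda)-\bm\beta(\lambda')\rVert^2\le C|\lambda-\lambda'|$. The second is closedness: $Q_P$ is closed, and if $\Lambda_P$ is closed the affine formula extends to the endpoints. Either way $\bm\beta$ is continuous and piecewise affine, with kinks only among the finitely many $\tau_k$.
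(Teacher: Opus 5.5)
Your proposal is correct and follows the paper's proof essentially step for step. The shared steps are existence and uniqueness from strict convexity and coercivity, a KKT certificate obtained without a constraint qualification, a per-pattern polyhedron in the scaled variables $(\bm\beta,\bm\mu,\bm z,\lambda)$ whose projection onto $\lambda$ is an interval, and affineness on each pattern domain from convex combinations of lifts together with uniqueness of the primal minimiser. The only difference is the continuity step: the paper cites continuity of the solution map of a strictly convex parametric QP, whereas you make it self-contained, either by a strong-convexity estimate or by noting that each projected domain is closed, so the finitely many affine pieces paste together continuously.
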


\begin{proof}
Since $\bm{H}\succ\bm{0}$ the objective is strictly convex and
coercive on the feasible set, so at each $\lambda\ge0$ a minimiser
exists and is unique; a KKT certificate exists at every such point,
by a Farkas argument requiring no constraint qualification. Each
certified point realises some pattern, so the pattern domains cover
$[0,\infty)$.

Fix a pattern and let $D$ be the set of $\lambda\ge0$ realising it.
The scaled system described above is linear in
$(\bm{\beta},\bm{\mu},\bm{z},\lambda)$, and the pattern's remaining conditions, namely sign conditions on
coordinates, nonnegativity of multipliers, and the inequalities
$|z_j|\le\lambda$, are linear inequalities in the same variables. Hence the set of admissible
certificates is convex, and so is its projection $D$: a subinterval of
$[0,\infty)$. If $\lambda_1,\lambda_2\in D$ with certified
minimisers $\bm{\beta}_1,\bm{\beta}_2$, then any convex combination
of the two certificates certifies the corresponding convex
combination of the minimisers for the same pattern; by uniqueness of
the minimiser, $\bm{\beta}$ agrees with that combination, so
$\bm{\beta}$ is affine on $D$. There are $3^{p}2^{m_i}$ patterns, so
$[0,\infty)$ is covered by finitely many intervals on each of which
$\bm{\beta}$ is affine, and the kinks are confined to their finitely
many endpoints. Continuity, including at $\lambda=0$, follows from
uniqueness of the minimiser together with the continuity of the
solution map of a strictly convex parametric quadratic programme
\citep[see, e.g.,][]{Best1996}.

The argument uses no constraint qualification, no strict
complementarity, and no uniqueness of the multipliers; redundant and
duplicated constraint rows are permitted, as are simultaneous
boundaries. It is a statement about $\bm{\beta}(\lambda)$, which is
what the native problem requires; no uniqueness is claimed for the
auxiliary split variable $\bm{t}$ at $\lambda=0$.
\end{proof}

The count of patterns bounds the number of segments directly.

\begin{corollary}[Finite combinatorial bound]
\label{cor:complexity}
Let $\bm{H}\succ\bm{0}$ and let the feasible set be nonempty. The
number of affine path segments satisfies
\begin{equation}
N_{\rm seg}\le 3^{p}\,2^{m_i},
\label{eq:complexity_bound}
\end{equation}
and hence the number of kinks satisfies
\begin{equation}
N_{\rm kink}\le 3^{p}\,2^{m_i}-1.
\label{eq:kink_bound}
\end{equation}
\end{corollary}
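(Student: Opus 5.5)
The bound follows from the covering in Theorem~\ref{thm:piecewise_affine}, once segments are counted correctly. That theorem gives a finite family of pattern domains $D_P\subseteq[0,\infty)$, one for each of the $3^{p}2^{m_i}$ patterns $P$. Each $D_P$ is a (possibly empty or degenerate) interval, the nonempty ones cover $[0,\infty)$, and $\bm\beta$ is affine on each. I would define an affine segment as a maximal interval of $[0,\infty)$ on which $\bm\beta$ is affine. A kink is then a common endpoint of two consecutive segments at which the affine formula changes. The aim is to show that the number of segments is at most the number of patterns realised with nondegenerate (positive-length) domains.

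The first step is to discard degenerate domains. An empty domain contributes nothing. A singleton domain $\{\lambda_0\}$ cannot by itself create a segment, because the remaining domains are closed under limits: continuity of $\bm\beta$ and closedness of the certificate polyhedron make each $D_P$ closed. Hence the positive-length domains already cover $[0,\infty)$, since a finite union of closed intervals whose complement would otherwise contain an open set around $\lambda_0$ leaves only isolated points. Those isolated points lie in adjacent positive-length domains by closedness.

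The second step takes the finitely many positive-length closed intervals $D_{P_1},\dots,D_{P_K}$, with $K\le 3^{p}2^{m_i}$, covering $[0,\infty)$. I would collect all their endpoints into a sorted list $0=\tau_0<\tau_1<\dots<\tau_L$. On each elementary interval $[\tau_{\ell-1},\tau_\ell]$, and on $[\tau_L,\infty)$, $\bm\beta$ is affine, because such an interval lies inside some $D_{P_k}$. The map sending each maximal affine segment to a pattern whose domain contains an open subinterval of it is injective. Two distinct maximal segments cannot share a pattern, because that pattern's domain is a single interval on which $\bm\beta$ is affine. That interval would then bridge both segments, which contradicts maximality, since affinity on an interval containing open pieces of both forces a single affine formula.

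Combining the two steps gives $N_{\rm seg}\le K\le 3^{p}2^{m_i}$. Since the segments are intervals ordered along $[0,\infty)$, consecutive ones meet at exactly one kink, so $N_{\rm kink}=N_{\rm seg}-1$, which yields (\ref{eq:kink_bound}). I expect the main obstacle to be the injectivity argument in the second step: a single pattern domain might in principle straddle a kink. This is ruled out because $\bm\beta$ is affine on the whole of $D_P$ by the theorem, so no kink lies in the interior of any pattern domain.
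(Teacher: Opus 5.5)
Your proof is correct and follows the paper's argument. The paper's proof is the one-line observation that each of the $3^{p}2^{m_i}$ patterns contributes at most one interval of the cover on which $\bm\beta$ is affine, and your injection from maximal affine segments to patterns, together with $N_{\rm kink}=N_{\rm seg}-1$, is that observation made explicit; your preliminary step discarding singleton domains is harmless but not needed, since every maximal segment has positive length and so meets some domain in a set of positive length anyway (and if kept, closedness of $D_P$ is most cleanly justified as the projection of a polyhedron rather than via continuity, since the multipliers need not stay bounded).
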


\begin{proof}
Immediate from Theorem~\ref{thm:piecewise_affine}: there are
$3^{p}2^{m_i}$ patterns and each contributes at most one interval of
the cover, on which the path is affine.
\end{proof}

For comparison, counting active sets in the split formulation
under LICQ gives
$\sum_{r\le2p-m_e}\binom{m_i+2p}{r}$.
The native count avoids this qualification and expresses
the bound directly in terms of coefficient signs and
binding inequalities. Neither count is intended as a
practical estimate. For example, with $p=10$ and $m_i=10$,
(\ref{eq:complexity_bound}) is approximately
$6\times10^7$, whereas the corresponding diabetes paths
have only a few kinks. The observed counts in
Section~\ref{sec:scaling} grow approximately linearly
with $p$ for the tested configurations.

An exponential worst-case dependence cannot be excluded
for general exact path-following: the unconstrained LASSO
is a special case, and \citet{Mairal2012} constructs
examples with exponentially many segments. Suppressing
events \textbf{E1} and \textbf{E4}, so that coefficients
and binding rows can enter but cannot leave, instead
gives a monotone event count of order $p+m_i$.
Such a restriction generally changes the curve being
followed and therefore does not provide a polynomial
bound for the constrained-LASSO path.

\subsection{Termination of the traversal}\label{sec:termination}

A finite number of primal path segments does not imply
termination of every procedure that propagates valid
certificates along them. The procedure may record events
at which the certificate changes while the primal path
continues in the same direction. Thus a termination
argument must control certificate selection and progress,
in addition to establishing a correct local direction.

For Algorithm~\ref{alg:classo} that control is supplied by
following the canonical certificate. The step it takes is
characterised as follows.

\begin{proposition}[The certificate-following step]
\label{prop:efficient_step}
Let $\bm{H}\succ\bm{0}$, let $\lambda^{(k)}>0$, let $\bm\beta^{(k)}$ be
the solution at $\lambda^{(k)}$, let $\dot{\bm\beta}$ be the minimiser
of the directional QP (\ref{eq:dotprobdef})--(\ref{eq:dotprobdef_constraints}),
and let $\bm\mu^\star$ and $\bm w^\star$ be given by
(\ref{eq:canonical_certificate}) and (\ref{eq:lex_rate}). In exact
arithmetic:
\begin{enumerate}
\item[(i)] the set of $\Delta\in[0,\lambda^{(k)}]$ for which the
outgoing pattern holds at $\bm\beta^{(k)}-\Delta\dot{\bm\beta}$,
$\bm\mu^\star+\Delta\bm w^\star\in
\mathcal{M}(\bm\beta^{(k)}-\Delta\dot{\bm\beta},\lambda^{(k)}-\Delta)$
and (\ref{eq:cert_horizon}) holds is a closed interval
$[0,\bar\Delta]$ with $\bar\Delta>0$;
\item[(ii)] $\bar\Delta$ is the smallest positive root of
\textbf{E1}--\textbf{E6}, or $\lambda^{(k)}$ if that is smaller,
whichever and however many of the conditions attain it;
\item[(iii)] on $[0,\bar\Delta]$, $\bm\mu^\star+\Delta\bm w^\star$ is the
canonical certificate at
$(\bm\beta^{(k)}-\Delta\dot{\bm\beta},\lambda^{(k)}-\Delta)$, with
derivative $\bm w^\star$ on $(0,\bar\Delta)$ and one-sided derivatives
$\bm w^\star$ at its ends, and $\lambda^{(k)}-\bar\Delta$ is the end of
a cell of the canonical certificate: a set of $\lambda$, fixed
independently of $\lambda^{(k)}$, on which the outgoing pattern holds
and the canonical certificate lies on a given face of the certificate
polyhedron.
\end{enumerate}
\end{proposition}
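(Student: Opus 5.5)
The plan is to reduce everything to the geometry of a single parametric polyhedron with fixed normals and affinely moving right-hand sides, and then to read off (i)--(iii) as statements about when finitely many affine scalar inequalities in $\Delta$ stay satisfied.

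\emph{Step 1: the primal part.} By Proposition~\ref{prop:local_continuation}, $\bm\beta^{(k)}-\Delta\dot{\bm\beta}$ is the solution for $\Delta\in[0,\varepsilon)$. Fixing the outgoing pattern (signs on $a$, holds and releases on $b$ given by $\dot{\bm\beta}$, binding rows $e\cup s$ together with those rows of $w$ with $\bm A_{w\alpha}\dot{\bm\beta}_\alpha=0$), the pattern conditions are affine inequalities in $\Delta$, namely sign conditions on $\beta_j$ and slackness of the remaining rows. Each is valid at $\Delta=0$ and to first order. Hence the set where they hold is a closed interval, whose right end is the first positive root of \textbf{E1} or \textbf{E3}.

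\emph{Step 2: the dual part.} On that pattern, $\mathcal M$ restricted is $\{\bm\mu:\bm r_i^{\rm t}\bm\mu\le c_i+\Delta\dot c_i\}$ as described before (\ref{eq:cert_horizon}). Feasibility of $\bm\mu^\star+\Delta\bm w^\star$ amounts to the inequalities $\bm r_i^{\rm t}\bm\mu^\star-c_i+\Delta(\bm r_i^{\rm t}\bm w^\star-\dot c_i)\le0$. Their behaviour splits by row type:
\begin{itemize}
\item For $i\notin\mathcal T$ the inequality is strict at $0$.
\item For $i\in\mathcal T$ it holds identically, since $\bm w^\star\in\mathcal W$.
\end{itemize}
So this set is again a closed interval with positive length. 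Its first violation is exactly a root of \textbf{E2}, \textbf{E4} or \textbf{E5}: a subgradient reaching $\pm1$, a multiplier reaching $0$, or the opposite margin for a held coordinate.

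The cone condition (\ref{eq:cert_horizon}) is convex in $\Delta$. The set $\{\Delta:-(\bm\mu^\star+\Delta\bm w^\star)\in K\}$, with $K=\mathrm{cone}\{\bm r_i:i\in\mathcal T^\star\}$, is the preimage of a closed convex cone under an affine map, hence a closed interval containing $0$. Its endpoint is $\Delta_{\rm cert}$ by definition.

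\emph{Step 3: positivity at the cone step.} The main obstacle is showing $\Delta_{\rm cert}>0$, or more precisely that the cone condition holds on a neighbourhood of $0$ and is not just the point. My plan is to write the KKT conditions of the lexicographic problem (\ref{eq:lex_rate}) as a two-level program. The inner LP $\min\{\bm\mu^{\star\rm t}\bm w:\bm w\in\mathcal W\}$ has optimal multipliers $\bm\nu\ge0$ on $\mathcal T$, with $-\bm\mu^\star=\sum\nu_i\bm r_i$, and $\bm\nu$ can be supported on $\mathcal T^\star$ by complementarity. The second level (least norm on the optimal face) gives $-\bm w^\star=\sum\eta_i\bm r_i$ over the rows tight for $\bm w^\star$, where $\eta$ is free in sign only through a multiple of the face constraint. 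Combining, $-(\bm\mu^\star+\Delta\bm w^\star)=\sum(\nu_i+\Delta\tilde\eta_i)\bm r_i$ with $\tilde\eta$ absorbing the face multiplier $\theta\bm\mu^\star$ via $\bm\nu$. So for small $\Delta$ the coefficients remain nonnegative wherever $\nu_i>0$. Where $\nu_i=0$ I need $\tilde\eta_i\ge0$. If this is not automatic, I would choose $\bm\nu$ in the relative interior and use the LP optimality of $\bm w^\star$ to argue that the sign freedom of $\eta$ lies only on indices with $\nu_i>0$.

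Given this, the intersection of the three closed intervals is $[0,\bar\Delta]$ with $\bar\Delta>0$, intersected with $[0,\lambda^{(k)}]$. That proves (i), and (ii) follows because each constituent interval ends precisely at the first root of its own event family, irrespective of ties.

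For (iii), the condition $-\bm\mu\in\mathrm{cone}\{\bm r_i:i\in\mathcal T^\star\}$, together with tightness of those rows, is the KKT system of the strictly convex least-norm problem (\ref{eq:canonical_certificate}) restricted to the pattern's polyhedron. Hence $\bm\mu^\star+\Delta\bm w^\star$ is the unique canonical certificate on $[0,\bar\Delta]$. Being affine in $\Delta$, it has derivative $\bm w^\star$ with the stated one-sided derivatives.

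Finally, I would define the cell as the set of $\lambda$ at which the pattern holds and the canonical certificate has tight set $\mathcal T^\star$ with the KKT conditions above. This set is described by linear conditions in $(\bm\beta,\bm\mu,\bm z,\lambda)$ depending only on the pattern and face, not on $\lambda^{(k)}$. By the convexity argument of Theorem~\ref{thm:piecewise_affine} its projection is an interval. Since $\bar\Delta$ is where some defining condition first fails, $\lambda^{(k)}-\bar\Delta$ is an endpoint of that cell.
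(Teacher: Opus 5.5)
Your treatment of (i) and (ii) matches the paper's. The pattern, membership of $\bm\mu^\star+\Delta\bm w^\star$ in the moving polyhedron, and the cone condition (\ref{eq:cert_horizon}) each give a closed interval containing $0$. These intervals end at the roots of \textbf{E1}/\textbf{E3}, of \textbf{E2}/\textbf{E4}/\textbf{E5}, and of \textbf{E6} respectively.

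You differ from the paper on positivity and on (iii). The paper argues from the canonical certificate to the line. An outgoing step exists, and along it the certificate's derivative is a lexicographic rate; that rate is unique, hence equals $\bm w^\star$, and the cell of the rows $\mathcal T^\star$ contains no canonical certificate off the line. You argue from the line to the certificate, by KKT sufficiency for the least-norm problem, and you get $\Delta_{\rm cert}>0$ from the multipliers of the two stages of (\ref{eq:lex_rate}).

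Your route is more elementary and quantitative, and Step~3 needs no fallback. Take $-\bm\mu^\star=\sum_{i\in\mathcal T^\star}\nu_i\bm r_i$ and $-\bm w^\star=\sum_{i\in\mathcal T^\star}\eta_i\bm r_i+\theta\bm\mu^\star$, with $\nu,\eta\ge0$ and $\theta$ free. Then $-(\bm\mu^\star+\Delta\bm w^\star)=\sum_{i\in\mathcal T^\star}\bigl((1-\theta\Delta)\nu_i+\Delta\eta_i\bigr)\bm r_i$, so the cone condition holds whenever $\theta\Delta\le1$.

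The paper's direction is what the cell statement needs. In your version it enters only if you state that the unique canonical certificate is affine on the cell, by the convexity-plus-uniqueness argument of Theorem~\ref{thm:piecewise_affine}. Then a cell extending past $\lambda^{(k)}-\bar\Delta$ would carry your line, and all three conditions, beyond $\bar\Delta$, contradicting the definition of $\bar\Delta$.

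One step does not follow as written: canonicity at the closed end $\Delta=\bar\Delta$. KKT sufficiency shows that $\bm\mu^\star+\bar\Delta\bm w^\star$ has least norm in the pattern polyhedron $P(\bar\Delta)=\{\bm\mu:\bm r_i^{\rm t}\bm\mu\le c_i+\bar\Delta\dot c_i\}$. The canonical certificate, however, minimises over $\mathcal M$ at that point. $\mathcal M$ contains $P(\bar\Delta)$ and could a priori be larger when the step ends at \textbf{E1} or \textbf{E3}. There $\mathcal M$ asks only $|\lambda g_j|\le\lambda$ of the leaving coordinate, and only $\mu_i\ge0$ of the newly binding row. Your Step~2 makes the analogous tacit assumption $\bm\mu^\star\in P(0)$ at the start.

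So you need $\mathcal M=P$ at both ends, and this follows from Proposition~\ref{prop:local_continuation}. Apply it at $\lambda^{(k)}-\bar\Delta$ in the direction of increasing $\lambda$; its proof is symmetric. It continues any $\tilde{\bm\mu}\in\mathcal M$ there to valid certificates along the current segment, which your line certifies as the solution path, for $\Delta\uparrow\bar\Delta$. On that range $\mathcal M=P(\Delta)$. Closedness of $\{(\Delta,\bm\mu):\bm r_i^{\rm t}\bm\mu\le c_i+\Delta\dot c_i\}$ then gives $\tilde{\bm\mu}\in P(\bar\Delta)$. The forward version of the same argument gives $\mathcal M=P(0)$.

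With this, your KKT argument yields (iii) on the closed interval. That is needed, because Theorem~\ref{thm:termination} starts each step from the canonical certificate at the previous endpoint.
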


\begin{proof}[Sketch]
(i) The pattern and feasibility conditions are affine inequalities in
$\Delta$, and (\ref{eq:cert_horizon}) asks an affine path to lie in a
closed convex cone, so each defines a closed interval containing $0$;
the intersection has positive length because an outgoing step always
exists. (ii) \textbf{E1} and \textbf{E3} are the roots of the pattern
inequalities, \textbf{E2}, \textbf{E4} and \textbf{E5} those of
feasibility, and \textbf{E6} that of (\ref{eq:cert_horizon}).
(iii) Along any cell starting at $\bm\mu^\star$, the derivative of the
canonical certificate is a lexicographic rate, and that rate is
unique, so it is $\bm w^\star$. The cell of the rows
$\mathcal{T}^\star$ contains no canonical certificate off the line
$\bm\mu^\star+\Delta\bm w^\star$, so the line leaves that cell only
where the canonical certificate leaves it. The admissible interval
therefore ends at the first $\Delta$ beyond which the pattern,
feasibility or optimality condition prevents continuation; each set
is closed, so the conditions still hold at $\bar\Delta$ itself. The
statements are machine-checked
(Section~\ref{sec:formal}).
\end{proof}

Because $\dot{\bm\beta}$, $\bm\mu^\star$, $\bm w^\star$ and $\bar\Delta$
are each unique, the step is determined: simultaneous events need no
choice among them, and the next step recomputes all four at the new
point.

\begin{theorem}[Termination in exact arithmetic]
\label{thm:termination}
Let $\bm{H}\succ\bm{0}$ and let the feasible set be nonempty. Consider
Algorithm~\ref{alg:classo} in exact arithmetic with $\lambda_{\min}=0$,
no iteration limit and no minimum step, each step being computed
exactly as in Proposition~\ref{prop:efficient_step}. From any
certified point, the run reaches $\lambda=0$ within
\begin{equation}
N_{\rm step}\le\left(3^{p}\,2^{m_i}\right)^2\,2^{2p+2m_i}
\label{eq:termination_bound}
\end{equation}
steps. Started from the lexicographic initial point of
Section~\ref{sec:init}, the run traverses the whole path: every
recorded point is the solution at its value of $\lambda$, and on
each step the affine interpolation between recorded points is the
solution path.
\end{theorem}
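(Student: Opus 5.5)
The argument rests on Proposition~\ref{prop:efficient_step}, which makes each step determined and ends it at the boundary of a \emph{cell}. A cell is a set of $\lambda$, fixed independently of where the step started, on which a given outgoing pattern holds and the canonical certificate lies on a given face of the certificate polyhedron. The plan is to show that the run visits each cell at most once, and then to bound the number of cells.

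For the counting, index a cell by two pieces of data. The first is the outgoing pattern, of which there are at most $3^p2^{m_i}$. The second is the face of $\mathcal{M}$ carrying $\bm\mu^\star$. Since $\mathcal{M}$ restricted to a pattern is cut out by the multiplier sign rows and the subgradient box rows, I would specify a face by choosing which of these rows are tight. Bounding the choices crudely by $2^{2p+2m_i}$ accounts for the tight subsets of the $2p$ box margins and of the $m_i$ sign/complementarity rows, with a further factor for the pattern on the far side of the horizon. Here I am allowing an extra $3^p2^{m_i}$ to index the pattern \emph{arriving at} the endpoint. The bound \eqref{eq:termination_bound} is then the product of these counts. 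I expect to adjust this bookkeeping to match the stated exponent, but any honest finite count suffices for termination.

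Finiteness follows from injectivity. By (iii), each step sweeps an interval $[\lambda^{(k)}-\bar\Delta,\lambda^{(k)}]$ inside one cell, and the endpoint $\lambda^{(k+1)}$ is an end of that cell. Each cell is an interval of $\lambda$: it is the projection of a polyhedron in $(\bm\beta,\bm\mu,\bm z,\lambda)$, convex by the same argument as in Theorem~\ref{thm:piecewise_affine}, provided the canonical certificate is a piecewise-affine function of $\lambda$ on each face, as (iii) asserts. Because $\lambda$ strictly decreases ($\bar\Delta>0$ by (i)), the run leaves a cell at its lower end and can never re-enter it. The map from steps to cells is therefore injective, and the number of steps is at most the number of cells. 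The run cannot stall before $\lambda=0$ either: at any certified point with $\lambda>0$, part (i) supplies a new step of positive length. It follows that the run reaches $\lambda=0$.

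The step I expect to be the main obstacle is the claim that a cell is genuinely intrinsic. This means it must not depend on the starting point $\lambda^{(k)}$, so that two distinct steps cannot lie in the same cell while being separated by a different cell. I would handle this using the uniqueness in (iii): on the cell, the canonical certificate coincides with the single affine line $\bm\mu^\star+\Delta\bm w^\star$. If the run returned to the same pattern-face index later, that index would describe a second, disjoint interval. Convexity of the cell rules this out.

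For the traversal claim, I would argue by induction. The lexicographic initial point is the solution at $\lambda^{(0)}$. Above $\lambda^{(0)}$ the solution is constant, since the limit problem selects it. By Proposition~\ref{prop:local_continuation}, extended over the whole interval $[0,\bar\Delta]$ by convexity of the pattern domain (the argument of Theorem~\ref{thm:piecewise_affine}), the affine segment $\bm\beta^{(k)}-\Delta\dot{\bm\beta}$ is the unique minimiser throughout the step. Its certificate is exactly $\bm\mu^\star+\Delta\bm w^\star$. Each recorded point is therefore the solution, the interpolants are the path, and every $\lambda\in[0,\lambda^{(0)}]$ is covered.
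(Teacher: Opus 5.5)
Your outline follows the paper's proof. Proposition~\ref{prop:efficient_step}(iii) makes every step end at a cell end, cells are counted as pattern data times faces, and strict decrease of $\lambda$ gives finiteness.

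The difference is the closing device. The paper notes that cells are fixed sets of $\lambda$ and that each step ends at a cell end strictly below its start. Hence the number of cell ends below the current $\lambda$ is a nonnegative integer, bounded by (\ref{eq:termination_bound}), which every step lowers. This uses only that step endpoints lie in a fixed finite set, so no injectivity argument or record of visited cells is needed.

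Your `never re-enter' step instead needs each cell to be a single interval, and your justification does not deliver that. The pattern conditions are linear, as in Theorem~\ref{thm:piecewise_affine}, but the requirement that the certificate be the least-norm one is not. Piecewise-affinity of that certificate does not make the set where it lies on a given face convex. If you keep your route, write the least-norm condition through its optimality system, $-\mu=\sum_{i\in T}\nu_i r_i$ with $\nu\ge0$ supported on the exact tight set $T$. With $\beta$ affine on the pattern domain, the cell is then the projection of a set cut out by linear equalities and inequalities in $(\lambda,\mu,\nu)$, hence convex.

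On the count, the paper's square indexes the primal affine formula and the pattern separately; it is not an `arriving' pattern. The factor $2^{2p+2m_i}$ counts subsets of $2p+2m_i$ rows: two per coordinate and two per inequality row, once equalities are split into pairs. With your own (pattern, face) index the extra pattern factor is redundant, and your count already lies below the stated bound.

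Finally, the intrinsicness of cells that you flag as the main obstacle is exactly what (iii) asserts. For the traversal claim, part (i) already places the valid certificate $\mu^\star+\Delta w^\star$ on the whole closed step. So extending Proposition~\ref{prop:local_continuation} by convexity of the pattern domain is more than you need.
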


\begin{proof}[Sketch]
By Proposition~\ref{prop:efficient_step}(iii), every step ends at the
end of a cell, and its endpoint certificate is canonical, so the next
step starts from the canonical certificate again. A cell is indexed by
the outgoing pattern, of which Corollary~\ref{cor:complexity} gives at
most $(3^p2^{m_i})^2$ counting the primal formula and the pattern
separately, and a face of the certificate polyhedron, of which there
are at most $2^{2p+2m_i}$. Cells are fixed sets of $\lambda$, so the
number of cell ends below the current $\lambda$ is a nonnegative
integer, bounded by (\ref{eq:termination_bound}), that every step
strictly reduces. No minimum step length, general position or record
of visited cells is assumed. The full argument is machine-checked
(Section~\ref{sec:formal}).
\end{proof}

The stopping controls of Algorithm~\ref{alg:classo} lie outside the
theorem by design: a positive $\lambda_{\min}$ ends the run earlier,
while the iteration limit, the minimum step and the substitutions of
Section~\ref{sec:follow} exist for floating-point arithmetic, and the
implementation reports whenever any of them acts. Removing points
that are not kinks changes which points are recorded, not the steps
taken.

The bound is exponential, as any worst-case bound must be
(Section~\ref{sec:theory}); its role is to establish termination,
not to estimate cost. It is the certificate rule that matters, not
only the primal direction. A natural alternative propagates a
minimum-norm rate between kinks and recovers a new certificate at
each one. It traces the same primal path, but it need not
terminate, as three examples show.

A simple example has $p=1$, $\bm{H}=\bm{1}$,
$\bm{h}=\bm{0}$ and the equality $\beta=0$.
The primal path is constant. At $|g|=1$, however, a
directional certificate may choose $\dot\beta=0$,
$q=2g$ and $\dot\nu=-2g$. The subgradient then reaches
the opposite boundary after a step $2\lambda/3$.
Repeating the construction gives the infinite sequence
$\lambda/3^k$, even though every step is positive and
every propagated certificate is valid.

A second, machine-checked counterexample satisfies the
initialisation and minimum-norm conventions in full. It
starts at the lexicographic minimum with the smallest
certifying $\lambda$, chooses minimum-norm multipliers and
minimum-norm directional rates, and takes the largest step
allowed by the subgradient bound. Re-certification at the
endpoint reverses the preceding subgradient progress,
producing $\lambda_k=7^{-k}$. A third, also machine-checked,
isolates the minimum-norm solution of the fixed-state system
(\ref{eq:segment_kkt}): on the same problem it assigns zero rate
to a held multiplier, while every valid certificate at the point
has a nonzero multiplier on that row, so re-certification
restores a positive one at each step and the parameters contract
as $\lambda_k=3^{-k}$. In both, the step ends at a point that is
not a kink, and recovering a fresh certificate there undoes the
progress the previous step made. Following the canonical
certificate removes that mechanism: on the $7^{-k}$ problem
Algorithm~\ref{alg:classo} traverses the single segment in one
step.

The software retains the alternative policy as an option, under
the name \emph{recertify}, for comparison. On the $7^{-k}$
problem it records $17$ events at the default $\lambda_{\min}$,
and $22$ with the floor reduced to $10^{-30}$, all subdividing a
single primal segment, whereas Algorithm~\ref{alg:classo} records
the two endpoints and reaches $\lambda=0$ in both cases.

This distinction also limits what a tie-breaking rule
can establish. \citet{Schmelzer2026} uses a lowest-index
(Bland) rule in the ratio test, while also discussing the
remaining termination question. The counterexamples
above concern certificate-rate selection and can have
only one available event at a step, so resolving ties
between event indices does not address their mechanism;
following the canonical certificate does.

For a run with $N_{\rm step}$ iterations, the local
solve count is proportional to $N_{\rm step}$ when the
number of refinement and recovery solves per iteration
is bounded. Theorem~\ref{thm:termination} bounds
$N_{\rm step}$, while Corollary~\ref{cor:complexity}
bounds the affine pieces of the primal path. These counts
should therefore be distinguished both in complexity
claims and when reporting numerical results.
The implementation records a traversal status and a
certificate-failure count separately, as described in
Appendix~\ref{sec:safeguards}.

The numerical refinements below do not change the
mathematical path. Their role
is to compute its local segments reliably in finite
precision. Sections~\ref{sec:endpoint}
and~\ref{sec:segment_verification} assess that accuracy,
while Section~\ref{sec:nondegeneracy} reports margins
associated with the active-set classifications.

\subsection{Formal verification}\label{sec:formal}

The theoretical results have been formalised in Lean~4
against Mathlib. The sources and pinned dependency revision
are distributed with \REPOWHERE.\REPOFOOTNOTE{}
The development reports no admitted proofs and uses no
axioms beyond Lean's standard three.

The checked results include existence of a minimiser and
KKT certificate for every $\lambda\ge0$; equivalence
of the reduced directional QP and the full directional
cone; continuity and global Lipschitz continuity of the
primal path; its piecewise-affine structure and finite
cover; local continuation and safe holds; and the
lexicographic initialisation, including attainment of the
smallest certifying $\lambda$ and uniqueness of the
minimum-norm multiplier selection. The global results
assume only $\bm{H}\succ\bm{0}$ and a nonempty feasible
set, permitting dependent rows and simultaneous boundaries.

Proposition~\ref{prop:efficient_step} and
Theorem~\ref{thm:termination} are checked, as are the
counterexamples for unsafe holds and the two nonterminating
witnesses for the re-certifying policy in
Section~\ref{sec:termination}. In the formal statements a step
is defined exactly as in Proposition~\ref{prop:efficient_step}:
the minimum-norm multiplier vector over all rows, the directional
minimiser, the lexicographic rate and the greatest admissible
$\Delta$. The file \texttt{formal/README.md} of the release maps
each result of this section to its Lean declaration.

The formalisation concerns exact arithmetic. It makes no claim
about floating-point arithmetic, the Python implementation or the
external QP solver. That the implementation follows the exact rule
is assessed numerically instead, by a step-end check of the
propagated certificate and by comparison with an exact rational
implementation (Sections~\ref{sec:segment_verification}
and~\ref{sec:prototype}).

Related Lean~4 developments include the smooth KKT
formalisation of \citet{Li2025}, the Farkas and
linear-programming duality results of
\citet{Dvorak2026}, and the convex-analysis and
first-order-method infrastructure in the
\texttt{optsuite} project's \texttt{optlib}.
At the revisions examined, the required finite-cone
multiplier statement was not available in a directly
usable form, so it is proved locally behind a small
interface. The directional reformulation, continuation,
safe-hold, finite-cover and initialisation results
are specific to the present development.

\section{Numerical implementation}\label{sec:implementation}

The numerical implementation maintains the active,
boundary and binding sets, the strong/weak partition
of binding inequalities, and the signs of the active
coefficients. Once the outgoing state has been
determined, its direction follows from a linear KKT
system, the certificate's rate and horizon from two small
linear programmes, the candidate event times are roots of
affine functions, and the certificate at the next point is
the canonical one, computed as a small least-distance
problem.
This structure allows the accuracy of propagation to
be improved without requiring every QP solve to the
same precision.

We first use the directional QP to propose which
boundary coordinates move and which weakly active
rows remain tight. This decision is refined through
KKT sign tests, after which the continuous quantities
are recomputed for the selected state. Throughout this
section, a linear solve is understood to be performed
in floating point: its accuracy depends on conditioning
and rounding, and the acceptance tests use finite
tolerances. Appendix~\ref{sec:safeguards} records
the failure mechanisms that motivated these choices.

\subsection{The release decision and its refinement}
\label{sec:release_refinement}

The directional QP is solved numerically with a nominal
tolerance of $10^{-5}$. Its output is used to identify
released coordinates and tight weakly active rows,
rather than propagated directly. In our tests,
propagation of the unrefined direction displaced event
times by relative amounts of order $10^{-4}$, which
was sufficient to make the state at the next point
inconsistent.

Let $f$ contain the active coefficients and the
members of $b$ proposed for release, and let
$\bm{R}$ contain the equality rows, strongly active
inequalities and weakly active rows proposed to remain
tight, restricted to $f$. The corresponding direction
and rates satisfy
\begin{equation}
\left(\begin{array}{cc}\bm{H}_{ff} & \bm{R}^{\rm t}\\ \bm{R} & \bm{0}\end{array}\right)
\left(\begin{array}{c}\dot{\bm{\beta}}_f\\ \dot{\bm{\mu}}\end{array}\right)
=
\left(\begin{array}{c}-\bm{g}_f\\ \bm{0}\end{array}\right),
\label{eq:segment_kkt}
\end{equation}
with $\dot\beta_j=0$ on the held members of $b$. When the tight rows
are dependent (duplicated or redundant constraints) the dual block is
rank-deficient; $\dot{\bm{\beta}}_f$ is nevertheless unique because
$\bm{H}_{ff}\succ\bm{0}$. The minimum-norm multipliers of this system
serve the sign tests below; the multiplier rates propagated along
the segment are those of Section~\ref{sec:follow}.

The proposed state is checked using the directional KKT
conditions. For a held coordinate,
$\rho_j=-g_j(\bm{H}\dot{\bm{\beta}}+\bm{g}
+\bm{R}^{\rm t}\dot{\bm{\mu}})_j$ must be non-negative.
A negative value therefore requires release; conversely,
a released coordinate whose direction violates its sign
cone must be held. The analogous tests require a tight
weakly active row to have the correct multiplier sign,
and a released row to remain directionally feasible.
The state is updated when a test fails and
(\ref{eq:segment_kkt}) is solved again.

When this iteration satisfies all directional KKT
conditions, strict convexity identifies its direction
with the full-cone minimiser. In the reported experiments
the refinement generally converges in a few steps.
A cycle guard instead retains the original QP solution
if refinement fails; this fallback does not acquire the
accuracy of a successful refinement merely by being
returned. The sign tolerances are given in
Section~\ref{sec:residual_tolerances}.

The refinement can correct a mistaken release decision
as well as improve the continuous solution. For example,
on one suite case a direction component of order
$10^{-4}$, within a vector of norm $10^2$, was initially
classified as a hold. Without the sign refinement the
corresponding coordinate did not enter. Recomputing
the direction for a verified state also reduces errors
in subsequent event times.

\subsection{The certificate at a kink}\label{sec:exact_certificate}

At a kink the state $(a,z)$ and the signs $\bm{s}_a$ are declared, and
the primal--dual pair is the solution of the fixed-state KKT system
\begin{equation}
\left(\begin{array}{cc}\bm{H}_{aa} & \bm{A}_{za}^{\rm t}\\ \bm{A}_{za} & \bm{0}\end{array}\right)
\left(\begin{array}{c}\bm{\beta}_a\\ \bm{\mu}_z\end{array}\right)
=
\left(\begin{array}{c}\bm{h}_a-\lambda\bm{s}_a\\ \bm{b}_z\end{array}\right).
\label{eq:kink_kkt}
\end{equation}

Its primal component, checked against the declared coefficient
signs and constraint feasibility, replaces the propagated
$\bm{\beta}$: each accepted coefficient vector is recomputed at its
current $\lambda$, so propagation error does not accumulate.

The certificate recorded at the point is the canonical one, the
minimum-norm point of the polyhedron of valid multipliers at
$(\bm\beta,\lambda)$. Its equalities, stationarity on the active
coordinates, are solved by least squares; if that point satisfies
the inequalities (the inactive subgradient bounds and the signs of
the inequality multipliers), it is the answer. Otherwise the
problem is posed on the null space of the equalities as a
least-distance problem and solved by Lawson and Hanson's reduction
to non-negative least squares. That solver's answer is checked
against its own optimality conditions, since a library
implementation was observed to return a non-optimal point without
signalling an error, and the same problem is then re-solved by
bounded-variable least squares. A QP solve to $10^{-12}$ is the last
resort. The rows that the solution makes tight are finally re-solved
exactly as linear equations. The bounds are relaxed by the
acceptance tolerance, because at a kink the polyhedron has empty
interior and sits within rounding of infeasibility. If no canonical
certificate can be computed, the fixed-state certificate from
(\ref{eq:kink_kkt}) is used instead and the substitution is
reported with the output. Either way, the certificate must pass the
same residual-based acceptance criteria.

An entry event also places its coordinate in $b$: with dependent
bound rows, different certificates can assign different
subgradients at the same point. The resulting state and certificate
must nevertheless pass the acceptance checks; the event history
alone is not a certificate.

\subsection{Coordinates held by bounds, and duplicated rows}\label{sec:one_sided}

For a coefficient with a single one-sided sign bound in row $i$,
$|\beta_j|=s_j\beta_j$ over the feasible set, so stationarity fixes
only the combination $\lambda g_j+a_{ij}\mu_i$. The subgradient and
the bound multiplier are not separately determined: different
splits are different valid certificates. Any rule that does not fix
the split lets the certificate redistribute between them at every
kink, recording events at which the primal path does not bend. The
canonical certificate fixes it, as the split of least norm, and so
needs no further rule. Exactly duplicated constraint rows are
treated in the same way: their common multiplier is shared equally,
which is the minimum-norm division.

On the stress suite of Section~\ref{sec:stress}, every interior
recorded point is a change of primal direction ($502$ of $502$).
Under the re-certifying policy of Section~\ref{sec:termination}
with the split left free, $151$ of $655$ interior points are not,
in $11$ sign-constrained cases; that policy needs an explicit rule
assigning the whole combination to the bound multiplier to suppress
them. For unrestricted coefficients, and for boxes that permit both
signs, the subgradient alone holds the coordinate and \textbf{E5}
remains necessary.

\subsection{Following the certificate}\label{sec:follow}

The lexicographic rate is computed in two stages. A linear programme
gives the fastest rate of decrease of the certificate's norm over
the rates that keep it feasible to first order; the rate of least
norm among those that achieve it is then found as a least-distance
problem, by the same method as the certificate itself. A second
linear programme gives the certificate horizon, the largest step
over which this rate keeps the certificate feasible and minimum-norm
on the rows it keeps tight; a horizon shorter than the step reached
by the primal events is \textbf{E6}. A horizon below the minimum
step is ignored. Both programmes are small: their variables are the
multipliers of the binding rows.

Each of these solves can fail. If the second stage of the rate
fails, the optimal vertex of the first is used; if the first fails,
the minimum-norm rate is used; if the horizon programme fails, no
\textbf{E6} is scheduled on that segment. The implementation reports
every such substitution with the output, together with any
substitution of the fixed-state certificate in
Section~\ref{sec:exact_certificate}, and warns when one occurs: the
affected points are still certified, but they were not computed by
the rule of Theorem~\ref{thm:termination}. On the benchmarks
reported below, none occurs.

A point at which neither \textbf{E1} nor \textbf{E3} occurred and
across which the primal direction is unchanged, to $10^{-8}$
relative, is removed from the output when the next direction is
known, as in Algorithm~\ref{alg:classo}. The output is unchanged as a
piecewise-affine function; the number of such steps is kept among
the run's statistics, so that work is not mistaken for recorded
points.

\subsection{What remains tolerance-dependent}
\label{sec:residual_tolerances}

The release-refinement sign tests use a tolerance of
$10^{-10}$ relative to the direction norm. Boundary
membership is identified using
$|g_j|>1-\epsilon_{\rm b}$, with
$\epsilon_{\rm b}=10^{-8}$, supplemented by the
entry-event information of
Section~\ref{sec:exact_certificate}. These
thresholds must be consistent with the accuracy of
the quantities being classified. In the examples of
Appendix~\ref{sec:safeguards}, reducing
$\epsilon_{\rm b}$ to $10^{-10}$ excluded a
coordinate whose boundary status could not be resolved
at that precision and changed the computed path.

Admitting a near-boundary coordinate allows the
directional QP and subsequent checks to assess its
behaviour; excluding it may delay its release in
the implemented event logic. This asymmetry motivates
a boundary tolerance above the estimated certificate
error, but does not eliminate the need to verify the
resulting segments.

For $n<p$, all stages use the same ridge-augmented
Hessian, so the resulting path solves the modified
problem of Section~\ref{sec:problem}. Comparisons with matched and unregularised numerical
reference solutions are reported separately below.
Certificate checks use the unclipped inactive
subgradient, expressed as $\lambda\bm{g}$ and
normalised by $\lVert\bm{h}\rVert_\infty$, together
with the active stationarity residual. Clipping the
subgradient before testing its bound would conceal
the violations that the check is intended to detect.

\section{Synthetic experiments}\label{sec:synthetic}

We first assess numerical accuracy on a synthetic suite
containing several constraint geometries and conditioning
regimes, then examine the computational cost as $p$ increases.
The diabetes examples in Section~\ref{sec:diabetes_app}
provide comparisons with a discrete $\lambda$-sweep and
the local linear-system construction of \citet{Gaines2018}.

\subsection{Synthetic stress suite}\label{sec:stress}

The stress suite contains $39$ cases in six families,
chosen to test equality and inequality constraints,
redundant rows, near-degeneracy and ridge augmentation:

\begin{enumerate}
\item[(i)] \textit{Equality only} ($6$ cases): $m_e$ random equality
constraints with $p\in\{5,8,10\}$, $m_e\in\{1,2,3,4\}$, varying
condition number.
\item[(ii)] \textit{Inequality only} ($8$ cases): sign, box, and
random inequality constraints with $p\in\{5,8,10\}$.
\item[(iii)] \textit{Mixed/redundant} ($7$ cases): combinations of
equality and inequality constraints, some with deliberately
duplicated inequality rows.
\item[(iv)] \textit{Nearly degenerate} ($6$ cases): designs with
clustered singular values in $\bm{X}$, producing near-tied kink
events.
\item[(v)] \textit{Wide} ($6$ cases, $n<p$): problems in which the
Hessian is singular before ridge augmentation, with $p\in\{10,15,20\}$
and $n\in\{6,8,10\}$.
\item[(vi)] \textit{Larger} ($6$ cases): moderate-scale problems with
$p\in\{20,30,50\}$.
\end{enumerate}

Each problem is generated from a specified random seed.
An SVD of a Gaussian matrix is used to construct
$\bm{X}$ with the required condition number, and a sparse
coefficient vector with approximately $p/3$ non-zero
elements of unit scale generates
$\bm{y}=\bm{X}\bm{\beta}_{\rm true}+\epsilon$,
where $\epsilon\sim\mathcal{N}(0,\sigma^2\bm{I})$
and $\sigma$ is small. The modest dimensions permit
individual discrepancies to be investigated in detail.
Seeds $60$ and $73$ have $\lambda_{\max}=0$ and hence
trivial paths; the remaining $37$ require traversal.

\subsection{Correctness along the path}\label{sec:endpoint}

Agreement at $\lambda=0$ alone is insufficient to assess
a path-following method, since an incorrect intermediate
path can still reach the constrained least-squares
solution. We therefore solve the constrained LASSO
directly in split-variable form, to a nominal QP
tolerance of $10^{-10}$, at $25$ logarithmically
spaced values between $0.95\lambda_{\max}$ and
$10^{-4}\lambda_{\max}$. At each value we compare
the reference solution with the affine interpolation
of the CLASSO output, using the signed relative objective
difference
$(f(\bm{\beta}_{\rm CLASSO})-f(\bm{\beta}_{\rm ref}))/
|f(\bm{\beta}_{\rm ref})|$, where $\bm{\beta}_{\rm ref}$ denotes the numerical reference solution.
The endpoint difference is evaluated separately at
$\lambda=0$, as the signed unnormalised objective difference
$f(\bm{\beta}_{\rm CLASSO})-f(\bm{\beta}_{\rm ref})$. These are signed comparisons between numerical solutions,
rather than certified errors relative to the exact optimum.
A negative difference means that CLASSO returns a lower
computed objective than the reference solve; this may reflect
reference-solve error, feasibility residuals or rounding, and
does not establish an improvement over the mathematical
optimum. The requested QP tolerance alone is not an error
bound on the optimum. We therefore interpret these differences
together with the feasibility and KKT residuals reported below.
In each matched ridge-augmented comparison the mathematical
primal minimiser is unique.

Table~\ref{tab:stress_path} reports the sampled and
endpoint differences, recorded point counts, and occurrences
of \textbf{E1} and \textbf{E5}. Across the $31$ nontrivial cases with
$n\ge p$, the signed relative objective differences at the sampled
$\lambda$ range from $-9\times10^{-15}$ to $3.9\times10^{-10}$;
none exceeds $10^{-9}$ in absolute value. At $\lambda=0$ the
signed unnormalised differences over the $33$ such cases range from
$-1.5\times10^{-7}$ to $2.0\times10^{-10}$, and two exceed
$10^{-9}$ in magnitude, both with CLASSO returning the lower
computed objective; relative to the reference objective, they
range from $-1.1\times10^{-6}$ to $5.4\times10^{-10}$. These comparisons
are interpreted alongside the independently checked feasibility
and KKT residuals; they do not assign the discrepancy to either
solver by themselves. The point count includes the endpoints.
The distinction from interior changes of direction
is relevant when comparing these counts with the
theoretical bound.

The table contains $578$ recorded points, of which $502$ are
interior to nontrivial paths, and each of these is a change of
primal direction at the numerical resolution tested
(Section~\ref{sec:one_sided}). A $300$-point grid was also used
for the cases with the largest counts.

For the six wide problems, CLASSO solves the
ridge-augmented objective. Table~\ref{tab:ridge} gives comparisons at five prescribed
ridge strengths. Each cell is a maximum over the sampled
$\lambda$, and the largest such matched difference over the
$30$ runs is $2\times10^{-9}$. A per-run maximum does not display
the opposite direction: over all sampled points of the same $30$
runs the differences reach $-3.6\times10^{-7}$.

The wide-case entries in
Table~\ref{tab:stress_path}, by contrast, compare
against the unregularised problem and measure
the effect of adding the ridge, subject also to numerical discrepancies in both solutions. This distinction
is necessary when interpreting accuracy for $n<p$.
On the machine and software stack specified in
Section~\ref{sec:benchmarking}, the six wide-case
outputs were bit-identical across five thread
counts and three separate processes.

\begin{table*}
\centering
\caption{Correctness along the path on the $39$-case stress suite.
Family as in Section~\ref{sec:stress}; `Points' is the number of
recorded points on the path, its two endpoints included; E1 and E5
are the numbers of active-leaving and held-coordinate sign-crossing
events on the true path; `path' is the
maximum over $25$ log-spaced $\lambda$ of the signed relative objective difference
between the CLASSO path and a direct constrained-LASSO reference solve
with requested tolerance $10^{-10}$; `endpoint' is the signed objective difference
$f(\bm{\beta}_{\rm CLASSO})-f(\bm{\beta}_{\rm ref})$ at $\lambda=0$, which is
unnormalised rather than relative. Differences are
rounded to one significant figure, so a cell may differ from the value
quoted in the text; entries
$<10^{-13}$ are reported at that threshold and are not certified error bounds. Seeds $60$ and $73$ have
$\lambda_{\max}=0$ and are trivial. Family (v) is $n<p$, and carries
the ridge of Section~\ref{sec:problem}; the differences shown for it are
against the \emph{unregularised} problem and therefore include the effect of
that ridge as well as numerical discrepancies. Matched comparisons are reported in
Table~\ref{tab:ridge}.\label{tab:stress_path}}
\smallskip
{\small
\CILFITWIDTH{%
\begin{tabular}{llrrrll}
\hline
Family & Case & Points & E1 & E5 & Path diff. & Endpoint diff.\\
\hline
(i) & $p$=5 $n$=20 $m_e$=1 $\kappa$=10 seed 42 & 5 & 0 & 0 & $<10^{-13}$ & $2\times10^{-13}$ \\
(i) & $p$=5 $n$=20 $m_e$=2 $\kappa$=10 seed 43 & 6 & 1 & 1 & $<10^{-13}$ & $1\times10^{-12}$ \\
(i) & $p$=8 $n$=30 $m_e$=1 $\kappa$=10 seed 44 & 8 & 0 & 0 & $<10^{-13}$ & $-1\times10^{-12}$ \\
(i) & $p$=8 $n$=30 $m_e$=3 $\kappa$=100 seed 45 & 8 & 1 & 1 & $<10^{-13}$ & $<10^{-13}$ \\
(i) & $p$=10 $n$=40 $m_e$=1 $\kappa$=10 seed 46 & 10 & 0 & 0 & $<10^{-13}$ & $<10^{-13}$ \\
(i) & $p$=10 $n$=40 $m_e$=4 $\kappa$=1000 seed 47 & 21 & 7 & 5 & $3\times10^{-13}$ & $<10^{-13}$ \\
(ii) & $p$=5 $n$=20 nonneg $\kappa$=10 seed 50 & 3 & 0 & 0 & $2\times10^{-11}$ & $7\times10^{-12}$ \\
(ii) & $p$=5 $n$=20 nonpos $\kappa$=10 seed 51 & 3 & 0 & 0 & $1\times10^{-10}$ & $1\times10^{-11}$ \\
(ii) & $p$=5 $n$=20 box $\kappa$=10 seed 52 & 8 & 0 & 0 & $1\times10^{-12}$ & $-2\times10^{-7}$ \\
(ii) & $p$=8 $n$=30 nonneg $\kappa$=100 seed 53 & 4 & 0 & 0 & $4\times10^{-11}$ & $2\times10^{-13}$ \\
(ii) & $p$=8 $n$=30 random $\kappa$=10 seed 54 & 11 & 1 & 0 & $<10^{-13}$ & $<10^{-13}$ \\
(ii) & $p$=10 $n$=40 nonneg $\kappa$=10 seed 55 & 5 & 0 & 0 & $4\times10^{-11}$ & $6\times10^{-13}$ \\
(ii) & $p$=10 $n$=40 box $\kappa$=1000 seed 56 & 28 & 4 & 0 & $7\times10^{-13}$ & $<10^{-13}$ \\
(ii) & $p$=10 $n$=40 random $\kappa$=100 seed 57 & 22 & 5 & 0 & $3\times10^{-13}$ & $<10^{-13}$ \\
(iii) & $p$=5 $n$=20 $m_e$=1 nonneg dup=0 $\kappa$=10 seed 60 & 1 & 0 & 0 & --- & $2\times10^{-11}$ \\
(iii) & $p$=8 $n$=30 $m_e$=1 nonneg dup=0 $\kappa$=10 seed 61 & 2 & 0 & 0 & $3\times10^{-11}$ & $-2\times10^{-13}$ \\
(iii) & $p$=8 $n$=30 $m_e$=2 box dup=0 $\kappa$=100 seed 62 & 18 & 5 & 1 & $<10^{-13}$ & $-5\times10^{-13}$ \\
(iii) & $p$=10 $n$=40 $m_e$=1 nonneg dup=3 $\kappa$=10 seed 63 & 7 & 0 & 0 & $5\times10^{-11}$ & $8\times10^{-12}$ \\
(iii) & $p$=10 $n$=40 $m_e$=2 random dup=0 $\kappa$=100 seed 64 & 20 & 5 & 0 & $<10^{-13}$ & $<10^{-13}$ \\
(iii) & $p$=10 $n$=40 $m_e$=1 nonneg dup=5 $\kappa$=10 seed 65 & 7 & 0 & 0 & $4\times10^{-11}$ & $-2\times10^{-9}$ \\
(iii) & $p$=10 $n$=40 $m_e$=2 box dup=2 $\kappa$=1000 seed 66 & 21 & 4 & 2 & $7\times10^{-13}$ & $<10^{-13}$ \\
(iv) & $p$=5 $n$=20 none/none seed 70 & 6 & 0 & 0 & $<10^{-13}$ & $<10^{-13}$ \\
(iv) & $p$=5 $n$=20 nonneg/none seed 71 & 4 & 0 & 0 & $8\times10^{-11}$ & $1\times10^{-11}$ \\
(iv) & $p$=8 $n$=30 none/sumzero seed 72 & 8 & 0 & 0 & $<10^{-13}$ & $-3\times10^{-13}$ \\
(iv) & $p$=8 $n$=30 nonneg/sumzero seed 73 & 1 & 0 & 0 & --- & $9\times10^{-11}$ \\
(iv) & $p$=10 $n$=40 nonneg/none seed 74 & 7 & 0 & 0 & $4\times10^{-10}$ & $5\times10^{-11}$ \\
(iv) & $p$=10 $n$=40 box/none seed 75 & 13 & 0 & 0 & $1\times10^{-11}$ & $6\times10^{-11}$ \\
(v) & $p$=10 $n$=6 none/none $\kappa$=10 seed 80 & 23 & 6 & 0 & $3\times10^{-2}$ & $4\times10^{-8}$ \\
(v) & $p$=10 $n$=6 nonneg/none $\kappa$=10 seed 81 & 10 & 2 & 0 & $1\times10^{-4}$ & $1\times10^{-6}$ \\
(v) & $p$=15 $n$=8 none/sumzero $\kappa$=10 seed 82 & 23 & 4 & 0 & $4\times10^{-2}$ & $5\times10^{-7}$ \\
(v) & $p$=15 $n$=8 nonneg/none $\kappa$=100 seed 83 & 12 & 2 & 0 & $2\times10^{-3}$ & $2\times10^{-4}$ \\
(v) & $p$=20 $n$=10 nonneg/none $\kappa$=10 seed 84 & 6 & 0 & 0 & $6\times10^{-6}$ & $4\times10^{-7}$ \\
(v) & $p$=20 $n$=10 box/none $\kappa$=100 seed 85 & 45 & 3 & 0 & $6\times10^{-5}$ & $2\times10^{-7}$ \\
(vi) & $p$=20 $n$=60 nonneg/none $\kappa$=10 seed 90 & 13 & 0 & 0 & $4\times10^{-11}$ & $2\times10^{-10}$ \\
(vi) & $p$=20 $n$=60 none/sumzero $\kappa$=10 seed 91 & 22 & 1 & 0 & $<10^{-13}$ & $1\times10^{-12}$ \\
(vi) & $p$=30 $n$=100 nonneg/none $\kappa$=100 seed 92 & 25 & 4 & 0 & $3\times10^{-11}$ & $-4\times10^{-12}$ \\
(vi) & $p$=30 $n$=100 nonneg/random $\kappa$=10 seed 93 & 12 & 0 & 0 & $3\times10^{-11}$ & $1\times10^{-11}$ \\
(vi) & $p$=50 $n$=150 nonneg/none $\kappa$=10 seed 94 & 23 & 0 & 0 & $2\times10^{-11}$ & $-1\times10^{-10}$ \\
(vi) & $p$=50 $n$=150 none/none $\kappa$=100 seed 95 & 107 & 28 & 1 & $<10^{-13}$ & $<10^{-13}$ \\
\hline
\end{tabular}}}
\end{table*}

\begin{table*}
\centering
\caption{The six wide ($n<p$) cases at five ridge strengths
$\epsilon/\max_i H_{ii}$. `Matched' is the maximum over $25$ log-spaced
$\lambda$ of the signed relative objective difference against the ridge-augmented
problem the run is solving; `unregularised' is the same signed difference against the
problem with $\epsilon=0$. 
The matched comparison is with a numerical solution of the same objective;
the unregularised comparison additionally includes the effect of ridge augmentation.
The latter generally increases with the ridge strength, although
neither its rate of increase nor the recorded point count is
uniform across cases. The largest matched difference over the $30$ runs
is $2\times10^{-9}$; over all sampled points of those runs the
differences in the opposite direction reach $-3.6\times10^{-7}$,
which a per-run maximum does not display.\label{tab:ridge}}
\smallskip
{\small
\CILFITWIDTH{%
\begin{tabular}{llrll@{\qquad}llrll}
\hline
Case & $\epsilon/\max_i H_{ii}$ & Points & Matched & Unreg. &
Case & $\epsilon/\max_i H_{ii}$ & Points & Matched & Unreg.\\
\hline
$p$=10 $n$=6 none/none seed 80 & $10^{-6}$ & 23 & $-5\times10^{-13}$ & $7\times10^{-8}$ & $p$=15 $n$=8 nonneg seed 83 & $10^{-6}$ & 16 & $1\times10^{-12}$ & $2\times10^{-7}$ \\
& $10^{-5}$ & 23 & $-4\times10^{-13}$ & $7\times10^{-6}$ & & $10^{-5}$ & 16 & $1\times10^{-12}$ & $2\times10^{-5}$ \\
& $10^{-4}$ & 23 & $-9\times10^{-13}$ & $3\times10^{-2}$ & & $10^{-4}$ & 12 & $6\times10^{-13}$ & $2\times10^{-3}$ \\
& $10^{-3}$ & 19 & $1\times10^{-13}$ & $3\times10^{-1}$ & & $10^{-3}$ & 10 & $2\times10^{-11}$ & $1\times10^{-1}$ \\
& $10^{-2}$ & 11 & $2\times10^{-14}$ & $3$ & & $10^{-2}$ & 12 & $1\times10^{-10}$ & $3$ \\
\addlinespace[2pt]
$p$=10 $n$=6 nonneg seed 81 & $10^{-6}$ & 8 & $4\times10^{-10}$ & $1\times10^{-8}$ & $p$=20 $n$=10 nonneg seed 84 & $10^{-6}$ & 6 & $1\times10^{-9}$ & $1\times10^{-9}$ \\
& $10^{-5}$ & 8 & $4\times10^{-10}$ & $1\times10^{-6}$ & & $10^{-5}$ & 6 & $1\times10^{-9}$ & $6\times10^{-8}$ \\
& $10^{-4}$ & 10 & $4\times10^{-10}$ & $1\times10^{-4}$ & & $10^{-4}$ & 6 & $1\times10^{-9}$ & $6\times10^{-6}$ \\
& $10^{-3}$ & 10 & $2\times10^{-9}$ & $8\times10^{-3}$ & & $10^{-3}$ & 6 & $2\times10^{-10}$ & $5\times10^{-4}$ \\
& $10^{-2}$ & 8 & $3\times10^{-14}$ & $1\times10^{-1}$ & & $10^{-2}$ & 6 & $3\times10^{-10}$ & $4\times10^{-2}$ \\
\addlinespace[2pt]
$p$=15 $n$=8 sumzero seed 82 & $10^{-6}$ & 27 & $-3\times10^{-15}$ & $3\times10^{-4}$ & $p$=20 $n$=10 box seed 85 & $10^{-6}$ & 45 & $6\times10^{-11}$ & $8\times10^{-9}$ \\
& $10^{-5}$ & 27 & $1\times10^{-14}$ & $1\times10^{-2}$ & & $10^{-5}$ & 45 & $8\times10^{-11}$ & $8\times10^{-7}$ \\
& $10^{-4}$ & 23 & $1\times10^{-13}$ & $4\times10^{-2}$ & & $10^{-4}$ & 45 & $2\times10^{-11}$ & $6\times10^{-5}$ \\
& $10^{-3}$ & 21 & $2\times10^{-14}$ & $2\times10^{-1}$ & & $10^{-3}$ & 44 & $2\times10^{-11}$ & $5\times10^{-3}$ \\
& $10^{-2}$ & 19 & $9\times10^{-15}$ & $7$ & & $10^{-2}$ & 31 & $2\times10^{-11}$ & $9\times10^{-1}$ \\
\hline
\end{tabular}}}
\end{table*}

\subsection{The per-kink certificate}\label{sec:tiers}

At each recorded point, we independently check primal
feasibility, dual feasibility, complementarity and
stationarity. The active-side stationarity residual is
$\lVert\bm{h}_a-\bm{H}_{aa}\bm{\beta}_a
-\bm{A}_{za}^{\rm t}\bm{\mu}_z
-\lambda\,\mathrm{sign}\,\bm{\beta}_a\rVert_\infty$,
normalised by $\lVert\bm{h}\rVert_\infty$.
For inactive coefficients we test the unclipped
$\lambda\bm{g}_{\bar a}$ against its bounds,
using the same normalisation. Testing this scaled
quantity avoids amplification of certificate error
by division by small $\lambda$.

All $39$ runs pass the specified checks at every
recorded point. The largest raw excess
$\max_j(|g_j|-1)_+$ is $2\times10^{-10}$;
the largest scaled subgradient-bound residual is
$3\times10^{-14}$, and the largest normalised
active-side stationarity residual is
$2\times10^{-14}$, against an acceptance tolerance
of $10^{-8}$.

\subsection{Verification between the kinks}\label{sec:segment_verification}

The preceding checks assess recorded points and a
sample of intermediate values. Since the output is
affine on each recorded segment, one can also verify
its certificate throughout the interval without
sampling. Stationarity, primal feasibility,
multiplier signs, and the identities fixing either
a constraint slack or its multiplier to zero are
affine in $\lambda$, so endpoint bounds extend
over the segment.

Complementarity requires additional care. A product
of affine functions may vanish at both endpoints
without vanishing in between. We therefore verify
the appropriate zero factor on the whole interval,
rather than testing only the product at its ends.
The subgradient inequalities are checked for every
inactive coefficient, including those in $b$.
Boundary-set membership does not exempt a coordinate
from $|\lambda g_j|\le\lambda$.

The checker also locates possible interior events:
an active coefficient crossing zero, an inactive
subgradient reaching either boundary, a slack row
becoming tight, or a positive multiplier reaching
zero. The subgradient test includes the held-coordinate
crossing denoted by \textbf{E5}. Thus the four
types of residual tests cover the five primal event
labels. A change of certificate face, \textbf{E6},
does not bend the primal path, and the point at
which it occurs is recorded only if the primal
direction changes there or \textbf{E1} or \textbf{E3}
occurs.

These tests use the coefficients and multipliers
stored at the two segment endpoints and their
affine interpolation; they do not recover a
replacement certificate. They therefore assess the
recorded output itself. In floating point the
results are residual bounds, with the stated
verification tolerances, rather than exact proofs.

Two properties should be distinguished. The
checker verifies that the compressed output carries
a valid KKT certificate: on each recorded segment,
the affine interpolation of the certificates
recorded at its two ends satisfies every condition
above. It does not verify that the traversal
followed the canonical certificate through the
steps removed from that segment. At such a step
the canonical certificate can change slope while
the primal path stays affine, so the interpolated
certificate need not coincide with the canonical
one in between; it remains a valid certificate for
the whole segment whenever the checks pass. A
necessary invariant of certificate following is
checked instead, at the end of every step, removed
or not: the implementation recomputes the
canonical certificate there and compares it with
the one propagated along the lexicographic rate,
which must coincide by
Proposition~\ref{prop:efficient_step}(iii). The
invariant does not by itself show that each step
exhausts its cell, since steps cut short along the
correct certificate would also satisfy it. Over
the $45$ paths the largest difference, normalised
by $\lVert\bm h\rVert_\infty$, is $3.5\times10^{-11}$,
and no step substituted a fallback for the rule's
own solves (Section~\ref{sec:follow}).
The checker is distributed as
\texttt{classo.verify\_segments}.

The $45$ paths are the $39$ synthetic cases together with the
six diabetes configurations of Section~\ref{sec:diabetes_app};
their $625$ recorded points define $580$ segments,
from within which $141$ points that are not kinks were removed.
Over those segments,
the largest normalised residuals are $9\times10^{-15}$ for
stationarity, $5\times10^{-15}$ for primal
feasibility, $9\times10^{-18}$ for multiplier signs,
$9\times10^{-15}$ for the zero-slack identity on
strongly active rows, and $8\times10^{-15}$ for
zero multipliers off the binding set. No interior
event was missed under the verification tests.
The subgradient condition was tested $13\,622$ times,
once for each inactive coordinate at each end of each
segment and once at each interior breakpoint of
$|\lambda g_j|-\lambda$. Its largest
positive excess, normalised by $\lVert\bm{h}\rVert_\infty$,
was $7.0\times10^{-11}$, below the tolerance
of $10^{-8}$.

The dual-recovery solve is performed to a nominal tolerance
of $10^{-12}$, and the subgradient acceptance tolerance
is $10^{-8}$, specified independently of the active-side
tolerance. If the tighter solve fails, the default solve
is attempted, but its result must still pass the same
residual tests. No such fallback occurred on the reported
benchmark. These acceptance conditions are distinct from
the solver's own stopping criterion.

The strong/weak partition is consistent with the recovered
multipliers: multipliers on rows classified as weakly active
are at most $2\times10^{-13}$, against a classification
threshold of $10^{-10}$, and no row declared strongly active
becomes slack along a segment without a release event. The
accuracy of certificate recovery must support the thresholds
subsequently used to classify its components.

\subsection{Agreement with an exact implementation}\label{sec:prototype}

Theorem~\ref{thm:termination} concerns exact arithmetic. We
compare the floating-point implementation with an independent
implementation of the rule in rational arithmetic, which enumerates the independent active subsets of the
parametric problems, carries a minimum-norm certificate on each
outgoing segment, and uses no tolerances. Its cost restricts the
comparison to small problems. These have integer data: an
upper-triangular design with diagonal entries $1$ or $2$ and
off-diagonal entries in $\{-2,\dots,2\}$, an integer response, and
between one and $p-1$ random constraint rows mixing equalities and
inequalities, with the origin feasible; a duplicated row and a pair
of opposed rows that tie are inserted at random. For each problem the
recorded points of CLASSO are compared, in $\lambda$ and
$\bm{\beta}$, with the changes of primal direction found by the exact
implementation.

For each $p$ from $2$ to $6$ we draw $40$ problems from a fixed seed,
and $10$ more at $p=7$, and run the exact implementation on every one
to completion, with no time limit. Its cost is heavy-tailed: most
problems take seconds or minutes, but the slowest at $p=6$ took
$66$ minutes and the slowest at $p=7$ $2.5$ hours. On all $210$
problems the CLASSO path is certified, records the same number of
points, and its values of $\lambda$ and $\bm{\beta}$ agree with the
exact ones to at most $4.6\times10^{-13}$.

Because the primal path is unique, this agreement shows that the
recorded output is exact to rounding on these deliberately degenerate
problems; it does not by itself show that the same certificate rule
was followed. The step-end check of
Section~\ref{sec:segment_verification}, a necessary invariant of
that rule, also holds: on all $210$ problems, the propagated
certificate agrees with the
recomputed canonical certificate at the end of every step to
$1.6\times10^{-9}$, normalised by $\lVert\bm h\rVert_\infty$,
and no fallback occurs.

The comparisons concern small, integer problems, chosen to include
equalities, duplicated and opposed rows, and ties. They are evidence
that the implementation follows the rule of
Theorem~\ref{thm:termination}, not a proof that it does so on every
input.

\subsection{Continuous cross-validation}
\label{sec:cv}

An application of the affine path representation is
continuous optimisation of a cross-validation criterion.
Within a segment, the held-out predictions
$\bm{X}_v\bm{\beta}(\lambda)$ are affine, so their
squared loss is quadratic in $\lambda$ with a
non-negative leading coefficient. Its constrained
minimum is obtained directly by checking the
stationary point, when present, and the segment
endpoints. For $K$-fold cross-validation, one must
compute a path for each training fold. The sum of
the validation losses is piecewise quadratic on
the union of the fold breakpoint sets, and its
global minimum follows by minimising over those
intervals. The full-data path is then used to
obtain the final fitted model.

On the diabetes data, five-fold cross-validation
(\texttt{reproduction/cv\_exact.py}) gives a minimum
loss only $10^{-7}$ to $5\times10^{-7}$ lower,
in relative terms, than a $100$-point grid search.
The practical difference in the criterion value
is therefore negligible. Its location can still
affect the selected support: for
\textsc{none/none}, the continuous minimum at
$\lambda=130.40$ and the grid minimum at
$129.80$ lie on opposite sides of a breakpoint,
giving four and five selected coefficients,
respectively. The supports agree for the other
three tested configurations.

This example demonstrates sensitivity of support
selection near a flat validation optimum; it
does not establish that the continuously selected
model is statistically preferable. A finer grid
or adaptive search could also locate the minimum.
The computational advantage of the path
representation is that the criterion can be
minimised directly over its known quadratic
pieces once the fold paths have been computed.

\subsection{Scaling to larger \texorpdfstring{$p$}{p}}\label{sec:scaling}

For the scaling study we generate problems with
$n=4p$, a Gaussian design of condition number
$10$, and a sparse generating coefficient vector
with five non-zero elements. We consider no
constraints, a sum-non-negativity inequality,
a sum-to-zero equality, and the mixed
\textsc{sumnonneg/betadiff} configuration defined
in Section~\ref{sec:diabetes_data}.
Table~\ref{tab:scalability} reports point counts,
LP/QP solve counts and wall-clock times for
$p=20$ to $5000$.

\begin{table}
\centering
\caption{CLASSO scaling on synthetic problems at $n=4p$, condition
number $10$, $k=5$ true nonzeros. `Points' is the number of recorded
path points, including endpoints; `Solves' the total number of LP and QP solves. Every run passes
the per-kink certificate with no rejection. Wall-clock times are for
the path computation alone, measured on exclusively reserved nodes of
an Intel Xeon (Ice Lake) cluster with eight threads per run and
FlexiBLAS, one configuration per node. Peak resident memory is
$2.1$\,GB at $p=2000$ and $11.5$\,GB at $p=5000$. Point and solve
counts were identical in every environment in which they were
repeated; those with $p\le1000$ were reproduced exactly by the cited
release on a different machine and software stack, whose code
differs from that of these runs only in its handling of the final
step. The four $p=5000$ runs, alone among the table, shared a node,
with each other and with other runs, so their times carry a
contention penalty and are excluded from the cost fit quoted in the
text; the reported point and solve counts are recorded separately from the timings.%
\label{tab:scalability}}
\smallskip
\begin{tabular}{lrrrrr}
\hline
Configuration & $p$ & $n$ & Points & Solves & Time (s) \\
\hline
\textsc{none /}             &    20 &     80 &    21 &    23 &    0.021 \\
\textsc{none}               &    50 &    200 &    63 &    65 &    0.054 \\
                            &   100 &    400 &   111 &   113 &    0.169 \\
                            &   200 &    800 &   235 &   237 &    0.988 \\
                            &   500 &   2000 &   587 &   589 &     12.6 \\
                            &  1000 &   4000 &  1207 &  1209 &     97.3 \\
                            &  2000 &   8000 &  2325 &  2327 &    827.8 \\
                            &  5000 &  20000 &  5835 &  5837 &    30418 \\
\hline
\textsc{sumnonneg /}        &    20 &     80 &    25 &    27 &    0.026 \\
\textsc{none}               &    50 &    200 &    53 &    55 &    0.048 \\
                            &   100 &    400 &   120 &   124 &    0.236 \\
                            &   200 &    800 &   227 &   229 &    0.951 \\
                            &   500 &   2000 &   591 &   593 &     12.8 \\
                            &  1000 &   4000 &  1167 &  1169 &     96.1 \\
                            &  2000 &   8000 &  2337 &  2339 &    807.5 \\
                            &  5000 &  20000 &  5832 &  5834 &    30013 \\
\hline
\textsc{none /}             &    20 &     80 &    30 &    37 &    0.067 \\
\textsc{sumzero}            &    50 &    200 &    56 &    61 &    0.114 \\
                            &   100 &    400 &   128 &   144 &    0.453 \\
                            &   200 &    800 &   236 &   256 &    1.890 \\
                            &   500 &   2000 &   582 &   625 &     23.7 \\
                            &  1000 &   4000 &  1180 &  1272 &    178.5 \\
                            &  2000 &   8000 &  2386 &  2581 &     1685 \\
                            &  5000 &  20000 &  5902 &  6355 &    52570 \\
\hline
\textsc{sumnonneg /}        &    20 &     80 &    22 &    30 &    0.059 \\
\textsc{betadiff}           &    50 &    200 &    60 &    74 &    0.136 \\
                            &   100 &    400 &   116 &   183 &    0.462 \\
                            &   200 &    800 &   244 &   446 &    2.316 \\
                            &   500 &   2000 &   579 &   761 &     23.3 \\
                            &  1000 &   4000 &  1177 &  1821 &    189.2 \\
                            &  2000 &   8000 &  2366 &  2776 &     1659 \\
                            &  5000 &  20000 &  5883 &  8584 &    43681 \\
\hline
\end{tabular}
\end{table}

The recorded point counts are approximately linear
in $p$ for all four tested configurations.
Their ratio to $p$ lies between $1.05$ and
$1.50$ across the table, narrowing to
$1.16$--$1.21$ for $p\ge500$ and
$1.17$--$1.18$ at $p=5000$.
Thus these constraint configurations give similar
counts at larger dimensions, although they
change the locations and identities of the
events. This empirical behaviour is much more
favourable than the combinatorial worst-case
bound and should not be interpreted as a bound
for general constraint geometries.

Without constraints, and with the single
inequality, the solve count exceeds the recorded
point count by the two initialisation solves
(four in one row), since the direction requires one QP per
iteration and the refinements recover other
quantities through linear algebra. With an
equality constraint, the certificate's rate and
horizon programmes, and steps that change only
the certificate and are removed from the output
(Section~\ref{sec:follow}), add further solves:
$7$--$23$ per cent for \textsc{none/sumzero}
and up to $83$ per cent, at $p=200$, for
\textsc{sumnonneg/betadiff}. Dense
factorisations in the directional and fixed-state
systems have nominal cost $O(|\alpha|^3)$.
Combining this with a number of iterations
proportional to $p$ permits quartic total
scaling when the active blocks are of order $p$.
Over the measured range $p=500$ to $2000$,
however, fitted exponents are $2.99$--$3.08$.
This finite-range fit describes the observed
costs; it does not determine the asymptotic
complexity.

At $p=2000$, each run takes approximately
$13$--$28$ minutes on a reserved node.
The $p=5000$ runs take $8.3$--$14.6$ hours,
or $1.6$--$2.4$ times the extrapolated fit.
Since those four runs shared a node, the
difference cannot be separated reliably into
resource contention and a change in scaling,
and they are excluded from the fit.
Full rank and conditioning decompositions
contribute substantially to the runtime.
Updating these decompositions and warm-starting
the directional QP are possible improvements.

\section{Application to the diabetes dataset}\label{sec:diabetes_app}

We next apply CLASSO to the diabetes data used in
the original LARS study \citep{Efron2004}.
After specifying the constraint configurations,
we examine active-set margins, compare
computational costs with a discrete sweep, and
analyse the local direction system of
\citet{Gaines2018} on the same paths.

\subsection{The diabetes dataset and constraint configurations}\label{sec:diabetes_data}

The diabetes dataset of \citet{Efron2004} contains $n=442$ diabetes
patients, each described by $p=10$ baseline covariates (age, sex,
body mass index, mean arterial blood pressure, and six blood serum
measurements), together with a scalar response variable representing
a quantitative measure of disease progression one year after
baseline. The data are standardised so that each column of
$\bm{X}$ has zero mean and unit $\ell_2$ norm. This dataset is the
original LARS benchmark of \citet{Efron2004} and is widely used in
the path-following literature; we adopt it here for
comparability.

On the diabetes dataset we use six constraint configurations. Each
configuration is specified as a pair \textsc{inequality}/\textsc{equality},
where the inequality options are \textsc{none} (unconstrained),
\textsc{nonneg}~($\bm{\beta}\ge\bm{0}$),
\textsc{nonpos}~($\bm{\beta}\le\bm{0}$), and
\textsc{sumnonneg}~($\bm{1}^{\rm t}\bm{\beta}\ge 0$); and the equality
options are \textsc{none}, \textsc{sumzero}~($\bm{1}^{\rm t}\bm{\beta}=0$),
and \textsc{betadiff}~($\beta_3-\beta_9=50$, chosen as a
non-homogeneous but feasible two-coefficient constraint). The six
configurations are
\textsc{none/none} (the unconstrained LARS reference),
\textsc{nonneg/none}, \textsc{nonpos/none}, \textsc{none/sumzero},
\textsc{nonneg/betadiff}, and \textsc{sumnonneg/betadiff}. Together
they span pure inequalities (sign constraints of varying sparsity), a
pure equality (\textsc{sumzero}), and two mixed configurations with
non-homogeneous equalities.

These constraints are chosen to test distinct features
of the algorithm, rather than as six scientific
models of diabetes progression. Individual sign
bounds exercise one-sided coefficient entries;
the sum inequality tests an aggregate restriction;
and the equalities couple coefficients, with
$\beta_3-\beta_9=50$ providing a non-homogeneous
example. The value $50$ makes that equality
nontrivial relative to the unconstrained fit.
Non-negativity and non-positivity test both
orientations of the bound handling, but their
paths on the same data need not be symmetric.
Likewise, $\bm{1}^{\rm t}\bm{\beta}\ge0$
restricts the coefficient sum and does not imply
non-negative predictions for centred covariates.

\subsection{Empirical margins of the nondegeneracy conditions}\label{sec:nondegeneracy}

Although the structural results do not assume
\textbf{(B2)}--\textbf{(B4)}, numerical distances
from classification boundaries help assess the
sensitivity of the implementation.
Table~\ref{tab:nondegeneracy} reports, at
certified points with $\lambda>0$, the smallest
inactive subgradient margin outside $b$,
the smallest positive multiplier on $s$,
the smallest slack on $\bar z$, and the
largest boundary set encountered.
Coordinates held by one-sided bounds are
treated through their bound multipliers.

The positive multiplier and slack minima are
conditional on the declared sets, so they
measure separation of those classifications
rather than independently proving strict
complementarity over every open segment.
Similarly, a boundary coordinate or weakly
active row may occur at an event even when
the adjacent open segments are nondegenerate.

\begin{table}
\centering
\caption{Nondegeneracy margins along the six diabetes paths, from the
numerically recovered certificates at recorded points with $\lambda>0$.
$\delta_{\min}^{(\bar a\setminus b)}$ is the minimum of $1-|g_j|$
over inactive coordinates neither in the boundary set nor held at
zero by a binding sign bound (a held coordinate's margin is its
bound's multiplier, Section~\ref{sec:one_sided});
$\mu_{\min}^{(s)}$ the minimum multiplier over strongly active rows;
$\mbox{slack}_{\min}^{(\bar z)}$ the minimum slack over non-binding
rows; $|b|_{\max}$ the largest boundary set at any kink. `---': the set
is empty throughout.\label{tab:nondegeneracy}}
\smallskip
\small\setlength{\tabcolsep}{4pt}
\begin{tabular}{lrrrrr}
\hline
Constraints & Points & $\delta_{\min}$ & $\mu_{\min}$ & $\mbox{slack}_{\min}$ & $|b|_{\max}$\\
\hline
\textsc{none / none}           & 12 & $1.3\times10^{-3}$ & --- & --- & 1 \\
\textsc{nonneg / none}         &  5 & --- & $30$ & $32$ & 4 \\
\textsc{nonpos / none}         &  2 & --- & $56$ & $514$ & 9 \\
\textsc{none / sumzero}        &  9 & $2.2\times10^{-3}$ & --- & --- & 2 \\
\textsc{nonneg / betadiff}     &  4 & --- & $33$ & $29$ & 4 \\
\textsc{sumnonneg / betadiff}  &  9 & $1.7\times10^{-2}$ & --- & $50$ & 1 \\
\hline
\end{tabular}
\end{table}

On the three paths with free inactive coordinates,
the reported subgradient margins outside $b$
are at least $10^{-3}$. For the sign-constrained
paths, zero coefficients are held by bound rows,
and the corresponding positive multipliers and
slacks are of order $10$--$10^2$. On these paths
the canonical certificate divides each held
coordinate's stationarity combination at minimum
norm (Section~\ref{sec:one_sided}), and the
resulting subgradient can reach $\pm1$; such
coordinates are then in $b$, which accounts for
the larger boundary sets in those rows.

At the start of \textsc{none/sumzero}, the
sum-to-zero multiplier at $\lambda_{\max}$
equalises the largest and smallest components
of $\bm{h}$, so $|g_2|=|g_6|=1$ exactly and both
coordinates enter $b$; the next largest is
$|g_8|=0.958$. This tie is forced by the
constraint rather than numerically close, and
a computed certificate resolves it only to
rounding, which is one reason for choosing a
boundary threshold consistent with certificate
accuracy (Appendix~\ref{sec:safeguards}). On
the other two paths with free coordinates,
\textsc{none/none} and \textsc{sumnonneg/betadiff},
the boundary set has one element.

\subsection{Benchmarking against a discrete $\lambda$-sweep baseline}\label{sec:benchmarking}

We compare CLASSO with a $1000$-point logarithmic
$\lambda$-sweep, solving the full constrained LASSO
as a $2p$-dimensional QP at each grid value.
The comparison uses the six diabetes configurations
described above and measures the cost of these
two implementations on those problems.

For each configuration we record LP and QP call counts
and wall-clock time, using the same optimisation packages
for both methods.\footnote{The
implementation uses ProxQP for QPs and SciPy's \texttt{linprog}
(HiGHS) for LPs. Timings are indicative rather than highly
optimised, but the relative numbers are informative. Two environments
were used. The stress suite, the diabetes results of this section and
the comparisons below were computed with \textsc{proxsuite} $0.7.1$,
\textsc{qpsolvers} $4.4.0$, NumPy $1.26.4$ and SciPy $1.15.1$ on a
laptop; the scaling study of Table~\ref{tab:scalability} was computed
on cluster nodes with \textsc{proxsuite} $0.7.3$, \textsc{qpsolvers}
$4.13.0$ and SciPy $1.13.1$. Point and solve counts agree between the two environments for the same
code; only wall-clock times differ, and no table mixes them.}

Using ProxQP for both methods reduces differences due
to the underlying solver, although their subproblem
sizes, conditioning and recovery operations differ.
We therefore report both solve counts and wall-clock
times. A count of LP and QP calls does not include
the cost of the linear factorisations used in
CLASSO's refinement and certificate stages.\footnote{The
\texttt{genlasso} package \citep{Arnold2016} traces
a dual path for penalties
$\lVert\bm{D}\bm{\beta}\rVert_1$. Encoding the present
hard constraints as large penalty rows would change
the formulation being compared.}

The sweep uses $N=1000$ logarithmically spaced
values over $\lambda\in[10^{-3},10^3]$, with
a multiplicative change of approximately
$\exp(-0.014)$ per step. For context, the
usual grid-based LASSO interfaces in
\texttt{glmnet} and \texttt{scikit-learn} use
approximately $100$ values by default, although
their coordinate-descent solvers differ from
the general constrained QPs used here.

Grid density determines how accurately interpolation
resolves a breakpoint lying between adjacent samples.
On an interval containing no breakpoint, the exact
primal path is affine; an interval containing a
change of direction incurs an interpolation error
depending on the local spacing and slopes.
A meaningful accuracy--cost comparison therefore
requires measuring achieved path error across
several grid densities. The present experiment
instead fixes the grid and reports the resulting
solve counts and timings.

\begin{table*}
\centering
\caption{Local-solve counts and wall-clock timings for CLASSO
path-following versus the $1000$-point $\lambda$-sweep on the diabetes
dataset. `Points' counts recorded path points, including endpoints.
`Solves' is the total number of LP and QP solves. `Ratio' is
the sweep solve count divided by the CLASSO solve count; `Speedup'
is the corresponding wall-clock time
ratio.\label{tab:benchmark_diabetes}}
\smallskip
\CILFITWIDTH{%
\begin{tabular}{lcrrrrcrrcc}
\hline
Constraints & Points & \multicolumn{4}{c}{CLASSO} & &
\multicolumn{2}{c}{Sweep} & Solve & Speedup\\
\cline{3-6}\cline{8-9}
            &       & LPs & QPs & Total & Time (s) & & Solves & Time (s) &
ratio &     \\
\hline
\textsc{none / none}          & 13 &  2 & 13 &  15 & 0.008 & &   1002 & 0.156 &  67$\times$ &  20$\times$\\
\textsc{nonneg / none}        &  6 & 22 & 11 &  33 & 0.041 & &   1002 & 0.177 &  30$\times$ & 4.3$\times$\\
\textsc{nonpos / none}        &  3 & 14 &  7 &  21 & 0.030 & &   1096 & 0.189 &  52$\times$ & 6.4$\times$\\
\textsc{none / sumzero}       & 10 &  2 & 10 &  12 & 0.010 & &   1002 & 0.129 &  84$\times$ &  13$\times$\\
\textsc{nonneg / betadiff}    &  5 & 20 & 10 &  30 & 0.037 & &   1002 & 0.182 &  33$\times$ & 4.9$\times$\\
\textsc{sumnonneg / betadiff} & 10 &  2 & 10 &  12 & 0.011 & &   1002 & 0.155 &  84$\times$ &  15$\times$\\
\hline
Total                & 47 & 62 & 61 & 123 & 0.14  & &   6106 & 0.99  &  50$\times$ & 7.2$\times$\\
\hline
\end{tabular}}
\end{table*}

Table~\ref{tab:benchmark_diabetes} reports the
number of LP and QP calls, including initialisation,
and the total path-computation time.
The sweep counts include two endpoint solves in
addition to the grid.\footnote{For
\textsc{nonpos/none}, $94$ grid values required
a second solver attempt using warm-start
information, giving $1096$ calls instead of
$1002$.}
Across the six cases, CLASSO uses $123$ solves
and the sweep $6106$, a ratio of $50$.
The corresponding times are $0.14$\,s and
$0.99$\,s, giving a speedup of approximately
$7$. Individual solve-count ratios range
from $30$ to $84$, and time ratios from
$4.3$ to $20$. The three configurations with
individual sign bounds account for most of
the CLASSO solves: on them the certificate's
rate and horizon programmes add two LPs per
step, and the certificate-only steps of
Section~\ref{sec:follow} add QPs that do not
appear as recorded points.

The smaller time ratio reflects the different
costs per call and the additional linear algebra
performed by CLASSO. Its dense factorisations,
rank and conditioning checks, and lexicographic
initialisation contribute work that is not
represented by the LP/QP call count. The
sweep repeatedly solves a QP with the same
structure and a changed linear term.

These figures apply to the specified
$1000$-point baseline. A coarser grid would
cost less, while a finer grid would resolve
breakpoints more closely. We do not infer a
general speedup at matched path accuracy,
or relative to other parametric-QP methods,
from this experiment. CLASSO additionally
returns the segment boundaries directly,
which can be useful when their locations
are required.

\begin{figure*}
\centering
\includegraphics[width=\textwidth]{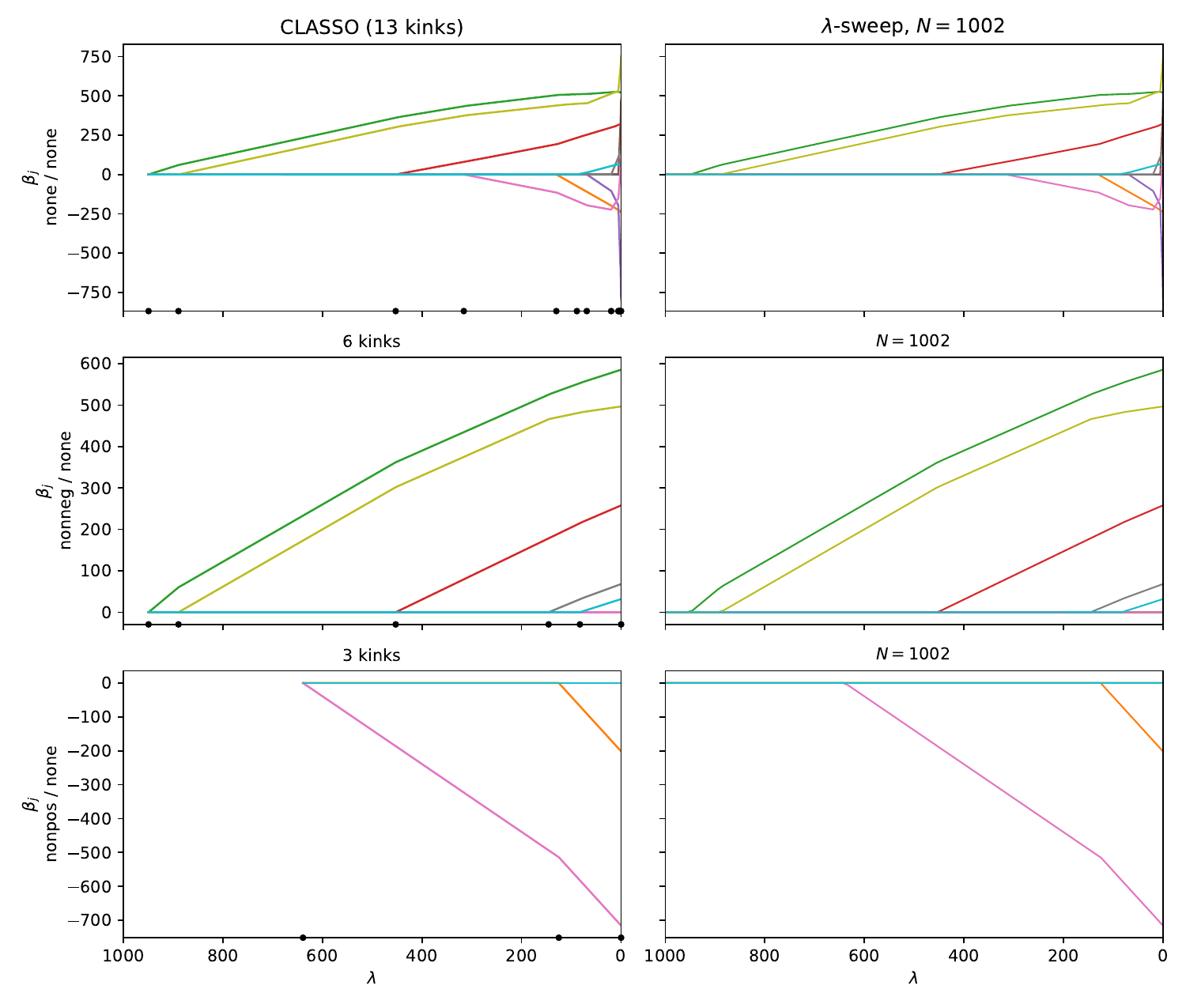}
\caption{Coefficient paths for the three diabetes configurations
without equality constraints (rows, top to bottom:
\textsc{none}/\textsc{none}, \textsc{nonneg}/\textsc{none},
\textsc{nonpos}/\textsc{none}). Left column: CLASSO piecewise-linear
path; black markers on the $\lambda$-axis indicate the kink
locations returned by the path-following algorithm (some points
overlap closely, so are difficult to see). Right column:
$1000$-point $\lambda$-sweep baseline. Line colours identify the ten
diabetes coefficients consistently across panels. The $\lambda$ axis
is linear with $\lambda$ decreasing to the right.\label{fig:diabetes_eqnone}}
\end{figure*}

\begin{figure*}
\centering
\includegraphics[width=\textwidth]{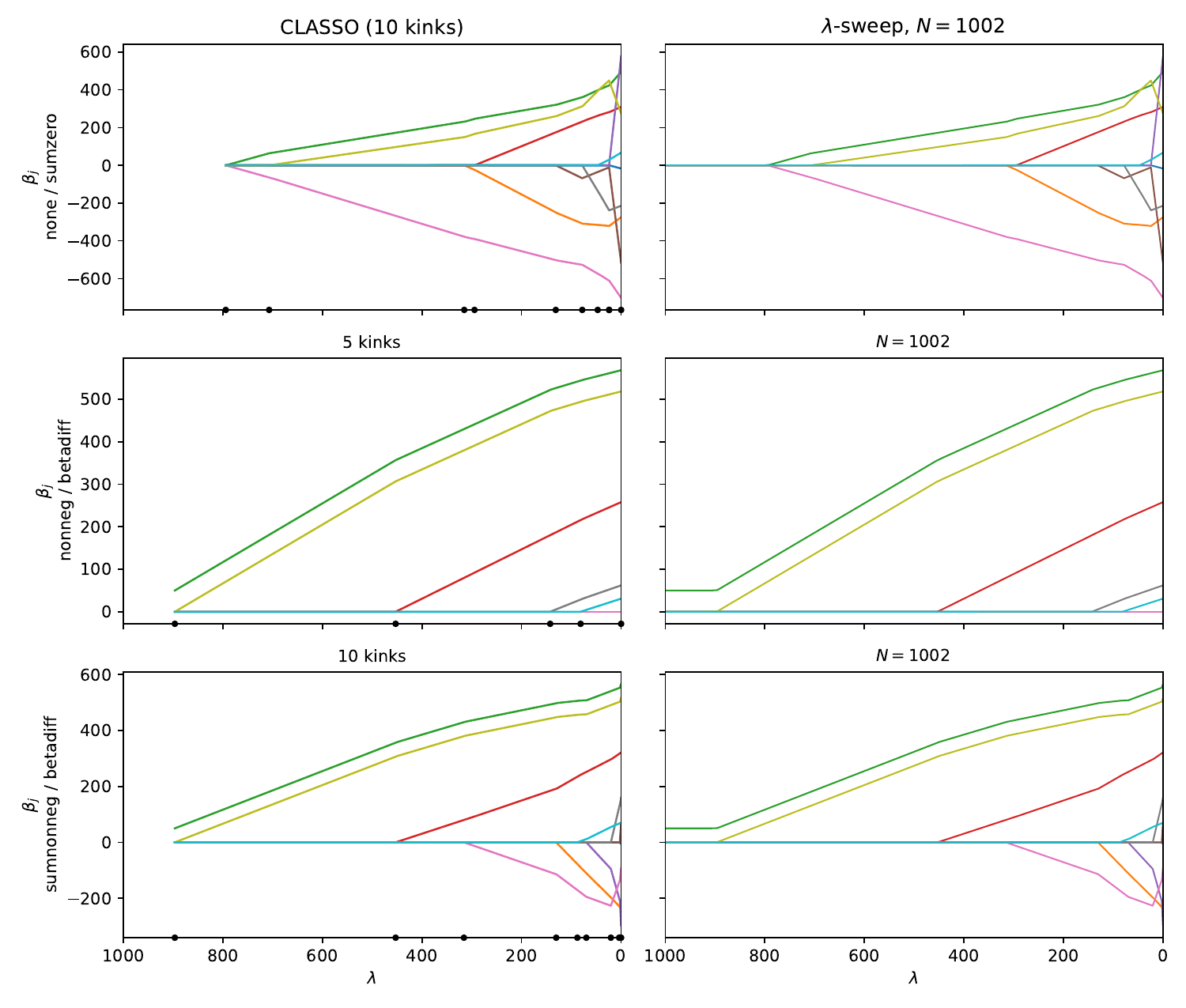}
\caption{Coefficient paths for the three diabetes configurations
with an equality constraint (rows, top to bottom:
\textsc{none}/\textsc{sumzero}, \textsc{nonneg}/\textsc{betadiff},
\textsc{sumnonneg}/\textsc{betadiff}). Layout as in
Figure~\ref{fig:diabetes_eqnone}.\label{fig:diabetes_eqcon}}
\end{figure*}

Figures~\ref{fig:diabetes_eqnone}
and~\ref{fig:diabetes_eqcon} compare the coefficient
paths for all six configurations. The two
methods are visually indistinguishable at the
figure resolution; the numerical residual
tests above provide the more stringent
assessment of agreement.

For \textsc{none/none}, the CLASSO output
agrees with the LASSO path returned by
\texttt{scikit-learn}'s \texttt{lars\_path},
using \texttt{method='lasso'}. Both return
$13$ recorded points, including the endpoints,
and the $\lambda$ values agree to the six
decimal places reported. The final coefficient
vectors differ by $1.6\times10^{-11}$ in
$\lVert\cdot\rVert_\infty$; the CLASSO endpoint
differs from ordinary least squares by
$3.8\times10^{-12}$.

The event near $\lambda=2.18$ includes the
entry of $\beta_0$ and departure of $\beta_6$.
The latter subsequently re-enters with the
opposite sign at $\lambda=1.310441$.
In CLASSO this is represented by \textbf{E5},
while \texttt{lars\_path} records an ordinary
entry. The comparison therefore tests the event
representation as well as the primal direction.

The constrained paths illustrate the expected
changes in coefficient behaviour. Individual
sign bounds remove branches with the forbidden
sign, while a sum-to-zero equality couples
positive and negative contributions. The
non-homogeneous equality in
\textsc{betadiff} maintains the prescribed
offset between its two coefficients.
These examples demonstrate the range of
constraint interactions accommodated by the
directional formulation.

\subsection{Comparison with the linear-system approach of Gaines et al.}\label{sec:gaines_comparison}

We finally compare the local direction calculation
with the linear system of \citet{Gaines2018}.
This is an examination of the active/binding
system at points on the computed path, rather
than an end-to-end comparison with an
independent implementation of their algorithm.
The matrix is
\begin{equation}
\bm{G}
=
\left(\begin{array}{cc}
\bm{H}_{aa} & \bm{A}_{za}^{\rm t} \\
\bm{A}_{za} & \bm{0}
\end{array}\right).
\label{eq:gainesmatrix}
\end{equation}

Proposition~\ref{prop:gaines_invertibility}
relates invertibility to the rank of
$\bm{A}_{za}$. Table~\ref{tab:gaines_comparison}
reports the points at which the matrix is
singular and separates the mutually exclusive
cases $|z|>|a|$ and row dependence with
$|z|\le|a|$. The boundary-set column records
a separate issue: coordinates at a subgradient
boundary may need admission to the free block,
but their presence alone is not a singularity
criterion.

A singular matrix need not make the equations
inconsistent. We therefore repeated the local
calculation using a least-squares solve and
the augmented free block $\alpha=a\cup b$,
as implemented in
\texttt{reproduction/gaines\_rank\_aware.py}.
The published implementation of
\citet{Gaines2018}, \texttt{ConstrainedLasso.jl},
likewise falls back to a pseudo-inverse, the
minimum-norm least-squares solution, when its
direction system is singular, so this is the
calculation it performs in that case.
For all six configurations, the resulting
systems were consistent to rounding precision.
Where the binding block did not pin a
candidate entering coordinate, the computed
direction agreed with the traversed segment
to relative differences between $10^{-15}$
and $10^{-9}$.

On the three sign-constrained configurations,
however, an entering coordinate's binding
sign row remains in $z$. Imposing that row
as an equality forces its direction to zero,
even after the free block is augmented.
The relative direction error is then
$0.69$--$1.0$. Thus a rank-aware solve
resolves consistency of the linear equations
but does not determine which weakly active
rows should be released. The directional QP
makes this decision through its inequality
constraints.

There is a separate choice concerning which
boundary coordinates move. On
\textsc{none/none} near $\lambda=2.18$,
the directional QP holds $\beta_6$ at zero,
whereas admitting every boundary coordinate
to an unrestricted free block gives a
relative direction error of $2.5$.
This is an ablation of the admission rule:
\citet{Gaines2018} uses a conditional test,
so the discrepancy should not be attributed
to their complete algorithm.

Finally, singularity leaves multiplier rates
undetermined even when the primal direction
is unique. For a positive-definite primal
block, the nullspace has the form
$\{\bm{0}\}\times\ker\bm{A}_{z\alpha}^{\rm t}$.
In the tested singular systems its dimension
is between $5$ and $9$. Changing the multiplier
rates within this freedom can change an
inactive subgradient rate, by up to $1$
in the reported normalisation, and hence
change the predicted certificate event.
The primal path itself remains unique.
The multiplier rates therefore need a
selection rule beyond the linear system;
CLASSO uses the lexicographic rate of
Section~\ref{sec:follow}.

\begin{table}
\centering
\caption{Comparison of local direction computation on the diabetes
dataset ($n=442$, $p=10$) under six constraint configurations.
Each configuration is specified as inequality\,/\,equality, where the
inequality options are: \textsc{none} (unconstrained), \textsc{nonneg}
($\bm{\beta}\ge\bm{0}$), \textsc{nonpos} ($\bm{\beta}\le\bm{0}$), and
\textsc{sumnonneg} ($\bm{1}^{\rm t}\bm{\beta}\ge0$); and the equality options
are: \textsc{none}, \textsc{sumzero} ($\bm{1}^{\rm t}\bm{\beta}=0$), and \textsc{betadiff}
($\beta_3-\beta_9=50$). `Points' counts certified path points with $\lambda>0$, excluding the final endpoint. `Gaines
singular' is the number of them at which the matrix $\bm{G}$ in
(\ref{eq:gainesmatrix}) is singular. Singularity mechanisms:
\textit{(i)}~$|z|>|a|$; \textit{(ii)}~dependent rows in
$\bm{A}_{za}$ with $|z|\le|a|$. Column \textit{(iii)} counts kinks
with boundary variables present (requiring the Gaines workaround but
not itself causing singularity).\label{tab:gaines_comparison}}
\smallskip
\CILFITWIDTH{%
\begin{tabular}{lccccc}
\hline
Constraints & Points & Gaines & \multicolumn{2}{c}{Singularity} & Boundary \\
            &       & singular  & \textit{(i)} & \textit{(ii)} & \textit{(iii)} \\
\hline
\textsc{none / none}           & 12 &  0 &  0 &  0 & 12 \\
\textsc{nonneg / none}         &  5 &  5 &  5 &  0 &  5 \\
\textsc{nonpos / none}         &  2 &  2 &  2 &  0 &  2 \\
\textsc{none / sumzero}        &  9 &  1 &  1 &  0 &  9 \\
\textsc{nonneg / betadiff}     &  4 &  4 &  4 &  0 &  4 \\
\textsc{sumnonneg / betadiff}  &  9 &  0 &  0 &  0 &  9 \\
\hline
\end{tabular}}
\end{table}

For \textsc{none/none} and
\textsc{sumnonneg/betadiff}, the matrix is
nonsingular at every tested point. The
\textsc{none/sumzero} system is singular
only at initialisation, where the active
support is empty. Individual sign constraints
give the more persistent failure: a binding
row is supported on an inactive coordinate,
so its restriction to the active columns is
zero. For these data, fewer than half the
coefficients are active at the tested
sign-constrained points, and the failures
fall into case \textit{(i)}.

For example, at kink~1 of
\textsc{nonneg/none}, $|a|=1$ and $|z|=9$.
The matrix $\bm{G}$ is $10\times10$,
but $\bm{A}_{za}=\bm{0}$ and its nullity
is $9$, as given by
Proposition~\ref{prop:gaines_invertibility}.
The equations remain consistent; the
difficulty is that their multipliers are
not determined uniquely and their binding
block does not decide releases. CLASSO
instead uses $\alpha=\{2,8\}$ and a
two-dimensional directional QP with a
single sign constraint. The same zero-row
structure persists at later points, with
nullity $p-|a|$.

These examples show that the rank issue
occurs in ordinary sign-constrained
regression, rather than only in specially
constructed degenerate problems. The
directional QP supplies a unique local
primal direction without requiring those
binding rows to be independent, including
at the origin, where the same QP over the
full cone gives the initial segment.

\subsection{Usage notes}\label{sec:usage_notes}

Columns of $\bm{X}$ are usually centred and
scaled before fitting, so that the common
$L_1$ penalty has a comparable interpretation
across coefficients. Any constraints expressed
in the original coefficient units must be
transformed consistently. The diabetes examples
use the standardised design described above.

Once a path has been computed successfully,
coefficients at an intermediate $\lambda$
are obtained by affine interpolation.
Held-out losses can then be evaluated without
another optimisation solve. Cross-validation
requires a separate training path for each
fold, as in Section~\ref{sec:cv}; the
full-data path is used only after selecting
$\lambda$. Information criteria also require
the appropriate degrees-of-freedom expression,
which is not developed here.

The active sets and breakpoint information
may be useful in subsequent inference, but
an exact path does not by itself provide
valid post-selection uncertainty estimates.
Any bootstrap or selective-inference analysis
must account for the fitting and selection
procedure being used. We leave that
statistical development to future work.

\section{Conclusions}\label{sec:conc}

We have presented a directional-QP approach to the
LASSO with general linear equality and inequality
constraints. The QP determines which boundary
coefficients move and which weakly active rows
remain binding, and has a unique primal solution
when the Hessian is positive definite, irrespective
of dependence among the binding rows. Its solution
gives the local direction of the constrained-LASSO
path. The construction recovers the LASSO-modified
LARS direction in the unconstrained setting and,
over the full cone, gives the segment leaving the
origin.

The comparison with the linear-system construction
of \citet{Gaines2018} shows why rank-aware linear
algebra alone does not settle the local update.
A singular binding system may remain consistent,
but imposing all its rows as equalities can
prevent an entering coefficient from moving.
The directional QP resolves the required
release decisions before the continuous
quantities are recovered by linear solves.

For numerical implementation, we refine the
QP's proposed state using directional KKT
sign tests and recompute a primal--dual
certificate at each recorded point.
Following the canonical certificate fixes the
division between subgradients and bound
multipliers that one-sided bounds leave free,
so every recorded interior point on the
synthetic suite is a change of primal direction;
coordinates held by their subgradients alone
require both entry margins to be monitored.

On the synthetic suite, the signed relative
objective differences from reference solves lie between
$-9\times10^{-15}$ and $3.9\times10^{-10}$ at the sampled
$\lambda$ for the problems with $n\ge p$; at $\lambda=0$ the signed
unnormalised differences lie between $-1.5\times10^{-7}$ and $2.0\times10^{-10}$. For $n<p$,
the algorithm solves a ridge-augmented
problem, with matched differences ranging from
$-3.6\times10^{-7}$ to $2\times10^{-9}$ over the five
prescribed ridge strengths. Across the synthetic and diabetes benchmarks,
checks of all $580$ recorded segments found no missed interior
event and all checked KKT residuals within the specified
tolerances.
The differences from unregularised reference solutions are
reported separately and can be substantially larger than
the matched numerical differences.

On the six diabetes configurations, CLASSO
uses $30$--$84$ times fewer LP/QP calls
than the specified $1000$-point sweep and is
$4$--$20$ times faster.
These ratios refer to that baseline, rather
than to a comparison at matched path accuracy.
For the four synthetic scaling configurations,
recorded point counts grow approximately
linearly with $p$ up to $5000$.
Dense factorisations remain the principal
computational limitation; updating them and
warm-starting the directional QP may reduce
the cost.

We prove that the primal path is continuous
and piecewise affine with finitely many
kinks, assuming only a positive-definite
Hessian and a nonempty feasible set.
The local continuation and safe-hold results
likewise require no constraint qualification.
Finiteness of the path does not make a
traversal finite: a policy that recovers a
fresh minimum-norm certificate at each kink
need not terminate, as machine-checked
counterexamples show. Following the canonical
certificate along its lexicographic rate does
terminate, within an explicit bound, and
traverses the whole path in exact arithmetic.
The floating-point implementation of this rule
agrees with an exact rational implementation on
all $210$ small test problems; at the end of every step its
propagated certificate agrees with the
recomputed canonical one, and on the reported
benchmarks it never substitutes a fallback for
the rule's own solves. These checks are
evidence, not proof, that the implementation
realises the exact rule.

A further extension is the unregularised
rank-deficient problem. Ridge augmentation
restores strict convexity but changes the
target, so a path-following method that
selects directly among the unregularised
minimisers would be preferable when that
change is unacceptable.
\citet{Schmelzer2026} discusses a
minimum-$L_2$-norm selection at each
$\lambda$; tracing that selection is a
separate problem.

We have also not developed degrees of
freedom or post-selection inference.
The constrained-fit results of
\citet{Hu2015} and the budget-indexed
analysis of \citet{Schmelzer2026} provide
related starting points, but penalty and
budget indexing must be distinguished
because their relation depends on the
response.

\section*{Acknowledgments}

We thank Will Handley for some useful early discussions of the
constrained LASSO problem.

The authors used Claude (Anthropic; Fable 5.1 and Opus 5) and ChatGPT
(OpenAI;
GPT-6 Astra) for checking analytical calculations, writing code for
numerical computations, literature search, and drafting suggestions;
all results were independently verified, and the authors take full
responsibility for the content.

\section*{Data availability}

The diabetes data are publicly available and were used
in the standardised form distributed with
\texttt{scikit-learn}. The remaining data are generated
by seeded code. The released implementation includes
the stress-suite generators, scaling scripts and
machine-readable results underlying
Tables~\ref{tab:stress_path},
\ref{tab:scalability} and~\ref{tab:nondegeneracy}.
The scaling records specify the machine, core count,
node exclusivity and job identifier for each run.
The Lean sources, pinned toolchain and dependency
manifest accompany the same release.%
\REPOFOOTNOTE

%%%%%%%%%%%%%%%%%%% BIBLIOGRAPHY %%%%%%%%%%%%%%%%%%%

\CILBIBLIOGRAPHY

\appendix

\section{Finite-precision mechanisms in constrained path-following}\label{sec:safeguards}

This appendix summarises numerical failure mechanisms that a
constrained path-following implementation must guard against,
and how this implementation does so. The examples illustrate
why tolerances for solves, state classification and certificate
acceptance must be chosen consistently.
Their occurrence is implementation-dependent; the
observed remedies do not constitute a general numerical
stability or termination proof.

\begin{enumerate}
\item \textit{Missing multiplier rates.}
A rate solve can fail when its inputs inherit
error from the directional solve. Continuing with
$\dot{\bm{\mu}}=\bm{0}$ would give incorrect
subgradient rates and event times. The
implementation substitutes a stated fallback,
reports it, and checks the resulting certificate
(Section~\ref{sec:follow}).

\item \textit{Dual recovery at a perturbed point.}
At an event, rounding can place an inactive
subgradient slightly outside its bound, so a
certificate problem with the propagated
coefficients fixed may be infeasible. The
fixed-state solve (\ref{eq:kink_kkt}) corrects
the primal point first, and the canonical
certificate's bounds are relaxed by the
acceptance tolerance; the bounds are then
checked explicitly.

\item \textit{Accumulated propagation error.}
A direction computed to $10^{-5}$ and
propagated without refinement accumulates
error over successive segments. In a
condition-number-$10^3$ example, the
active residual reached $3.5\times10^{-4}$
by the sixteenth point, and division by
$\lambda=2.5\times10^{-3}$ amplified
this to a subgradient error of $0.14$,
causing a spurious entry. The fixed-state
solve reduces this drift, while certificate
checks use the scaled quantity
$\lambda\bm{g}$.

\item \textit{A held-coordinate crossing is omitted.}
A coordinate held in $b$ may reach the
opposite subgradient boundary. A scan
restricted to $\bar\alpha$ omits this
event and can propagate through a genuine
primal kink. Event \textbf{E5} tests
the second margin explicitly.

\item \textit{A release is misclassified.}
A component of order $10^{-4}$ within a
direction of norm $10^2$ can be assigned
incorrectly by a simple hold/release
threshold. The KKT sign refinement in
Section~\ref{sec:release_refinement}
tests the proposed state and recomputes
its direction after changing an
inconsistent decision.

\item \textit{One-sided bounds are not represented consistently.}
The multiplier of a bound on a zero
coefficient is absent from the
active-support equations, so the split between
it and the subgradient is not determined. Left
free, it produces events at which the path does
not bend; set to zero, a forbidden-side excess
appears to require entry. The canonical
certificate fixes the split
(Section~\ref{sec:one_sided}).

\item \textit{Different certificates give different boundary membership.}
With dependent rows, one valid certificate
may give $|g_j|=1$ and another
$|g_j|=1-10^{-5}$, so an entering coordinate
can fail the numerical boundary test and its
event recur. The implementation retains the
entering coordinate in the candidate boundary
set and then verifies the resulting state.

\item \label{mech:threshold}\textit{A classification threshold
is finer than the available accuracy.}
A boundary threshold below the error in
$\bm{g}$ makes state selection sensitive
to the solver output. At
$\epsilon_{\rm b}=10^{-10}$, two releases
of the same solver gave different paths
on \textsc{none/sumzero}, whose initial
point has two coordinates exactly on the
boundary (Section~\ref{sec:nondegeneracy})
that a computed certificate separated by
$1.95\times10^{-10}$; one path differed by
$32$ per cent in objective value. The adopted threshold
is $\epsilon_{\rm b}=10^{-8}$.
A smaller nominal tolerance therefore
need not give a more accurate path.
\end{enumerate}

The implementation also retains several recovery
measures: minimum-norm choices for non-unique duals;
removal of redundant all-zero inequality rows;
a ridge of $10^{-8}\max_iH_{ii}$ in the reduced
directional QP, whose effect is removed by a
successful fixed-state refinement; a full-space
directional fallback if the nullspace calculation
fails; and the fixed-state certificate when no
canonical certificate can be computed, reported as
in Section~\ref{sec:follow}.
These measures are distinct from the
full-problem ridge used for $n<p$.
The returned traversal status distinguishes
a run reaching $\lambda_{\min}$ from
one stopped by the iteration cap or a
vanishing step. It is reported separately from the
count of points that failed certification, since a run
can reach $\lambda_{\min}$ with certification failures
recorded along the way.

\section{Proof of the local continuation proposition}\label{sec:branch_uniqueness}

\noindent

We prove local continuation by constructing a
certificate along the candidate segment, then
use uniqueness to establish the safe-hold
criterion. The only multiplier-existence
argument required is the finite-dimensional
Farkas result for the polyhedral directional
cone.

\smallskip\noindent\textit{Step 1: the directional multipliers exist.}
Let $\dot{\bm{\beta}}$ minimise the directional problem over the cone
of (\ref{eq:dotprobdef_constraints}). Collect that cone as a finite
row system: equality rows $\bm{A}_{e:}$; the rows
$\mu_i\bm{A}_{i:}$, which impose zero velocity on strongly active
inequalities; unit rows $\bm{e}_j^{\rm t}$ for inactive coordinates
with $|g_j|<1$; together with the inequality rows $-\bm{A}_{i:}$ for
binding constraints and $g_j\bm{e}_j^{\rm t}$ for $\beta_j=0$. At a
minimiser of a quadratic over a polyhedral cone every feasible
displacement has nonnegative directional derivative, so by a finite
Farkas argument the negative gradient
$-(\bm{H}\dot{\bm{\beta}}+\bm{g})$ lies in the cone generated by the
equality rows with free sign and the \emph{active} inequality rows
with nonnegative sign. Reading off those multipliers as
$(\dot{\bm{\nu}},\dot{\bm{\mu}},\bm{q})$ gives
\[
\bm{H}\dot{\bm{\beta}}+\bm{q}
+\bm{A}_{e:}^{\rm t}\dot{\bm{\nu}}
+\bm{A}_{i:}^{\rm t}\dot{\bm{\mu}}=\bm{0},
\]
with $\dot\mu_i=0$ on inactive rows, $\dot\mu_i\le0$ where $\mu_i=0$,
$q_j=g_j$ where $\beta_j\ne0$, and $g_jq_j\ge1$ on the boundary set,
together with the two complementarity relations
$\dot\mu_i\,(\bm{A}_{i:}\dot{\bm{\beta}})=0$ on every inequality row
and $\dot\beta_j\,(q_j-g_j)=0$ at every zero coordinate, which are
what preserve complementarity under the propagation of Step~2. Here
$\bm{q}$ is the rate of $\lambda\bm{g}$, not of $\bm{g}$. Only
symmetry of $\bm{H}$ is used here, not definiteness.

\smallskip\noindent\textit{Step 2: propagate.} For $\Delta\ge0$ set
\[
\bm{\beta}(\Delta)=\bm{\beta}-\Delta\dot{\bm{\beta}},\quad
\bm{\mu}(\Delta)=\bm{\mu}-\Delta\dot{\bm{\mu}},\quad
\bm{\nu}(\Delta)=\bm{\nu}-\Delta\dot{\bm{\nu}},
\]
\[
\bm{g}(\Delta)=\frac{\lambda\bm{g}-\Delta\bm{q}}{\lambda-\Delta}.
\]
Subtracting $\Delta$ times the directional stationarity of Step 1 from
the stationarity at the kink gives stationarity at $\lambda-\Delta$
exactly; the division defining $\bm{g}(\Delta)$ is safe because
$\Delta<\varepsilon\le\lambda$. The feasibility and sign conditions hold initially and their
rates satisfy the one-sided conditions obtained in Step~1,
so they persist on a sufficiently short interval.
Complementarity is preserved by the corresponding
zero-factor and directional complementarity relations.
The same relations ensure
$g_j(\Delta)\beta_j(\Delta)=|\beta_j(\Delta)|$.
Thus each certificate condition holds on an interval
$[0,\varepsilon_r)$ with $\varepsilon_r>0$. There are finitely many
such conditions; take $\varepsilon$ to be the least of the
$\varepsilon_r$ and $\lambda$.

\smallskip\noindent\textit{Step 3: conclude.} On $[0,\varepsilon)$ the
propagated quantities form a full KKT certificate for
(\ref{eq:CLASSOprobdef}) at $\lambda-\Delta$. The problem is convex,
so the conditions are sufficient; with $\bm{H}\succ\bm{0}$ the
objective is strictly convex and the minimiser is unique. Hence
$\bm{\beta}(\Delta)$ is the unique global minimiser, which is
Proposition~\ref{prop:local_continuation}. If two certificates at the
same $(\bm{\beta},\lambda)$ gave different directions, each would
generate such a segment and uniqueness of the minimiser at
$\lambda-\Delta$ would force the segments, hence the directions, to
agree.

\smallskip\noindent\textit{Step 4: safe holds.} Appending
$\dot{\bm{\beta}}_{\hat b}=\bm{0}$ shrinks the feasible cone. If
$\dot\beta_j=0$ for all $j\in\hat b$ then $\dot{\bm{\beta}}$ remains
feasible for the restricted problem and still minimises the same
objective over a smaller set, so it remains a minimiser; strict
convexity makes it the only one. Conversely if $\dot\beta_j\ne0$ for
some $j\in\hat b$ then $\dot{\bm{\beta}}$ is infeasible for the
restricted problem, whose minimiser is therefore a different point and
whose segment is, by Step 3, not the path. Finally, if $\beta_j=0$ and $\dot\beta_j=0$, then
$\beta_j(\Delta)=0$ throughout the segment, proving
Proposition~\ref{prop:safe_hold}. The example following
that proposition illustrates the objective error caused
by imposing an inadmissible hold.

\smallskip\noindent
These arguments, together with the global results of
Section~\ref{sec:theory}, have been formalised and machine-checked in
Lean~4 against Mathlib; the development accompanies this paper and
reports no admitted proofs.

\end{document}